\definecolor{light-gray}{gray}{0.6} 
\newcommand{\scP}{\mathscr{P}}
\newcommand{\scZ}{\mathscr{Z}}
\newcommand{\reals}{\mathbb{R}}
\newcommand{\mc}{\mathcal}
\newcommand{\m}{\mathbf}
\newcommand{\bs}{\boldsymbol}
\newcommand{\lt}{\left}
\newcommand{\rt}{\right}
\newcommand{\wt}{\widetilde}
\newcommand{\argmin}{\operatornamewithlimits{argmin}}
\newcommand{\bbeta}{\boldsymbol{\beta}}
\newcommand{\hbeta}{\widehat \bbeta}
\newcommand{\tbeta}{\widetilde \bbeta}
\newcommand{\splus}{\scriptscriptstyle{+}}
\newcommand*{\tran}{^{\mkern-1.5mu\mathsf{T}}}
\renewcommand{\top}{{\tran}}
\DeclareMathOperator*{\minimize}{minimize\,}
\DeclareMathOperator*{\subjectto}{subject\,\, to\,}
\DeclareMathOperator*{\diag}{Diag\,}
\def\boxit#1{\vbox{\hrule\hbox{\vrule\kern6pt
          \vbox{\kern6pt#1\kern6pt}\kern6pt\vrule}\hrule}}
\newcommand{\sbmc}[1]{{\color{cyan}#1}}
\def\boxit#1{\vbox{\hrule\hbox{\vrule\kern6pt
          \vbox{\kern6pt#1\kern6pt}\kern6pt\vrule}\hrule}}
\newtheorem{theorem}{Theorem}
\newtheorem{proposition}{Proposition}
\newtheorem{remark}{Remark}
\newtheorem{corollary}{Corollary}
\newtheorem{lemma}{Lemma}
\title{COMBSS: Best Subset Selection via Continuous Optimization}
\author{Sarat Moka${}^{1,2,*}$, Benoit Liquet${}^2$, Houying Zhu${}^2$, and Samuel Muller${}^{2,3}$\\
\\ \small
${}^1$ School of Mathematics and Statistics, University of New South Wales, Sydney, Australia\\ \small
${}^2$ School of Mathematical and Physical Sciences, Macquarie University, Sydney, Australia\\ \small
${}^3$ School of Mathematics and Statistics, University of Sydney, Sydney, Australia\\ \small
${}^*$ Corresponding author: s.moka@unsw.edu.au}
\date{\today}
\begin{document}
\maketitle

\begin{abstract}
The problem of best subset selection in linear regression is considered with the aim to find a fixed size subset of features that best fits the response. This is particularly  challenging  when the total available number of features is very large compared to the number of data samples. Existing optimal methods for solving this problem tend to be slow while fast methods tend to have low accuracy. Ideally, new methods perform best subset selection faster than existing optimal methods but with comparable accuracy, or, being more accurate than methods of comparable computational speed. Here, we propose a novel continuous optimization method that identifies a subset solution path, a small set of models of varying size, that consists of candidates for the single best subset of features, that is optimal in a specific sense in linear regression. Our method turns out to be fast, making the best subset selection possible when the number of features is well in excess of thousands. Because of the outstanding overall performance, framing the best subset selection challenge as a continuous optimization problem opens new research directions for feature extraction for a large variety of regression models.
\end{abstract}

{\bf Keywords:} Linear regression, High-dimensional regression,  Model selection, Variable selection

\section{Introduction}
\label{sec: intro}
Recent developments in information technology have enabled the collection of
high-dimensional complex data in engineering, economics, finance, biology, health sciences and other fields \citep{FanLi2006}. In high-dimensional data, the number of features is large and often far higher than the number of collected data samples. In many applications, it is desirable to find a parsimonious best subset of predictors so that the resulting model has  desirable prediction accuracy \citep{MullerWelsh2010,FanLv2010, Miller2002}.
This article is recasting the challenge of best subset selection in linear regression as a novel continuous optimization problem. We show that this reframing has enormous potential and substantially advances research into larger dimensional and exhaustive feature selection in regression, making available technology that can reliably and exhaustively select variables when the total number of variables is well in excess of thousands.

Here, we aim to develop a method that performs best subset selection and an approach that is faster than existing exhaustive methods while having comparable accuracy, or, that is more accurate than other methods of comparable computational speed.

Consider the linear regression model of the form
${\m y  =   X \bbeta + \bs \epsilon,}$
where $\m y = (y_1, \dots, y_n)^\top$
is an {$n$-dimensional} known response vector, $X$ is a known design matrix of dimension $n\times p$ with $x_{i, j}$ indicating the $i$th observation of the $j$th explanatory variable, $\bbeta = (\beta_1, \dots, \beta_p)^\top$ is the $p$-dimensional vector of unknown regression coefficients, and $\bs \epsilon = (\epsilon_1, \dots, \epsilon_n)^{\top}$ is a vector of unknown errors, unless otherwise specified, assumed to be independent and identically distributed.
Best subset selection is a classical problem that aims to first find a so-called best subset solution path \citep[e.g.\ see ][]{MullerWelsh2010, hui2017joint} by solving,
\begin{align}
\label{eqn:bss}
\begin{aligned}
&\minimize_{\bbeta \in \reals^p} \,\, \frac{1}{n}\| \m y - X \bbeta \|_2^2 \quad\\
&\subjectto  \|\bbeta\|_0  = k,
\end{aligned}
\end{align}
for a given $k$, where $\| \cdot \|_2$ is the $\mc{L}_2$-norm, $\|\bbeta\|_0 = \sum_{j = 1}^p { \mathbb{I}}(\beta_j \neq 0)$ is the number of non-zero elements in $\bbeta$, and ${ \mathbb{I}}(\cdot)$ is the indicator function, and the best subset solution path is the collection of the best subsets as $k$ varies from 1 to $p$. For ease of presentation, we assume that all columns of $X$ are subject to selection, but generalizations are immediate (see Remark~\ref{rem:intercept} for more details).

Exact methods for solving \eqref{eqn:bss} are typically executed by first writing solutions for low-dimensional problems and then selecting the best solution over these.
To see this, for any binary vector $\m s = (s_1, \dots, s_p)^\top \in \{0,1 \}^p$, let
$X_{[\m{s}]}$ be the matrix of size $n \times \lvert \m{s} \rvert$
created by keeping only columns $j$ of $X$ for which $s_{j} = 1$, where $j =1,\ldots,p$. Then, for any $k$, in the exact best subset selection, an optimal $\m{s}$ can be found by solving the problem,
\begin{align}
\label{eqn:dbss}
\begin{aligned}
&\minimize_{{\m s} \in \{0,1\}^{p}} \,\, \frac{1}{n}\| \m y - {X}_{[\m s]} \hbeta_{[\m s]} \|_2^2 \quad\\
&\subjectto \lvert {\m s} \rvert  = k,
\end{aligned}
\end{align}
where $\hbeta_{[\m{s}]}$ is a low-dimensional least squares estimate of elements of $\bbeta$ with indices corresponding to non-zero elements of $\m s$, given by
\begin{align}
\label{eqn:hbeta}
\hbeta_{[\m s]} = (X_{[\m s]}^\top {X}_{[\m s]})^{\dagger} {X}_{[\m s]}^\top \m y,
\end{align}
where $A^{\dagger}$ denotes the pseudo-inverse of a matrix $A$.
Both \eqref{eqn:bss} and \eqref{eqn:dbss} are essentially solving the same problem, because $\hbeta_{[\m s]}$ is the least squares solution when constrained so that ${\mathbb{I}}(\beta_{j} \neq 0) = s_j$ for all $j = 1, \dots, p$.

It is well-known that solving the exact optimization problem  \eqref{eqn:bss} is in general non-deterministic polynomial-time hard  \citep{Natarajan1995}. For instance, a popular exact method called {\em leaps-and-bounds} \citep{leaps2000} is currently practically useful only for values of $p$ smaller than $30$ \citep{TarrMullerWelsh2018}. To overcome this difficulty, the relatively recent method by \cite{BKM16} elegantly reformulates the best subset selection problem \eqref{eqn:bss} as a mixed integer optimization and demonstrates that the problem can be solved for $p$ much larger than $30$ using modern mixed integer optimization solvers such as in the commercial software Gurobi \citep{Gurobi}
(which is not freely available except for an initial short period). As the name suggests, the formulation of mixed integer optimization has both continuous and discrete constraints. Although, the mixed integer optimization approach is faster than the exact methods for large $p$, its implementation via  Gurobi remains slow from a practical point of view \citep{hazimeh2020fast}.

Due to computational constraints of mixed integer optimization, other popular existing methods for best subset selection are still very common in practice, these include forward stepwise selection, the {\em least absolute shrinkage and selection operator} (generally known as the Lasso), and their variants.
Forward stepwise selection follows a greedy approach, starting with an empty model (or intercept-only model), and iteratively adding the variable that is most suitable for inclusion \citep{Efroymson1966, HockingLeslie1967}. On the other hand, the Lasso \citep{Tibshirani1996} solves a convex relaxation of the highly non-convex best subset selection problem by replacing the discrete $\mc{L}_0$-norm $\|\bbeta\|_0$ in \eqref{eqn:bss} with the $\mc{L}_1$-norm $\|\bbeta\|_1$. This clever relaxation makes the Lasso fast, significantly faster than mixed-integer optimization solvers. However, it is important to note that Lasso solutions typically do not yield the best subset solution \citep{hazimeh2020fast, ZhuWen2020} and in essence solve a different problem than exhaustive best subset selection approaches. In summary, there exists a trade-off between speed and accuracy when selecting an existing best subset selection method.

With the aim to develop a method that performs best subset selection as fast as the existing fast methods without compromising the accuracy, in this paper, we design COMBSS, a novel {\em continuous optimization method towards best subset selection}.
\begin{figure}[h!]
  \begin{subfigure}{0.5\linewidth}
    \centering
    \includegraphics[width=1.15\linewidth, trim=2.3cm 1.5cm -0.3cm 2.9cm, clip=true]{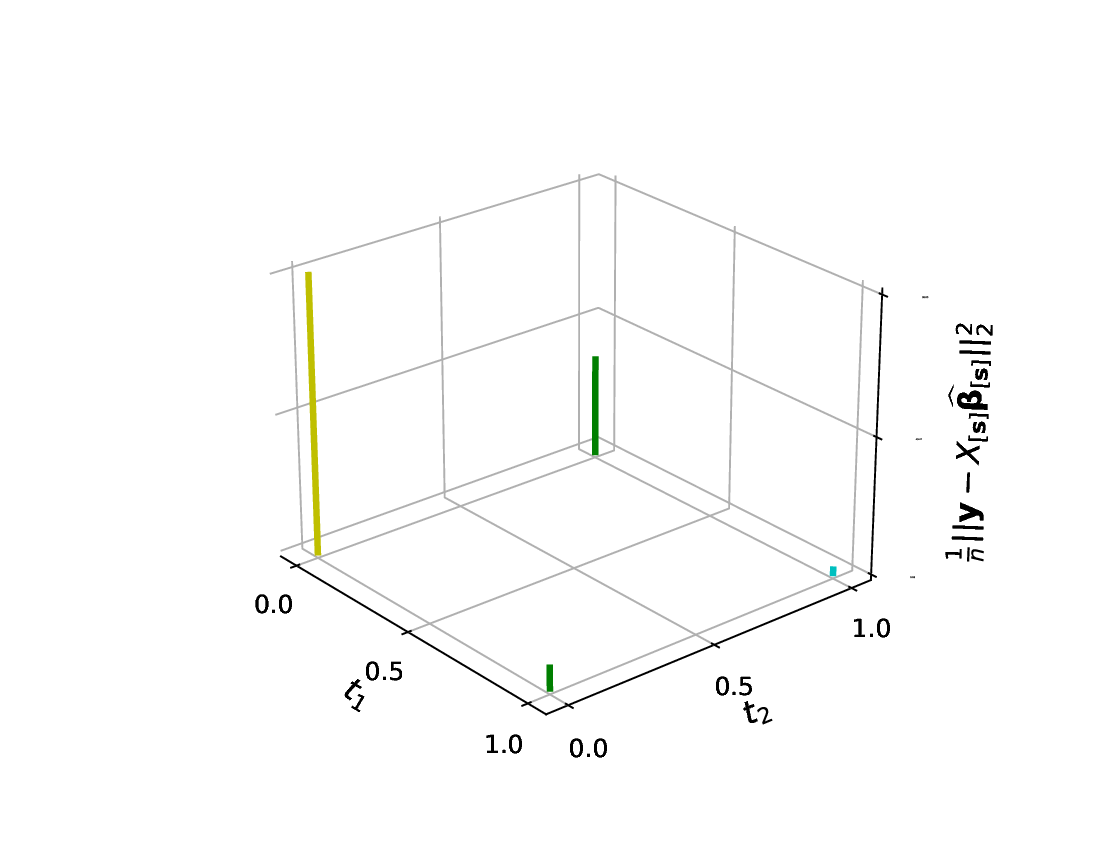}
    \caption{}
      \end{subfigure}
  \begin{subfigure}{0.5\linewidth}
    \centering
    \includegraphics[width=1.15\linewidth, trim=2.3cm 1.5cm -0.3cm 2.9cm, clip=true]{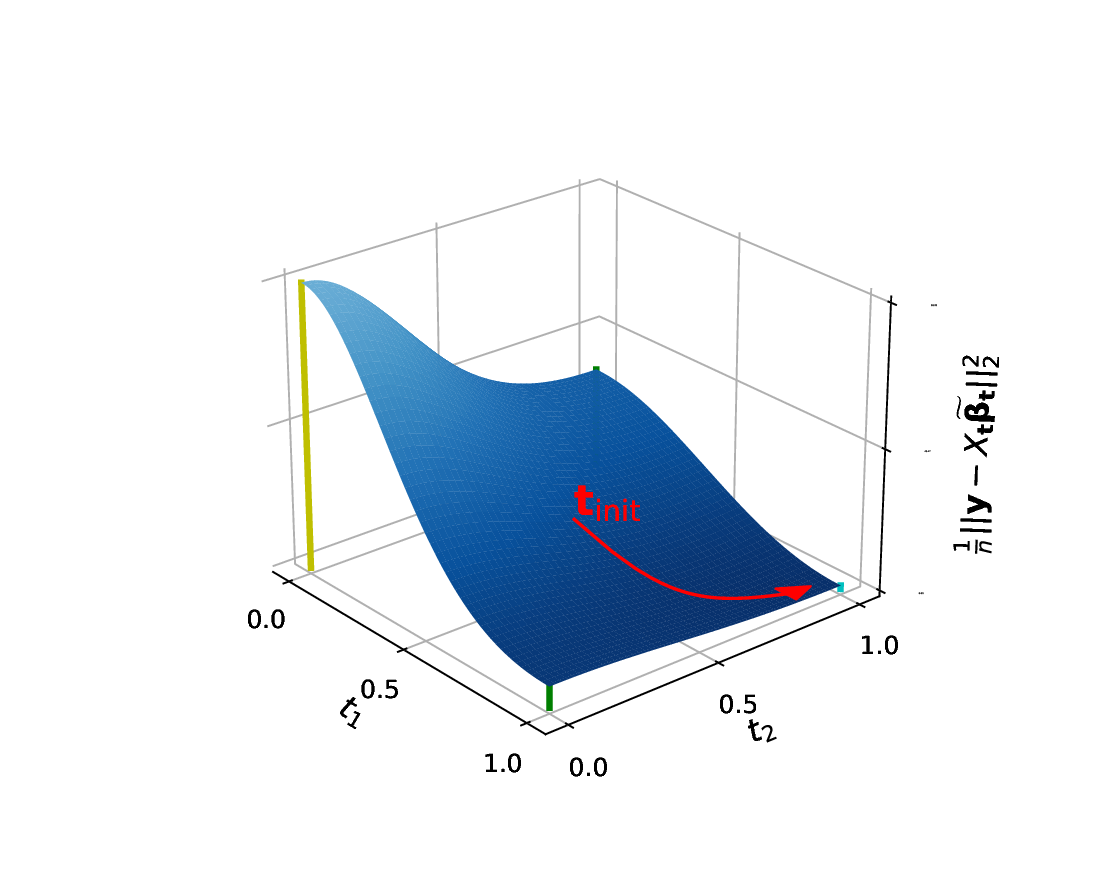}
    \caption{}
  \end{subfigure}
    \begin{subfigure}{0.5\linewidth}
    \centering
    \includegraphics[width=1.15\linewidth, trim=2.3cm 1.5cm -1.2cm 3cm, clip=true]{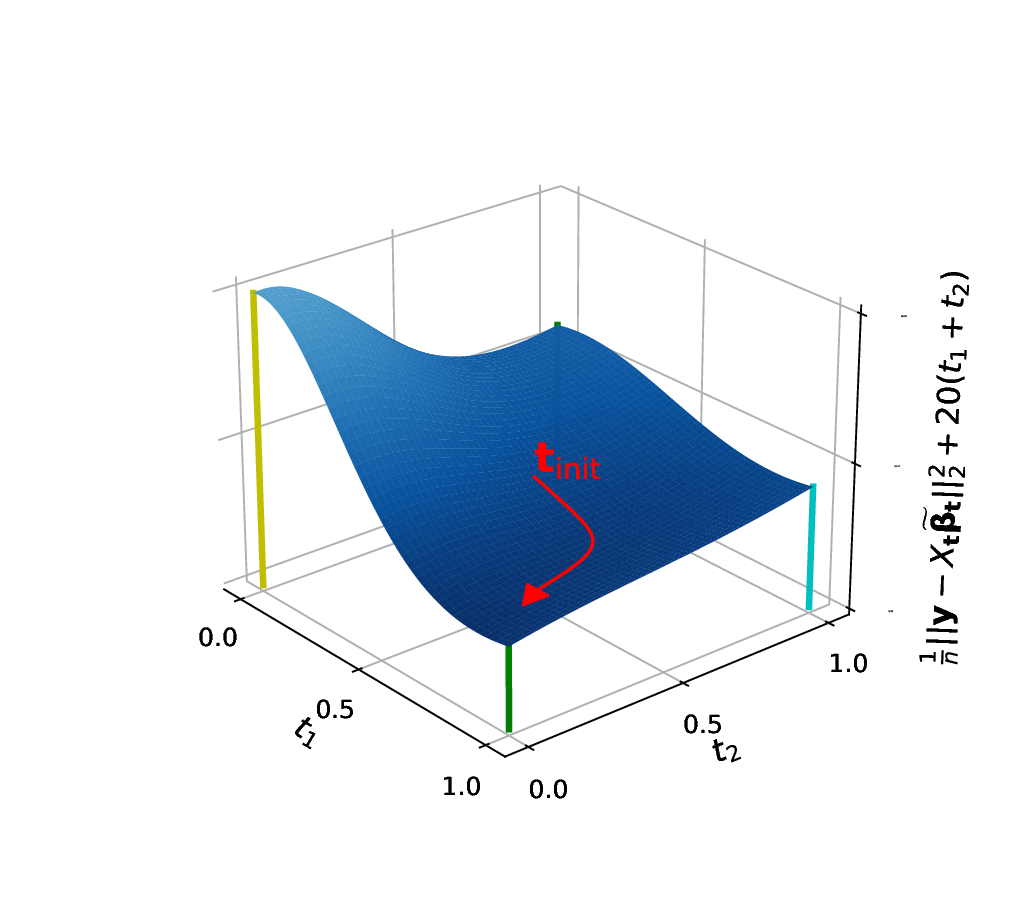}
    \caption{}
      \end{subfigure}
  \begin{subfigure}{0.5\linewidth}
    \centering
    \includegraphics[width=1.15\linewidth, trim=2.3cm 1.5cm -1.2cm 3cm, clip=true]{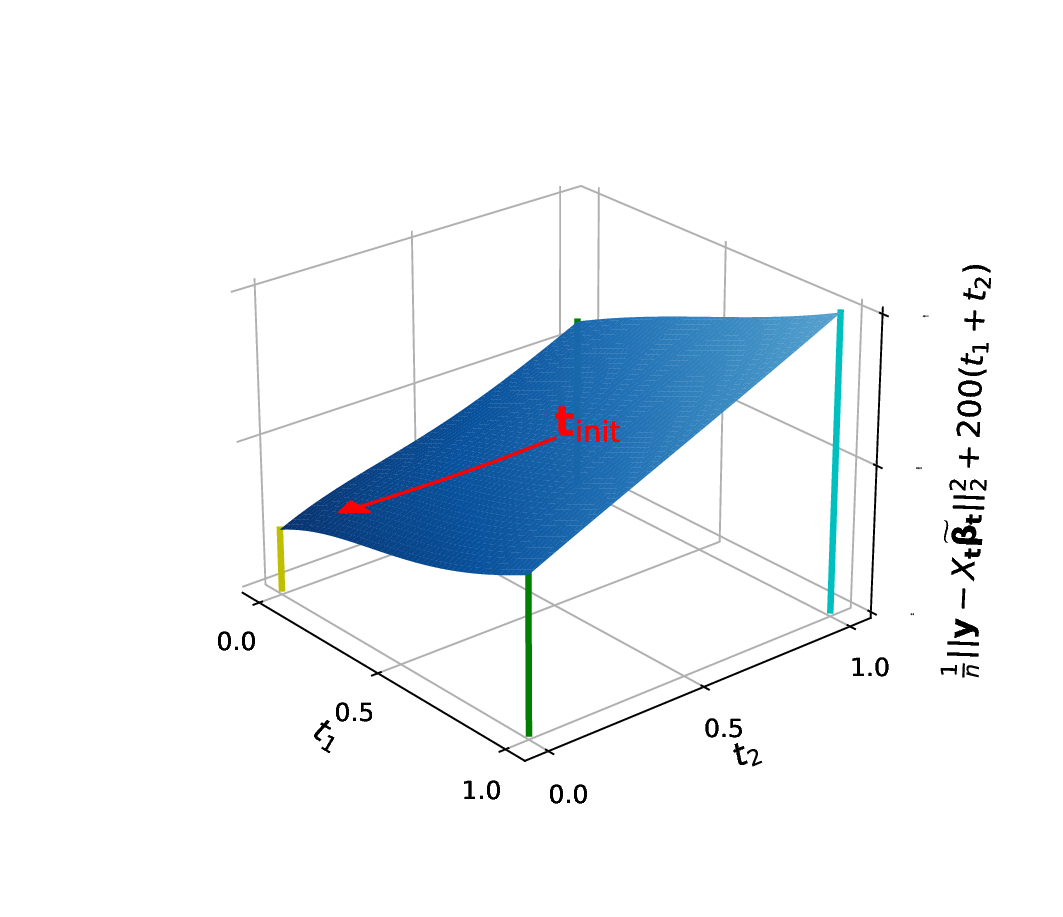}
    \caption{}
  \end{subfigure}
      \caption{Illustration of the workings of COMBSS for an example data with $p = 2$. Plot (a) shows the objective function of the exact method \eqref{eqn:dbss} for $\m s \in \{0,1\}^2$. Observe that the best subsets correspond to $k =0$, $k =1$, and $k =2$ are $(1,1)^\top$, $(1, 0)^\top$, and $(0,0)^\top$, respectively.
      Plots (b) - (d) show the objective function of our optimization method \eqref{eqn:Our_method} for different values of the parameter $\lambda$. In each of these three plots, the curve (in red) shows the execution of a basic gradient descent algorithm that, starting at the initial point $\m t_{\mathsf init} = (0.5, 0.5)^\top$, converges towards the best subsets of sizes $0, 1$, and $2$, respectively. }
  \label{fig:exact-vs-combss}
\end{figure}

Our continuous optimization method can be described as follows. Instead of the binary vector space $\{0,1\}^p$ as in the exact methods, we consider the whole hyper-cube $[0, 1]^p$ and for each $\m t \in [0,1]^p$, we consider a new estimate $\tbeta_{\m t}$ (defined later in Section~\ref{sec:ContOpt}) so that we have the following well-defined continuous extension of the exact problem \eqref{eqn:dbss}:
\begin{align}
\label{eqn:Our_method}
\minimize_{{\m t} \in [0,1]^{p}} \,\, \frac{1}{n}\| \m y - X_{\m t} \tbeta_{\m t} \|_2^2 + \lambda \sum_{j = 1}^p t_j,
\end{align}
where $X_{\m t}$ is obtained from $X$ by multiplying the $j$th column of $X$ by $t_j$ for each $j = 1,\dots, p$, and the tuning parameter $\lambda$ controls the sparsity of the solution obtained, analogous to selecting the best $k$ in the exact optimization. Our construction of $\tbeta_{\m t}$ guarantees that $\| \m y - X_{\m s} \tbeta_{\m s} \|_2 = \| \m y - {X}_{[\m s]} \hbeta_{[\m s]} \|_2$ at the corner points $\m s$ of the hypercube $[0,1]^p$, and the new objective function $\| \m y - X_{\m t} \tbeta_{\m t} \|_2^2$ is smooth over the hypercube.

While COMBSS aims to find sets of models that are candidates for the best subset of variables, an important property is that it has no discrete constraints, unlike the exact optimization problem \eqref{eqn:dbss} or the mixed integer optimization formulation.
As a consequence, our method can take advantage of standard continuous optimization methods, such as gradient descent methods, by starting at an interior point on the hypercube $[0,1]^p$ and iteratively moving towards a corner that minimizes the objective function. See Fig. \ref{fig:exact-vs-combss} for an illustration of our method. In the implementation, we move the box constrained problem \eqref{eqn:Our_method} to an equivalent unconstrained problem so that the gradient descent method can run without experiencing boundary issues.

The rest of the paper is organized as follows: In Section~\ref{sec:ContOpt}, we describe the mathematical framework of the proposed method COMBSS. In Section~\ref{sec:Continuity_Gradient}, we first establish the continuity of the objective functions involved in COMBSS, and then we derive expressions for their gradients, which are exploited for conducting continuous optimization.  Complete details of COMBSS algorithm are presented in Section~\ref{sec:algorithm}. In Section~\ref{sec:Tuning_Pars}, we discuss roles of the tuning parameters that control the surface shape of the objective functions and the sparsity of the solutions obtained. Section~\ref{sec:CG} provides steps for efficient implementation of COMBSS using some popular linear algebra techniques. Simulation results comparing COMBSS with existing popular methods are presented in Section~\ref{sec:sims}. We conclude the paper with some brief remarks in Section~\ref{sec:conclusions}. Proofs of all our theoretical results are provided in Appendix~\ref{app:Proofs}.


\section{Continuous Extension of Best Subset Selection Problem}
\label{sec:ContOpt}
To see our continuous extension of the exact best subset selection optimization problem \eqref{eqn:dbss},
for $\m t = (t_1, \dots, t_p)^\top \in [0,1]^{p}$, define $T_{\m t} = \mathsf{Diag}(\m t)$, the diagonal matrix with the diagonal elements being $t_1, \dots, t_p$, and let
$
X_{\m t} = X T_{\m t}.
$
{ With $I$ denoting the identity matrix of an appropriate dimension}, for a fixed constant $\delta > 0$, define
\begin{align}
L_{\m t} = L_{\m t}(\delta) =  \frac{1}{n}\lt[X_{\m t}^\top X_{\m t} + \delta\lt(I - T_{\m t}^2\rt)\rt], \label{eqn:Lt}
\end{align}
where we suppress $\delta$ for ease of reading.
Intuitively, $L_{\m t}$ can be seen as a `convex combination' of the matrices $(X^\top X)/n$ and $\delta I/n$, because $X_{\m t}^\top X_{\m t} = T_{\m t} {X^\top X} T_{\m t}$ and thus
\begin{align}
L_{\m t} =  T_{\m t} \frac{X^\top X}{n} T_{\m t} + \lt(I - T_{\m t}\rt) \frac{\delta\, I}{n}  \lt(I - T_{\m t}\rt).
\label{eqn:convex_comb}
\end{align}
Using this notation, now define
\begin{align}
\tbeta_{\m t} &= L_{\m t}^{\dagger} \lt(\frac{X_{\m t}^\top \m y}{n}\rt), \quad \m t \in [0, 1]^p. \label{eqn:beta_tilde}
\end{align}
We need $L_{\m t}^{\dagger}$ in \eqref{eqn:beta_tilde} so that $\tbeta_{\m t}$ is defined for all $\m t \in [0, 1]^p$. However, from the way we conduct optimization, we need to compute $\tbeta_{\m t}$ only for $\m t \in [0, 1)^p$.
We later show in Theorem~\ref{thm:positive_definiteness} that for all $\m t \in [0,1 )^p$, $L_{\m t}$ is invertible and thus in the implementation of our method, $\tbeta_{\m t}$ always takes the form ${\tbeta_{\m t} =  L_{\m t}^{-1} X_{\m t}^\top \m y/n}$, eliminating the need to compute any computationally expensive pseudo-inverse.

With the support of these observations, an immediate well-defined generalization of the best subset selection problem~\eqref{eqn:bss} is
\begin{align}
\begin{aligned}
\label{eqn:cbss}
&\minimize_{\m t \in [0,1]^p} \,\, \frac{1}{n}\| \m y - X_{\m t} \tbeta_{\m t} \|_2^2\,
\subjectto  \sum_{j = 1}^p t_j =k.
\end{aligned}
\end{align}
Instead of solving the constrained problem \eqref{eqn:cbss}, by defining a Lagrangian function
\begin{align}
f_{\lambda}(\m t) &= \frac{1}{n}\| \m y - X_{\m t} \tbeta_{\m t} \|_2^2 + \lambda \sum_{j = 1}^p t_j,
\label{eqn:Def_f}
\end{align}
for a tunable parameter $\lambda > 0$,
we aim to solve
\begin{align}
\label{eqn:Opt_box_con}
\minimize_{\m t \in [0,1]^p} f_{\lambda}(\m t).
\end{align}

By defining $g_{\lambda}(\m w) = f_{\lambda}\lt(\m t(\m w)\rt)$,
we reformulate the box constrained problem \eqref{eqn:Opt_box_con} into an equivalent unconstrained problem,
\begin{align}
\label{eqn:uccbss}
\minimize_{\m w \in \reals^p} g_{\lambda}(\m w),
\end{align}
where the mapping $\m t = \m t(\m w)$ is
\begin{align}
\label{eqn:Def_tw}
t_j(w_j) = 1 - \exp(-w_j^2),\quad j = 1, \dots, p.
\end{align}
The unconstrained problem \eqref{eqn:uccbss} is equivalent to the box constrained problem \eqref{eqn:Opt_box_con}, because
$1 - \exp(-u^2) < 1 - \exp(-v^2) \ \text{if and only if}\  u^2 < v^2$, for any $u, v \in \reals$.
\begin{remark}
\normalfont
The non-zero parameter $\delta$ is important in the expression
of the proposed estimator $\tbeta_{\m t}$, as in \eqref{eqn:beta_tilde},
not only to make $L_{\m t}$ invertible for $\m t \in [0,1)^p$, but also to make the surface of $f_\lambda(\m t)$ to have smooth transitions from one corner to another over the hypercube. For example consider a situation where $X^\top X$ is invertible. Then, for any interior point $\m t \in (0, 1)^p$, since $T_{\m t}^{-1}$ exists, the optimal solution to ${\min_{\bbeta}\| y - X_{\m t} \bbeta\|_2^2/n}$ after some simplification is $T_{\m t}^{-1} (X^\top X)X^\top y$. As a result, the corresponding minimum loss is ${\| y - X (X^\top X)X^\top y\|_2^2/n}$, which is a constant for all~$\m t$ over the interior of the hypercube. Hence, the surface of the loss function would have jumps at the borders while being flat over the interior of the hypercube. Clearly, such a loss function is not useful for conducting continuous optimization.
\end{remark}
\begin{remark}
\label{rem:intercept}
\normalfont
The proposed method and the corresponding theoretical results presented in this paper easily extend to linear models with intercept term. More generally, if we want to keep some features in the model, say features $j =1, 2$, and $4$, then we enforce $t_j = 1$ for $j = 1, 2, 4$, and conduct subset selection only over the remaining features by taking $\m t = (1, 1, t_3, 1, t_5, \dots, t_p)^\top$ and optimize over $t_3, t_5, \dots, t_p$.
\end{remark}
{
\begin{remark}
\normalfont
From the definition, for any $\m t$, we can observe that $\tbeta_{\m t}$ is the solution of
\begin{align*}
\minimize_{\bbeta \in \reals^p} \frac{1}{n} \|\m y - X_{\m t} \bbeta\|^2_2 + \frac{\delta}{n} \Big\| \sqrt{I - T^2_{\m t}}\, \bbeta \Big\|_2^2,
\end{align*}
which can be seen as the well-known Thikonov regression. Since the solution $\tbeta_{\m t}$ does not change, even if the penalty $\lambda \sum_{j = 1}^p t_j$ is  added to the objective function above, with
\begin{align}
f_{\lambda}(\m t, \bbeta) = \frac{1}{n} \|\m y - X_{\m t} \bbeta\|^2_2 + \lambda \sum_{j = 1}^p t_j + \frac{\delta}{n} {\footnotesize \Big\| \sqrt{I - T^2_{\m t}}\, \bbeta \Big\|_2^2,}
\label{eqn:flamdel}
\end{align}
in the future, we can consider the optimization problem
\begin{align}
\label{eqn:bcd-thikonov}
\minimize_{\bbeta \in \reals^p,\, \m t \in [0,1]^p} f_{\lambda}(\m t, \bbeta),
\end{align}
as an alternative to \eqref{eqn:Opt_box_con}. This formulation allows us to use block coordinate descent, an iterative method, where in each iteration the optimal value of $\bbeta$ is obtained given $\m t$ using \eqref{eqn:beta_tilde} and an optimal value of $\m t$ is obtained given that $\bbeta$ value.
\end{remark}
}

\section{Continuity and Gradients of the Objective Function}
\label{sec:Continuity_Gradient}
In this section, we first prove that the objective function $g_{\lambda}(\m w)$ of the unconstrained optimization problem \eqref{eqn:uccbss} is continuous on $\reals^p$ and then we derive its gradients. En-route, we also establish the relationship between $\hbeta_{[\m s]}$ and $\tbeta_{\m t}$ which are respectively defined by \eqref{eqn:hbeta} and \eqref{eqn:beta_tilde}. This relationship is useful in understanding the relationship between our method and the exact optimization \eqref{eqn:dbss}.

Theorem~\ref{thm:positive_definiteness} shows that for all $\m t \in [0,1)^p$, the matrix $L_{\m t}$, which is defined in \eqref{eqn:Lt}, is symmetric positive-definite and hence invertible.
\begin{theorem}
\label{thm:positive_definiteness}
For any $\m t \in [0, 1)^p$, $L_{\m t}$ is symmetric positive-definite and ${\tbeta_{\m t} = L_{\m t}^{-1} X_{\m t}^\top \m y/n}$.
\end{theorem}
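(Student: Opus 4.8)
The plan is to verify directly from the definition \eqref{eqn:Lt} that $L_{\m t}$ is symmetric and that its associated quadratic form is strictly positive on $\rr^p \setminus \{\m 0\}$; invertibility and the stated formula for $\tbeta_{\m t}$ then follow immediately. The key decomposition to work with is the ``convex combination'' form already noted in the text, namely $L_{\m t} = \frac1n\lt[T_{\m t} (X^\top X) T_{\m t} + \delta(I - T_{\m t}^2)\rt]$.

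First, for symmetry: $T_{\m t}(X^\top X)T_{\m t}$ is a congruence transform of the symmetric matrix $X^\top X$ and hence symmetric, while $I - T_{\m t}^2$ is diagonal; so $L_{\m t}$ is symmetric. Next, for positive-definiteness, fix $\m v \in \rr^p$ with $\m v \neq \m 0$ and compute
\[
n\, \m v^\top L_{\m t}\, \m v = \| X T_{\m t} \m v\|_2^2 + \delta \sum_{j=1}^p (1 - t_j^2)\, v_j^2.
\]
The first term is nonnegative. For the second, the crucial point is that $\m t \in [0,1)^p$ forces $t_j^2 < 1$ for every $j$, so each coefficient $1 - t_j^2$ is strictly positive; since $\m v \neq \m 0$ has at least one nonzero coordinate, $\sum_{j}(1 - t_j^2)v_j^2 > 0$, and with $\delta > 0$ we get $\m v^\top L_{\m t} \m v > 0$. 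Hence $L_{\m t}$ is symmetric positive-definite, therefore invertible. Finally, for any invertible matrix the Moore--Penrose pseudo-inverse coincides with the ordinary inverse, so substituting $L_{\m t}^{+} = L_{\m t}^{-1}$ into \eqref{eqn:beta_tilde} yields $\tbeta_{\m t} = L_{\m t}^{-1}\lt(X_{\m t}^\top \m y/n\rt)$.

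There is no serious obstacle here; the statement is essentially a one-line quadratic-form estimate. The only subtlety worth flagging is where the hypothesis $\m t \in [0,1)^p$ (rather than $[0,1]^p$) is used: it is exactly the strict inequality $1 - t_j^2 > 0$ that makes the $\delta$-term a strictly positive-definite diagonal matrix and thus dominates the possibly rank-deficient term $\| X T_{\m t} \m v\|_2^2$. At a corner with some $t_j = 1$ the corresponding diagonal entry vanishes, and positive-definiteness can genuinely fail (e.g.\ if the selected columns of $X$ are linearly dependent), which is precisely why the pseudo-inverse is retained in the general definition \eqref{eqn:beta_tilde} but can be replaced by the ordinary inverse throughout the optimization, where only $\m t \in [0,1)^p$ is ever encountered.
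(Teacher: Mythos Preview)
Your proof is correct and essentially identical to the paper's: both split the quadratic form into the positive-semidefinite term $\|X_{\m t}\m v\|_2^2$ and the strictly positive diagonal term $\delta\sum_j(1-t_j^2)v_j^2$, using $t_j<1$ to force the latter to dominate. Your added commentary on why the hypothesis $\m t\in[0,1)^p$ is sharp is accurate and goes slightly beyond what the paper states explicitly.
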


Theorem~\ref{thm:beta_to_beta} establishes a relationship between $\tbeta_{\m s}$ and $\hbeta_{[\m s]}$ at all the corner points $\m s \in \{0,1\}^p$.  Towards this, for any point $\m s \in \{0,1\}^p$ and a vector $\m u \in \reals^p$, we write $(\m u)_{\splus}$ (respectively, $(\m u)_{0}$) to denote the {\em sliced} vector of dimension $\lvert \m{s} \rvert$ (respectively, $p - \lvert \m{s} \rvert$) created from $\m u$ by removing all its elements with the indices $j$ where $s_j = 0$ (respectively, $s_j > 0$). For instance, if $\m u = (2,3,4,5)^\top$ and $\m s = (1,0,1,0)^\top$, then $(\m u)_+ = (2,4)$ and $(\m u)_0 = (3, 5)$.


\begin{theorem}
\label{thm:beta_to_beta}
For any $\m s \in \{0,1\}^p$,  $(\tbeta_{\m s})_+ =  \hbeta_{[\m s]}$ and $( \tbeta_{[\m s]} )_0 = \m 0$. Furthermore, we have
$
X_{[\m s]}\hbeta_{[\m s]} = X_{\m s}\tbeta_{\m s}.
$
\end{theorem}

As an immediate consequence of Theorem~\ref{thm:beta_to_beta}, we have
$\| \m y - X_{[\m s]} \hbeta_{[\m s]} \|_2^2 =  \| \m y - X_{\m s}\tbeta_{\m s} \|_2^2$. Therefore,  the objective function of the exact optimization problem~\eqref{eqn:dbss} is identical to the objective function of its extended optimization problem~\eqref{eqn:cbss} (with ${\lambda = 0}$) at the corner points $\m s \in \{0,1\}^p$.

Our next result, Theorem~\ref{thm:Cont_f}, shows that $f_{\lambda}(\m t)$ is a continuous function on $[0, 1]^p$.
\begin{theorem}
\label{thm:Cont_f}
The function $f_{\lambda}(\m t)$ defined in \eqref{eqn:Def_f} is continuous over $[0, 1]^p$ in the sense that for any sequence $\m t^{(1)}, \m t^{(2)}, \dots \in [0, 1)^p$ converging to $\m t \in [0, 1]^p$, the limit
$\lim_{l \to \infty} f_{\lambda}(\m t^{(l)})$ exists and
\[
f_{\lambda}(\m t) = \lim_{l \to \infty} f_{\lambda}(\m t^{(l)}).
\]
\end{theorem}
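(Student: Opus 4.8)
The penalty $\lambda\sum_{j}t_j$ is linear, hence continuous on $[0,1]^p$, so it suffices to prove that $\m t\mapsto \frac1n\|\m y - X_{\m t}\tbeta_{\m t}\|_2^2$ is sequentially continuous along sequences in $[0,1)^p$, and for that it is enough to show the fitted vector $\m t\mapsto X_{\m t}\tbeta_{\m t}$ admits a continuous extension to $[0,1]^p$. On the open cube $[0,1)^p$ this is immediate: by Theorem~\ref{thm:positive_definiteness}, $L_{\m t}$ is invertible there, so $\tbeta_{\m t}=L_{\m t}^{-1}(X_{\m t}^\top\m y/n)$ is a rational function of $\m t$ with non-vanishing denominator $\det L_{\m t}$. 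The only points to worry about are boundary points $\m t$ with nonempty active set $\mathcal A:=\{j:t_j=1\}$; set $\mathcal B:=\{1,\dots,p\}\setminus\mathcal A$.

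\textbf{A priori bounds via an energy argument.} The plan is to use that $\tbeta_{\m t}$ is the minimum-norm solution of the (consistent) normal equations $nL_{\m t}\bbeta = X_{\m t}^\top\m y$, equivalently the minimum-norm minimizer of the convex quadratic $h_{\m t}(\bbeta):=\frac1n\|\m y - X_{\m t}\bbeta\|_2^2 + \frac{\delta}{n}\bbeta^\top(I-T_{\m t}^2)\bbeta\ge 0$. Comparing with $\bbeta=\m 0$ yields, for every $\m t\in[0,1]^p$, the bounds $\|X_{\m t}\tbeta_{\m t}\|_2\le 2\|\m y\|_2$ and $\delta\sum_{j}(1-t_j^2)\,\tbeta_{\m t,j}^2\le\|\m y\|_2^2$. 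Hence, given $\m t^{(l)}\to\m t$ in $[0,1)^p$, along any subsequence one can extract a further subsequence (Bolzano--Weierstrass) along which $X_{\m t^{(l)}}\tbeta_{\m t^{(l)}}\to\m v$ for some $\m v\in\operatorname{col}(X)$, and, since $1-t_j^2$ stays bounded away from $0$ on $\mathcal B$, also $(\tbeta_{\m t^{(l)}})_{\mathcal B}\to\bbeta_{\mathcal B}$ for some $\bbeta_{\mathcal B}$.

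\textbf{Identifying the limit.} Next I would pass to the limit in the normal equations $nL_{\m t^{(l)}}\tbeta_{\m t^{(l)}}=X_{\m t^{(l)}}^\top\m y$ coordinatewise, rewriting the $j$-th equation as $t_j^{(l)}x_j^\top(X_{\m t^{(l)}}\tbeta_{\m t^{(l)}}-\m y)+\delta\bigl(1-(t_j^{(l)})^2\bigr)\tbeta_{\m t^{(l)},j}=0$, where $x_j$ is the $j$-th column of $X$. For $j\in\mathcal A$, Cauchy--Schwarz with the second a priori bound gives $\delta(1-(t_j^{(l)})^2)\tbeta_{\m t^{(l)},j}\to 0$, so the limit reads $x_j^\top(\m v-\m y)=0$; for $j\in\mathcal B$ the limit reads $t_j x_j^\top(\m v-\m y)+\delta(1-t_j^2)\bbeta_j=0$. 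Reconstructing an $\mathcal A$-block of a vector $\bbeta$ from $\m v\in\operatorname{col}(X)$ (coordinates with $t_j=0$ being harmless), this limiting system is exactly $nL_{\m t}\bbeta=X_{\m t}^\top\m y$ with $\m v=X_{\m t}\bbeta$. Finally, since $u^\top L_{\m t}u=\frac1n\|X_{\m t}u\|_2^2+\frac{\delta}{n}u^\top(I-T_{\m t}^2)u$ with both summands nonnegative for $\m t\in[0,1]^p$, $L_{\m t}u=\m 0$ forces $X_{\m t}u=\m 0$; hence $X_{\m t}\bbeta$ is independent of the chosen solution of the normal equations and equals $X_{\m t}\tbeta_{\m t}$. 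Thus $\m v=X_{\m t}\tbeta_{\m t}$; every subsequential limit agrees, so the bounded sequence $X_{\m t^{(l)}}\tbeta_{\m t^{(l)}}$ converges to $X_{\m t}\tbeta_{\m t}$, whence $f_\lambda(\m t^{(l)})\to \frac1n\|\m y-X_{\m t}\tbeta_{\m t}\|_2^2+\lambda\sum_j t_j=f_\lambda(\m t)$.

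\textbf{Main obstacle.} The crux is that the Moore--Penrose inverse is discontinuous precisely where $\operatorname{rank}L_{\m t}$ drops: if the active columns $\{x_j:j\in\mathcal A\}$ are collinear then $\hbeta_{[\m s]}$ need not exist at the corner and the entries of $L_{\m t^{(l)}}^{-1}$, hence of $\tbeta_{\m t^{(l)}}$ itself, may blow up as $\m t^{(l)}\to\m t$. So one cannot argue componentwise with $\tbeta_{\m t}$; the point is that only the fitted vector $X_{\m t}\tbeta_{\m t}$ must be controlled, and the energy bound keeps it bounded while the limiting normal equations pin down its limit. A minor technical nuisance is the cross term $X_{\mathcal A}(T^{(l)}_{\mathcal A}-I)(\tbeta_{\m t^{(l)}})_{\mathcal A}$ that appears when splitting $X_{\m t^{(l)}}\tbeta_{\m t^{(l)}}$ over $\mathcal A$ and $\mathcal B$: it is a product of a factor tending to $0$ and a possibly unbounded one, but the bound $\delta(1-t_j^2)\tbeta_{\m t,j}^2\le\|\m y\|_2^2$ gives $(1-t_j^{(l)})|\tbeta_{\m t^{(l)},j}|\le C\sqrt{1-t_j^{(l)}}\to 0$, so it vanishes.
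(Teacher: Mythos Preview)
Your proof is correct and follows a genuinely different route from the paper's. The paper writes both $L_{\m t}^{+}$ and $L_{\m t^{(l)}}^{-1}$ as convergent Neumann-type series $r\sum_{k\ge0}(I-rL^{2})^{k}L$ (their Lemma~\ref{lem:matrix_conv}, from Ben-Israel--Charnes), and then asserts that since $L_{\m t^{(l)}}\to L_{\m t}$ entrywise one may pass to the limit inside the infinite sum to conclude $L_{\m t^{(l)}}^{-1}\to L_{\m t}^{+}$ and hence $\tbeta_{\m t^{(l)}}\to\tbeta_{\m t}$.

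Your variational/compactness approach is more robust, and in fact avoids a gap in the paper's argument. You never claim convergence of $\tbeta_{\m t^{(l)}}$ itself, only of the fitted vector $X_{\m t^{(l)}}\tbeta_{\m t^{(l)}}$, and you pin down its limit by passing to the limit in the normal equations together with the observation that $\ker L_{\m t}\subset\ker X_{\m t}$. This distinction matters, because the paper's intermediate claim is \emph{false} at rank-drop boundary points: the Moore--Penrose inverse is discontinuous there, and with $n=\delta=1$, $X=(1,\,1)$, $\m y=1$ one checks that $\tbeta_{(1,1)}=(\tfrac12,\tfrac12)^\top$, whereas along $\m t^{(l)}=(1-\tfrac1l,\,1-\tfrac1{l^2})$ one has $\tbeta_{\m t^{(l)}}\to(0,1)^\top$. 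The fitted values nonetheless coincide (both equal $1$), so $f_\lambda$ \emph{is} continuous --- exactly what your argument establishes. The energy comparison $h_{\m t}(\tbeta_{\m t})\le h_{\m t}(\m 0)$ is the key device that yields compactness without controlling $(\tbeta_{\m t})_{\cA}$, and your reconstruction of $\bbeta_{\cA}$ is justified because $\m v-\sum_{j\in\cB}t_j\beta_j x_j$ is a limit of vectors in the closed finite-dimensional span of $\{x_j:j\in\cA\}$.
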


Corollary~\ref{cor:Continuity_g} establishes the continuity of $g_\lambda$ on $\reals^p$. This is a simple consequence of Theorem~\ref{thm:Cont_f}, because from the definition,  $g_{\lambda}(\m w) =  f_\lambda\lt(\m t(\m w)\rt)$ with $\m t(\m w) = \m 1 - \exp(-\m w \odot \m w)$. Here and afterwards, in an expression with vectors, $\m 1$ denotes a vector of all ones of appropriate dimension, $\odot$~denotes the element-wise (or, Hadamard) product of two vectors, and the exponential function, $\exp(\cdot)$, is also applied element-wise.

\begin{corollary}
 \label{cor:Continuity_g}
 The objective function $g_{\lambda}(\m w)$ is continuous at every point $\m w \in \reals^p$.
\end{corollary}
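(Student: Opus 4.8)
The plan is to exploit the factorization $g_{\lambda} = f_{\lambda} \circ \m t(\cdot)$ together with the continuity of $f_{\lambda}$ already established in Theorem~\ref{thm:Cont_f}, thereby reducing the claim to a routine statement about composition of (sequentially) continuous maps. Since $\reals^p$ is a metric space, it suffices to verify sequential continuity: fix $\m w \in \reals^p$ and an arbitrary sequence $\m w^{(1)}, \m w^{(2)}, \dots \in \reals^p$ with $\m w^{(l)} \to \m w$, and show that $g_{\lambda}(\m w^{(l)}) \to g_{\lambda}(\m w)$.

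First I would observe that each coordinate map $w_j \mapsto 1 - \exp(-w_j^2)$ is continuous on $\reals$ and takes values in $[0,1)$, strictly below $1$ because $\exp(-w_j^2) > 0$ for every real $w_j$. Hence $\m t(\cdot)$, as given by \eqref{eqn:Def_tw}, is a continuous map from $\reals^p$ into $[0,1)^p$; in particular $\m t^{(l)} := \m t(\m w^{(l)})$ converges to $\m t^{(\infty)} := \m t(\m w)$, with $\m t^{(l)} \in [0,1)^p$ for every $l$ and $\m t^{(\infty)} \in [0,1)^p \subseteq [0,1]^p$.

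Next I would invoke Theorem~\ref{thm:Cont_f} directly: since $\{\m t^{(l)}\}_{l \ge 1}$ is a sequence in $[0,1)^p$ converging to the point $\m t^{(\infty)} \in [0,1]^p$, the theorem guarantees that $f_{\lambda}(\m t^{(l)}) \to f_{\lambda}(\m t^{(\infty)})$. By the definition \eqref{eqn:cost_fun} this reads $g_{\lambda}(\m w^{(l)}) = f_{\lambda}(\m t^{(l)}) \to f_{\lambda}(\m t^{(\infty)}) = g_{\lambda}(\m w)$. As the sequence was arbitrary, $g_{\lambda}$ is continuous at $\m w$, and since $\m w$ was arbitrary, $g_{\lambda}$ is continuous on all of $\reals^p$.

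There is no serious obstacle here; the one point that needs care is checking that the hypotheses of Theorem~\ref{thm:Cont_f} are genuinely met, namely that the approximating sequence $\m t^{(l)}$ lies in the half-open cube $[0,1)^p$ (so that $L_{\m t^{(l)}}$ is invertible by Theorem~\ref{thm:positive_definiteness} and $f_{\lambda}(\m t^{(l)})$ is finite and given by a true inverse rather than a pseudo-inverse), which is precisely why the parametrization \eqref{eqn:Def_tw} with its strictly-less-than-one range was chosen. One could instead argue with an $\varepsilon$--$\delta$ estimate, but the sequential route is the most economical given that Theorem~\ref{thm:Cont_f} is itself phrased sequentially.
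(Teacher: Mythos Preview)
Your proof is correct and follows essentially the same approach as the paper, which simply remarks that the corollary is a direct consequence of Theorem~\ref{thm:Cont_f} via the composition $g_{\lambda}(\m w) = f_{\lambda}(\m t(\m w))$ with the continuous map $\m t(\m w) = \m 1 - \exp(-\m w \odot \m w)$. You have merely spelled out the sequential argument in more detail, including the useful observation that the range of $\m t(\cdot)$ lies strictly in $[0,1)^p$ so that the hypotheses of Theorem~\ref{thm:Cont_f} are met.
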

As mentioned earlier, our continuous optimization method uses a gradient descent method to solve the problem~\eqref{eqn:uccbss}. Towards that we need to obtain the gradients of $g_{\lambda}(\m w)$.
Theorem~\ref{thm:gradient_g} provides an expression of the gradient $\nabla g_{\lambda}(\m w)$. 
\begin{theorem}
 \label{thm:gradient_g}
 For every $\m w \in \reals^p$, with $\m t = \m t(\m w)$ is defined by \eqref{eqn:Def_tw},
  \[
 \nabla f_{\lambda}(\m t) =  \bs \zeta_{\m t}  + \lambda \m 1, \quad \m t \in (0,1)^p,
\]
and, for $\m w \in \reals^p$,
\[
 \nabla g_{\lambda}(\m w) =  \lt(\bs \zeta_{\m t}   + \lambda \m 1\rt)\odot \lt( 2\m w \odot \exp(-\m w \odot \m w)\rt),
\]
where
\begin{align}
\label{eqn:zeta_def}
\bs \zeta_{\m t} = 2\lt( {\tbeta}_{\m t} \odot \lt({\m a}_{\m t} -  {\m d}_{\m t}  \rt)\rt) - 2\lt( {\m b}_{\m t} \odot  {\m c}_{\m t}\rt),
\end{align}
with
\begin{align*}
\m a_{\m t}  &= \frac{1}{n}[X^\top X ({\m t} \odot\tbeta_{\m t})  -   X^\top \m y],\\
\m b_{\m t} &=  \m a_{\m t}  - n^{-1}\delta ({\m t} \odot\tbeta_{\m t}),\\
\m c_{\m t} &= L^{-1}_{\m t}  \lt( {\m t} \odot \m a_{\m t} \rt), \quad \text{and}\\
\m d_{\m t} &=  \frac{1}{n}[X^\top X  - \delta I] ( {\m t} \odot {\m c}_{\m t} ).
\end{align*}
\end{theorem}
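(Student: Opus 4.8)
\emph{Proof idea.} The plan is to first compute $\nabla f_{\lambda}(\m t)$ by direct differentiation on the open cube $(0,1)^p$, where Theorem~\ref{thm:positive_definiteness} guarantees that $L_{\m t}$ is invertible and hence $\tbeta_{\m t}=L_{\m t}^{-1}(X_{\m t}^\top\m y/n)$ is a smooth function of $\m t$ (each entry of $L_{\m t}$ is a polynomial in $\m t$, and matrix inversion is smooth wherever it is defined), and then to pass to $\nabla g_{\lambda}(\m w)$ by the chain rule. The penalty $\lambda\sum_j t_j$ contributes $\lambda\m 1$ to the gradient trivially, so all the work is in differentiating $h(\m t):=\frac1n\|\m y-X_{\m t}\tbeta_{\m t}\|_2^2$. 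Writing the fitted vector as $X_{\m t}\tbeta_{\m t}=XT_{\m t}\tbeta_{\m t}=X(\m t\odot\tbeta_{\m t})$, differentiation of the quadratic gives, for each coordinate $j$,
\[
\partial_{t_j}h(\m t)=2\,\m a_{\m t}^\top\,\partial_{t_j}\!\lt(\m t\odot\tbeta_{\m t}\rt),
\]
where $\m a_{\m t}=\frac1n\lt(X^\top X(\m t\odot\tbeta_{\m t})-X^\top\m y\rt)$ is exactly the vector in the statement.

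To obtain $\partial_{t_j}\tbeta_{\m t}$ I would differentiate the defining identity $L_{\m t}\tbeta_{\m t}=T_{\m t}X^\top\m y/n$ implicitly. Writing $\m e_j$ for the $j$-th standard basis vector, so that $\partial_{t_j}T_{\m t}=\m e_j\m e_j^\top$, the expansion $L_{\m t}=\frac1n\lt[T_{\m t}(X^\top X)T_{\m t}+\delta I-\delta T_{\m t}^2\rt]$ yields $\partial_{t_j}L_{\m t}=\frac1n\lt[\m e_j\m e_j^\top(X^\top X-\delta I)T_{\m t}+T_{\m t}(X^\top X-\delta I)\m e_j\m e_j^\top\rt]$, and hence
\[
\partial_{t_j}\tbeta_{\m t}=L_{\m t}^{-1}\lt[\m e_j\m e_j^\top X^\top\m y/n-(\partial_{t_j}L_{\m t})\tbeta_{\m t}\rt].
\]
Substituting this into $\partial_{t_j}(\m t\odot\tbeta_{\m t})=\m e_j\m e_j^\top\tbeta_{\m t}+T_{\m t}\partial_{t_j}\tbeta_{\m t}$, and using $(\m t\odot\m a_{\m t})^\top L_{\m t}^{-1}=\m c_{\m t}^\top$ (since $L_{\m t}$ is symmetric) with $\m c_{\m t}=L_{\m t}^{-1}(\m t\odot\m a_{\m t})$, the scalar $\partial_{t_j}h(\m t)$ splits into three terms each carrying a factor $\m e_j\m e_j^\top$.

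It then remains to collect the scalar expressions into vectors indexed by $j$. Since $\m e_j\m e_j^\top$ picks out a single coordinate: $2\,\m a_{\m t}^\top\m e_j\m e_j^\top\tbeta_{\m t}$ assembles into $2\,\m a_{\m t}\odot\tbeta_{\m t}$; the term coming from the right-hand side $T_{\m t}X^\top\m y/n$ assembles into $2\,\m c_{\m t}\odot(X^\top\m y/n)$; and the two $\m e_j\m e_j^\top$-sandwich terms in $\partial_{t_j}L_{\m t}$, after transposing the row vector $\m c_{\m t}^\top T_{\m t}(X^\top X-\delta I)$ to $(X^\top X-\delta I)(\m t\odot\m c_{\m t})$ using symmetry of $X^\top X$ and $T_{\m t}$, assemble into $-2\,\m c_{\m t}\odot\lt[\tfrac1n(X^\top X-\delta I)(\m t\odot\tbeta_{\m t})\rt]$ and $-2\,\tbeta_{\m t}\odot\lt[\tfrac1n(X^\top X-\delta I)(\m t\odot\m c_{\m t})\rt]$. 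Now $\tfrac1n(X^\top X-\delta I)(\m t\odot\m c_{\m t})=\m d_{\m t}$ by definition, while $\tfrac1n(X^\top X-\delta I)(\m t\odot\tbeta_{\m t})=\m b_{\m t}+X^\top\m y/n$ by $X^\top X(\m t\odot\tbeta_{\m t})=n\m a_{\m t}+X^\top\m y$ and $\m b_{\m t}=\m a_{\m t}-\tfrac{\delta}{n}(\m t\odot\tbeta_{\m t})$; this $X^\top\m y/n$ then cancels the $2\,\m c_{\m t}\odot(X^\top\m y/n)$ term and leaves exactly $\nabla h(\m t)=2\,\tbeta_{\m t}\odot(\m a_{\m t}-\m d_{\m t})-2\,\m b_{\m t}\odot\m c_{\m t}=\bs\zeta_{\m t}$, so $\nabla f_{\lambda}(\m t)=\bs\zeta_{\m t}+\lambda\m 1$ on $(0,1)^p$. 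Finally, $t_j(\m w)=1-\exp(-w_j^2)$ depends only on $w_j$ with $t_j'(w_j)=2w_j\exp(-w_j^2)$; since the above derivation uses only invertibility of $L_{\m t}$, which (by the argument in the proof of Theorem~\ref{thm:positive_definiteness}) also holds on an open neighbourhood of $[0,1)^p$, and since $\m w\mapsto\m t(\m w)$ maps $\reals^p$ smoothly into $[0,1)^p$, the composition $g_{\lambda}$ is smooth on all of $\reals^p$ and the chain rule gives $\nabla g_{\lambda}(\m w)=\nabla f_{\lambda}(\m t(\m w))\odot(2\m w\odot\exp(-\m w\odot\m w))$ there.

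I expect the third paragraph to be the main obstacle: the bookkeeping that turns the coordinatewise derivatives back into Hadamard products and, above all, the cancellation of the $X^\top\m y$ contributions. At this stage it is easy to mislabel an $\odot$-product, drop a factor of $\tfrac1n$ or $\delta$, or mishandle the transpose of $\m c_{\m t}^\top T_{\m t}(X^\top X-\delta I)$; the care is in checking that precisely the vectors $\m a_{\m t},\m b_{\m t},\m c_{\m t},\m d_{\m t}$ of the statement emerge and nothing else. The remaining pieces — smoothness justifying the differentiations, the trivial penalty gradient, and the one-variable chain rule — are routine.
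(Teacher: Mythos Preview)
Your proposal is correct and follows essentially the same route as the paper: expand the quadratic in $\bs\gamma_{\m t}=\m t\odot\tbeta_{\m t}$, differentiate $\tbeta_{\m t}$ via $\partial_{t_j}L_{\m t}$ (the paper isolates this as a separate lemma, using $E_j$ in place of your $\m e_j\m e_j^\top$, but the computation is identical), assemble the $\m e_j\m e_j^\top$-terms into Hadamard products, and apply the chain rule. Your extra remark that invertibility of $L_{\m t}$ persists on a neighbourhood of $[0,1)^p$, so that the chain rule is justified even at $w_j=0$, is a point the paper leaves implicit.
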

Figure \ref{fig:Conv_t} illustrates the typical convergence behavior of $\m t$ for an example dataset during the execution of a basic gradient descent algorithm for minimizing $g_\lambda(\m w)$ using the gradient $\nabla g_{\lambda}$ given in Theorem~\ref{thm:gradient_g}. Here, $\m w$ is mapped to $\m t$ using \eqref{eqn:Def_tw} at each iteration.
\begin{figure}[h!]
    \centering
    \includegraphics[height=0.5\linewidth]{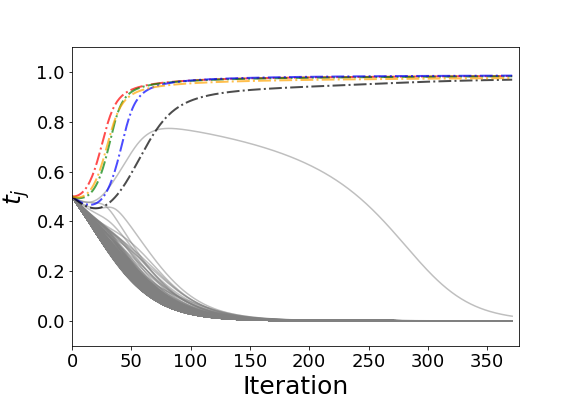}
   \caption{Convergence of $\m{t}$ for a high-dimensional dataset during the execution of basic gradient descent. Solid lines correspond to $\beta_j = 0$ and remaining $5$ curves (with line style $- \cdot -$) correspond to $\beta_j \neq 0$.
   {The dataset is generated using the model \eqref{trueModel} shown in Section~\ref{sec:sim-design} with only $5$ components of $\bbeta$ are non-zero, equal to $1$, at equally spaced indices between $1$ and $p = 1000$, and $n=100$, $\rho=0.8$, and signal-to-noise ratio of $5$. }
   The parameters $\lambda = 0.1$ and $\delta = n$; see Section~\ref{sec:Tuning_Pars} for more discussion on how to choose $\lambda$ and~$\delta$.
  }
 \label{fig:Conv_t}
\end{figure}

\section{Subset Selection Algorithms}
\label{sec:algorithm}
Our algorithm COMBSS as stated in Algorithm~\ref{alg:COMBSS}, takes the data $[X, \m y]$, tuning parameters $\delta, \lambda$, and an initial point $\m w^{(0)}$ as input, and returns either a single model or multiple models of different sizes as output. It is executed in three steps.

In Step~1,  $\mathsf{GradientDescent}\lt(\m w^{(0)},  \nabla g_{\lambda}\rt)$ calls a gradient descent method, such as the well known {\em adam} optimizer, for minimizing the objective function $g_{\lambda}(\m w)$, which takes $\m w^{(0)}$ as the initial point and uses the gradient function $\nabla g_{\lambda}$ for updating the vector $\m w$ in each iteration; see, for example, \cite{KochenderferWheeler2019} for a review of popular gradient based optimization methods.
It terminates when a predefined termination condition is satisfied and returns the sequence $\m w_{\mathsf{path}} = ( \m w^{(0)}, \m w^{(1)}, \dots )$ of all the points $\m w$ visited during its execution, where $\m w^{(l)}$ denotes the point obtained in the $l$th iteration of the gradient descent. Usually, a robust termination condition is to terminate when the change in $\m w$ (or, equivalently, in $\m t (\m w)$) is significantly small over a set of consecutive iterations.

\begin{algorithm}
\caption{$\mathsf{COMBSS}\lt(X, \m y, \delta, \lambda, \m w^{(0)}\rt)$}\label{alg:COMBSS}
\begin{algorithmic}[1]
  \State $\m w_{\mathsf{path}} \leftarrow \mathsf{GradientDescent}\lt(\m w^{(0)}, \nabla g_{\lambda}\rt)$

  \State Obtain $\m t_{\mathsf{path}}$ from $\m w_{\mathsf{path}}$ using the map \eqref{eqn:Def_tw}

  \State $\mc M \leftarrow \mathsf{SubsetMap}(\m t_{\mathsf{path}})$

  \State \Return $\mc M$
\end{algorithmic}
\end{algorithm}

Selecting the initial point $\m w^{(0)}$ requires few considerations. From Theorem~\ref{thm:gradient_g}, for any ${j =1, \dots, p}$, we have $t_j(w_j) = 0$ if and only if $w_j = 0$ and  ${\partial g_{\lambda}(\m w)/\partial w_j = 0}$ if $w_j = 0$.
Hence, if we start the gradient descent algorithm with $w^{(0)}_j = 0$ for some~$j$,  both $w_j$ and $t_j$ can continue to take $0$ forever. As a result, we might not learn the optimal value for $w_j$ (or, equivalently for $t_j$). Thus, it is important to select all the elements of $w^{(0)}$ away from $0$.

Consider the second argument, $\nabla g_{\lambda}$, in the gradient descent method. From Theorem~\ref{thm:gradient_g}, observe that computing the gradient $\nabla g_{\lambda}(\m w)$ involves finding the values of the expression of the form $L_{\m t}^{-1} \m u$ twice, first for computing $\tbeta_{\m t}$ (using \eqref{eqn:beta_tilde}) and then for computing the vector $\m c_{\m t}$ (defined in Theorem~\ref{thm:gradient_g}). Since $L_{\m t}$ is of dimension ${p \times p}$, computing the matrix inversion $L_{\m t}^{-1}$ can be computationally demanding particularly in  high-dimensional cases ($n<p$), where $p$ can be very large; see, for example,  \cite{Golub1996}. Since $L_{\m t}$ is invertible, observe that $L_{\m t}^{-1} \m u$ is the unique solution of the linear equation ${L_{\m t} \m z = \m u}$. In Section~\ref{sec:CG}, we first use the well-known Woodbury matrix identity to convert this $p$-dimensional linear equation problem to an  $n$-dimensional linear equation problem, which is then solved using the conjugate gradient method, a popular linear equation solver.
Moreover, again from Theorem~\ref{thm:gradient_g}, notice that $\nabla g_{\lambda}(\m w)$ depends on both the tuning parameters $\delta$ and $\lambda$. Specifically, $\delta$ is required for computing $L_{\m t}$ and $\lambda$ is used in the penalty term $\lambda \sum_{j =1}^p t_j$ of the objective function. In Section~\ref{sec:Tuning_Pars} we provide more details on the roles of these two parameters and instructions on how to choose them.

In Step~2, we obtain the sequence $\m t_{\mathsf{path}} = (\m t^{(0)}, \m t^{(1)}, \dots )$ from $\m w_{\mathsf{path}}$ by using the map \eqref{eqn:Def_tw}, that is, $\m t^{(l)} = \m t(\m w^{(l)}) = \m 1 - \exp(- \m w^{(l)} \odot \m w^{(l)})$ for each $l$.

Finally, in Step~3, $\mathsf{SubsetMap}(\m t_{\mathsf{path}})$ takes the sequence $\m t_{\mathsf{path}}$ as input to find a set of models $\mc M$ correspond to the input parameter $\lambda$. In the following subsections, we describe two versions of $\mathsf{SubsetMap}$.

The following theoretical result, Theorem~\ref{thm:convergence}, guarantees convergence of COMBSS. In particular, this result establishes that a gradient descent algorithm on $g_\lambda(\m w)$ converges to an $\epsilon$-stationary point. Towards this,
we say that a point $\widehat {\m w} \in \reals^p$ is an $\epsilon$-stationary point of $g_\lambda( \m w)$ if $\|\nabla g_\lambda(\widehat{\m w})\|_2 \leq \epsilon$.  Since $\m w$ is called a stationary point if {$\nabla g_\lambda(\m w) = \m 0$}, an $\epsilon$-stationary point provides an approximation to a stationary point.
\begin{theorem}
    \label{thm:convergence}
    There exists a constant $\alpha > 0$ such that 
    the gradient decent method, starting at any initial point $\m w^{(0)}$ and with a fixed positive learning rate smaller than $\alpha$, converges to an $\epsilon$-stationary point within $O(1/\epsilon^2)$ iterations.
\end{theorem}

\subsection{Subset Map Version 1}
One simple implementation of $\mathsf{SubsetMap}$ is stated as Algorithm~\ref{alg:ModelMap} which we call $\mathsf{SubsetMapV1}$ (where {\sf V1} stands for {\em version 1}) and it requires only the final point in the sequence $\m t_{\mathsf{path}}$ and returns only one model using a predefined threshold parameter $\tau \in [0,1)$.

\begin{algorithm}
\caption{$\mathsf{SubsetMapV1}\lt(\m t_{\mathsf{path}}, \tau\rt)$}
\label{alg:ModelMap}
\begin{algorithmic}[1]
\State Take $\m t$ to be the final point of $\m t_{\mathsf{path}}$

\For{$j = 1$ to $j = p$}
   \State $s_j \leftarrow \mathbb{I}(t_j > \tau)$
\EndFor

\State \Return $\m s = (s_1, \dots, s_p)^\top$
\end{algorithmic}
\end{algorithm}

Due to the tolerance allowed by the termination condition of the gradient descent, some $w_j$ in the final point of $\m w_{\mathsf{path}}$ can be almost zero but not exactly zero, even though they are meant to converge to zero. As a result, the corresponding $t_j$ also take values close to zero but not exactly zero because of the mapping from $\m w$ to $\m t$. Therefore, the threshold $\tau$ helps in mapping the insignificantly small $t_j$ to $0$ and all other $t_j$ to $1$.
In practice, we call $\mathsf{COMBSS}\lt(X, \m y, \delta, \lambda, \m w^{(0)}\rt)$ for each $\lambda$ over a grid of values. When $\mathsf{SubsetMapV1}$ is used, larger the value of $\lambda$, higher the sparsity in the resulting model~$\m s$. Thus, we can control the sparsity of the output model using $\lambda$. Since  we only care about the last point in $\m t_{\mathsf{path}}$ in  this version, an intuitive option for initialization is to take $\m w^{(0)}$ to be such that $\m t(\m w^{(0)}) = (1/2, \dots, 1/2)^\top$, the mid-point on the hypercube $[0,1]^p$, as it is at an equal distance from all the corner points, of which one is the (unknown) target  solution of the best subset selection problem.

In Appendix~\ref{sec:simulations}, we demonstrated the efficacy of COMBSS using {\sf SubsetMapV1} in predicting the true model of the data.
In almost all the settings, we observe superior performance of COMBSS in comparison to existing popular methods.

\subsection{Subset Map Version 2}
Ideally, there is a value of $\lambda$ for each $k$ such that the output model $\m s$ obtained by $\mathsf{SubsetMapV1}$ has exactly $k$ non-zero elements. However, when the ultimate goal is to find a best suitable model $\m s$ for a given $k \leq q$ such that ${\lvert \m s \rvert = k}$, for some $q \ll min(n, p)$,  since $\lambda$ is selected over a grid, we might not obtain any model for some values of $k$. Furthermore, for a given size $k$, if there are two models with almost the same mean square error, then the optimization may have difficulty in distinguishing them. Addressing this difficulty may involve fine tuning of hyper-parameters of the optimization algorithm.

To overcome these challenges without any hyper-parameter tuning and reduce the reliance on the parameter $\lambda$, we consider the other points in $\m t_{\mathsf{path}}$. In particular, we propose a more optimal implementation of $\mathsf{SubsetMap}$, which we call $\mathsf{SubsetMapV2}$ and is stated as Algorithm~\ref{alg:ModelMap-path}. The key idea of this version is that as the gradient descent progresses over the surface of $f_\lambda(\m t)$, it can point towards some corners of the hypercube $[0,1]^p$ before finally moving towards the final corner. Considering all these corners, we can refine the results.
Specifically, this version provides for each $\lambda$ a model for every $k \leq q$. In this implementation, $\lambda$ is seen as a parameter that allows us to explore the surface of $f_{\lambda}(\m t)$ rather than as a sparsity parameter.

\begin{algorithm}
\caption{$\mathsf{SubsetMapV2}\lt(\m t_{\mathsf{path}}\rt)$}\label{alg:ModelMap-path}
\begin{algorithmic}[1]
  \State $\mc M_k \leftarrow \{ \}$ for each $k \leq q$

  \For{ each $\m t = (t_1, \dots, t_p)^\top$ in $\m t_{\mathsf{path}}$}
     \State Let $t_{j_1}, t_{j_2}, \dots, t_{j_q}$ be the $q$ largest elements of $\m t$ in the descending order

     \For{$k = 1$ to $q$}
         \State Take $\m s_k \in \{0,1\}^p$ with non-zero
         elements only at $j_1, \dots, j_k$

         \State $\mc M_k \leftarrow \mc M_k \cup \{\m s_k\}$
    \EndFor
 \EndFor
 \For{$k = 1$ to $k = q$}
     \State $\m s^*_k \leftarrow \argmin_{{\m s} \in \mc M_k} \,\, \frac{1}{n}\| \m y - {X}_{[\m s]} \hbeta_{[\m s]} \|_2^2$
 \EndFor
  \State \Return ${\mc M = \{\m s^*_1, \dots, \m s^*_q\}}$
\end{algorithmic}
\end{algorithm}

For the execution of $\mathsf{SubsetMapV2}$, we start at Step~1 with an empty set of models $\mc{M}_k$ for each $k \leq q$. In Step~2, for each $\m t$ in $\m t_{\mathsf{path}}$, we consider the sequence of indices $j_1, \dots, j_q$ such that $t_{j_1} \geq t_{j_2} \geq \cdots \geq t_{j_q}$. Then, for each $k \leq q$, we take $\m s_k$ to be a binary vector with $1$'s only at $j_1, \dots, j_k$ and  add $\m s_k$ to the set $\mc{M}_k$. With this construction, it is clear that $\mc{M}_k$ consists of models of size $k$, of which we pick a best candidate $\m s^*_k$ as show at Step~3. Finally, the algorithm returns the set consists of $\m s^*_1, \dots, \m s^*_q$ correspond to the given $\lambda$. When the main COMBSS is called for a grid of $m$ values of $\lambda$ with  $\mathsf{SubsetMapV2}$, then for each $k \leq q$ we obtain at most $m$ models and among them the model with the minimum mean squared error is selected as the final best model for $k$. Since this version of COMBSS explores the surface, we can refine results further by starting from different initial points $\m w^{(0)}$. { Section~\ref{sec:sims} provides simulations to demonstrate the performance of COMBSS with {\sf SubsetMapV2}.}

\begin{remark}
\normalfont
    It is not hard to observe that for each $\lambda$, if the model obtained by Algorithm~\ref{alg:ModelMap} is of a size $k \leq q$, then this model is present in $\mc{M}_k$ of Algorithm~\ref{alg:ModelMap-path}, and hence, COMBSS with {\sf SubsetMapV2} always provides the same or a better solution than COMBSS with {\sf SubsetMapV1}.
\end{remark}

\section{Roles of Tuning Parameters}
\label{sec:Tuning_Pars}
In this section, we provide insights on how the tuning parameters $\delta$ and $\lambda$ influence the objective function $f_{\lambda}(\m t)$ (or, equivalently $g_{\lambda}(\m w)$) and hence the convergence of the algorithm.

\subsection{Controlling the Shape of $f_{\lambda}(\m t)$ through $\delta$}
\label{sec:delta-para}
The normalized cost  $\|\m y - X_{\m t} \tbeta_{\m t} \|_2^2/n$
provides an estimator of the error variance. For any fixed~$\m t$, we expect this variance (and hence the objective function $f_{\lambda}(\m t)$) to be almost the same for all relatively large values of $n$, particularly, in situations where the errors $\epsilon_i$ are independent and identically distributed.
This is the case at all the corner points $\m s \in \{ 0,1\}^p$, because at these corner points, from Theorem~\ref{thm:beta_to_beta},  $X_{\m s}\tbeta_{\m s} = X_{[\m s]}\hbeta_{[\m s]}$, which is independent of $\delta$. We would like to have a similar behavior at all the interior points $\m t \in (0,1)^p$ as well, so that for each $\m t$, the function $f_{\lambda}(\m t)$ is roughly the same for all large values of~$n$. Such consistent behavior is helpful in guaranteeing that the convergence paths of the gradient descent method are approximately the same for large values of $n$.
\begin{figure}[h!]
  \begin{subfigure}{0.5\linewidth}
    \centering
    \includegraphics[width=1.1\linewidth, trim=2.3cm 1.0cm -0.3cm 2.5cm, clip=true]{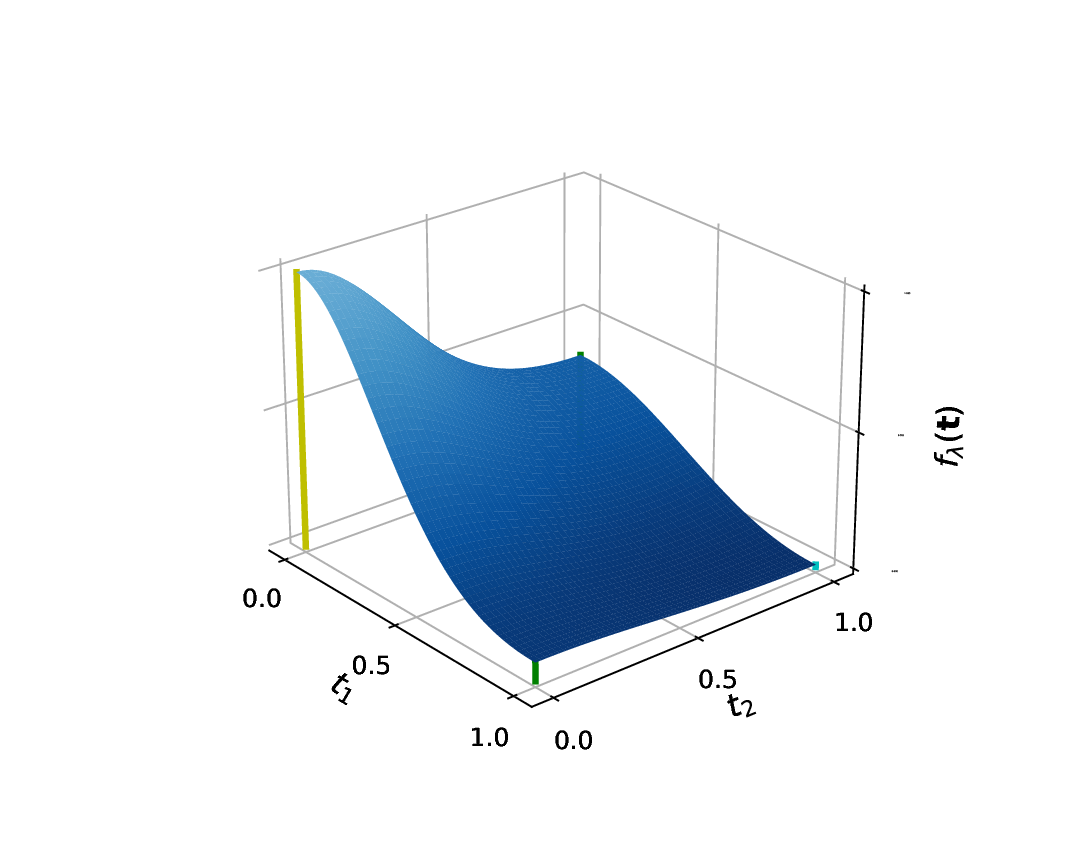}
    \caption{\small $n = 100$ and\\ $\delta = 100$}
    \label{fig:delta_choice_a}
    \vspace{2ex}
  \end{subfigure}
  \begin{subfigure}{0.5\linewidth}
    \centering
    \includegraphics[width=1.2\linewidth, trim=2.3cm 2.7cm -0.3cm 2.5cm, clip=true]{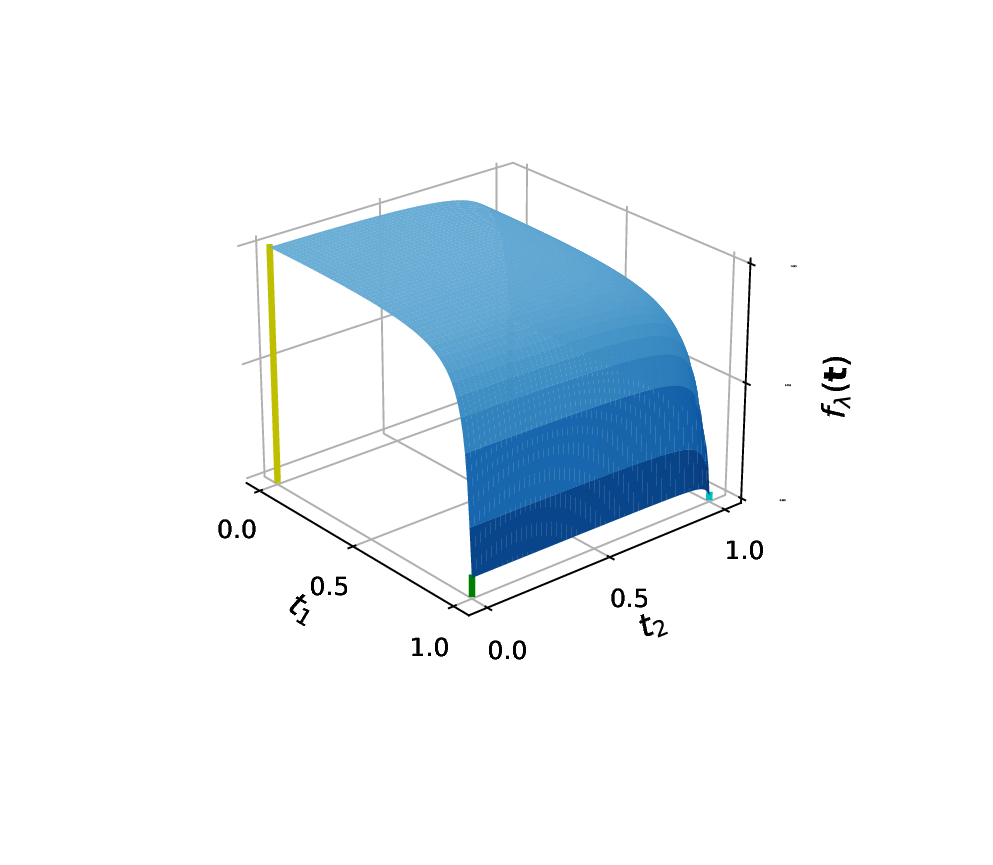}
    \caption{\small $n = 100$ and\\ $\delta = 10000$}
    \label{fig:delta_choice_b}
    \vspace{2ex}
  \end{subfigure}
  \begin{subfigure}{0.5\linewidth}
    \centering
    \includegraphics[width=1.2\linewidth, trim=2.3cm 1.0cm -0.3cm 2.5cm, clip=true]{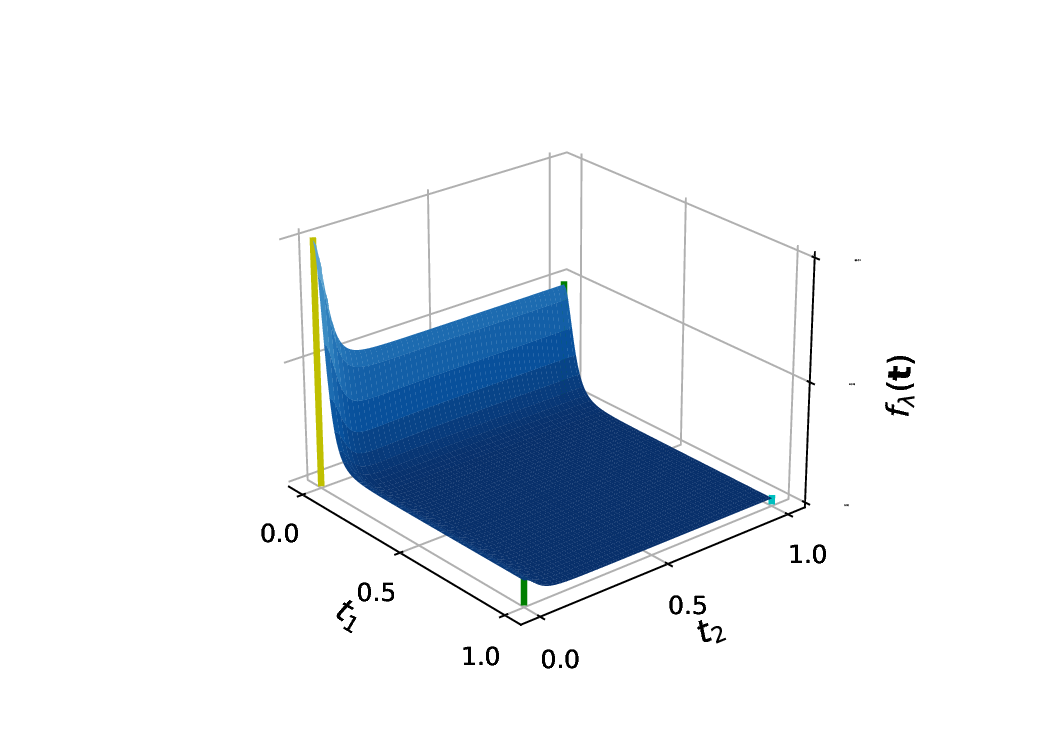}
    \caption{\small $n = 10000$ and\\ $\delta = 100$}
    \label{fig:delta_choice_c}
    \vspace{2ex}
  \end{subfigure}
  \begin{subfigure}{0.5\linewidth}
    \centering
    \includegraphics[width=1.1\linewidth, trim=2.3cm 1.5cm -0.3cm 2.5cm, clip=true]{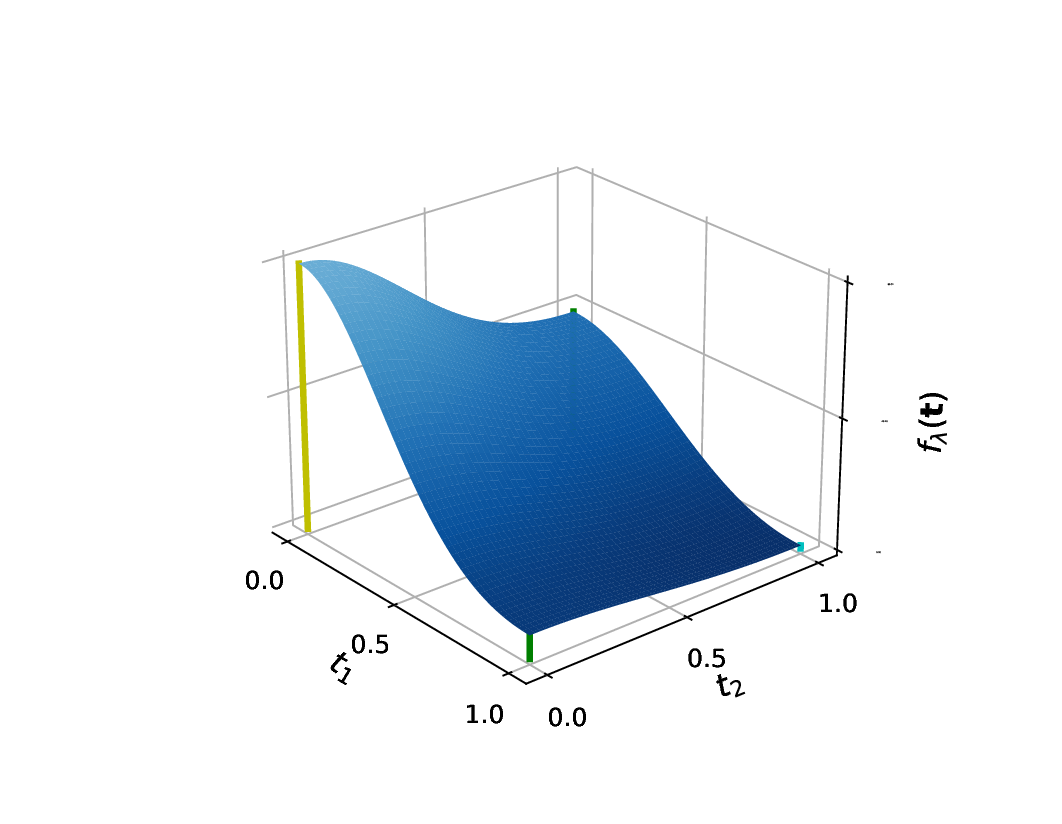}
    \caption{\small $n = 10000$ and\\ $\delta = 10000$}
    \label{fig:delta_choice_d}
    \vspace{2ex}
  \end{subfigure}
      \caption{ Illustration of how $\delta$ effects the objective function $f_{\lambda}(\m t)$ (with $\lambda = 0$). A dataset consists of $10000$ samples generated from the illustrative linear model used in Figure \ref{fig:exact-vs-combss}. For (a) and (b), $100$ samples from the same dataset are used.}
  \label{fig:delta_choice}
\end{figure}

Figure~\ref{fig:delta_choice} shows surface plots of $f_{\lambda}(\m t)$ for different values of $n$ and $\delta$ for an example dataset obtained from a linear model with $p = 2$. Surface plots (a) and (d) correspond to $\delta = n$, and as we can see, the shape of the surface of $f_{\lambda}(\m t)$ over $[0, 1]^p$ is very similar in both these plots.

To make this observation more explicit, we now show that  the function $f_{\lambda}(\m t)$, at any $\m t$, takes almost the same value for all large $n$ if we keep $\delta = c\, n$, for a fixed constant $c > 0$, under the assumption that the data samples are independent and identically distributed (this assumption simplifies the following discussion; however, the conclusion holds more generally).

Observe that
\[
\frac{1}{n}\|\m y - X_{\m t} \tbeta_{\m t} \|_2^2  = \frac{\m y^\top \m y}{n} - 2\, \bs \gamma_{\m t}^\top  \frac{X^\top \m y}{n} +  \bs \gamma_{\m t}^\top \frac{X^\top X }{n} \bs \gamma_{\m t},
\]
where $\bs \gamma_{\m t} = n^{-1}\, T_{\m t} L_{\m t}^{-1} T_{\m t} X^\top \m y$.
Under the independent and identically distributed assumption, $\m y^\top \m y/n$, $X^\top \m y/n$,
and $X^\top X/n$ converge element-wise as $n$ increases. Since $T_{\m t}$ is independent of $n$,
we would like to choose $\delta$ such that $L_{\m t}^{-1}$ also converges as $n$ increases. Now recall from \eqref{eqn:convex_comb} that
\[
L_{\m t} = T_{\m t} \lt(\frac{X^\top X}{n} \rt) T_{\m t} + \frac{\delta}{n} \lt(I - T_{\m t}^2\rt).
\]
It is then evident that the choice $\delta = c\, n$ for a fixed constant ${c}$, independent of $n$, makes $L_{\m t}$ converging as $n$ increases. Specifically, the choice $c = 1$ justifies the behavior observed in Figure \ref{fig:delta_choice}.

\subsection{Sparsity Controlling through $\lambda$}
\label{sec:lambda-para}
Intuitively, the larger the value of $\lambda$ the sparser the solution offered by COMBSS using {\sf SubsetMapV1}, when all other parameters are fixed. We now strengthen this understanding mathematically. From Theorem~\ref{thm:gradient_g},
$
 \nabla f_{\lambda}(\m t) =  \bs \zeta_{\m t}  + \lambda \m 1, \quad \m t \in (0,1)^p,
$
and
 \[
 \nabla g_{\lambda}(\m w) =  \lt(\bs \zeta_{\m t}   + \lambda \m 1\rt)\odot \lt( 2\m w \odot \exp(-\m w \odot \m w)\rt),
\]
for $\m w \in \reals^p$,
where $\bs \zeta_{\m t}$, given by \eqref{eqn:zeta_def},
is independent of $\lambda$. Note the following property of~$\bs \zeta_{\m t}$.
\begin{proposition}
\label{prop:lambda_effects}
For any $j = 1, \dots, p$, if all $t_i$ for $i \neq j$ are fixed,
\[
\lim_{t_j \downarrow 0} \bs \zeta_{\m t}(j) = 0.
\]
\end{proposition}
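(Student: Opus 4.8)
$\textbf{Proof proposal.}$

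The plan is to show that as $t_j \downarrow 0$ (with the other coordinates held fixed), every one of the four vectors $\m a_{\m t}, \m b_{\m t}, \m c_{\m t}, \m d_{\m t}$ appearing in the definition \eqref{eqn:zeta_def} of $\bs\zeta_{\m t}$ behaves nicely enough that the $j$-th component of $\bs\zeta_{\m t}$ vanishes in the limit. The key structural observation is that each of the two terms of $\bs\zeta_{\m t}$ is an element-wise product in which the $j$-th factor carries an explicit power of $t_j$. Indeed, writing $\bs\zeta_{\m t}(j) = 2\,\tbeta_{\m t}(j)\bigl(\m a_{\m t}(j)-\m d_{\m t}(j)\bigr) - 2\,\m b_{\m t}(j)\,\m c_{\m t}(j)$, I would track the $t_j$-dependence of each factor as $t_j\downarrow 0$.

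First I would establish that $L_{\m t}$ converges, as $t_j\downarrow 0$, to the matrix $L_{\m t^{(0)}}$ obtained by setting $t_j=0$; this limiting matrix is block-diagonal in the sense that its $j$-th row and column are $(\delta/n)$ times the $j$-th standard basis vector, and by Theorem~\ref{thm:positive_definiteness} it is still positive-definite (the limiting point lies in $[0,1)^p$), hence invertible, so $L_{\m t}^{-1}\to L_{\m t^{(0)}}^{-1}$ and all the quantities below converge. Consequently $\tbeta_{\m t}=L_{\m t}^{-1}(X_{\m t}^\top\m y/n)$ converges to a finite limit; in particular $\tbeta_{\m t}(j)$ stays bounded, and likewise $\m a_{\m t}, \m b_{\m t}, \m c_{\m t}, \m d_{\m t}$ all converge to finite limits. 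Now I would exploit the explicit $t_j$-factors: in the first term, $\m a_{\m t}(j)-\m d_{\m t}(j)$ is multiplied by $\tbeta_{\m t}(j)$, and one checks from $\tbeta_{\m t}=L_{\m t}^{-1}(X_{\m t}^\top \m y/n) = L_{\m t}^{-1}(T_{\m t}X^\top\m y/n)$ that the $j$-th component of $X_{\m t}^\top\m y/n$ equals $t_j\,(X^\top\m y)_j/n$, so the $j$-th entry of $\tbeta_{\m t}$ — being the $j$-th row of $L_{\m t}^{-1}$ dotted with a vector whose only $t_j$-vanishing entry is the $j$-th one — need not itself vanish, but rather one should instead observe that $\m t\odot\tbeta_{\m t}$ has $j$-th component $t_j\tbeta_{\m t}(j)\to 0$; the vectors $\m a_{\m t}$ and $\m c_{\m t}$ are built from $\m t\odot\tbeta_{\m t}$ and $\m t\odot\m a_{\m t}$ respectively, so $\m a_{\m t}$, $\m c_{\m t}$, and $\m d_{\m t}$ involve $\m t\odot(\cdot)$ structures that damp the $j$-th contributions. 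The cleanest route: show $\m c_{\m t}(j)=\bigl(L_{\m t}^{-1}(\m t\odot\m a_{\m t})\bigr)_j$ and that $\m t\odot\m a_{\m t}$ has $j$-th entry $t_j\,\m a_{\m t}(j)\to 0$, combine with the explicit structure of $L_{\m t}^{-1}$ near $t_j=0$ (its $j$-th row tends to $(n/\delta)\m e_j^\top$), to get $\m c_{\m t}(j)\to (n/\delta)\cdot 0 = 0$; then $\m b_{\m t}(j)$ stays bounded, so $\m b_{\m t}(j)\m c_{\m t}(j)\to 0$. Symmetrically, for the first term, use that $\m d_{\m t}=\bigl((X^\top X)/n-(\delta/n)I\bigr)(\m t\odot\m c_{\m t})$ and $\m a_{\m t}$ likewise trace back through $\m t\odot\tbeta_{\m t}$, so $\tbeta_{\m t}(j)(\m a_{\m t}(j)-\m d_{\m t}(j))$ — after checking that $\m a_{\m t}(j)-\m d_{\m t}(j)\to 0$ because both are $j$-th entries of matrices applied to vectors whose $j$-th entries vanish, modulo the bounded multiplier $\tbeta_{\m t}(j)$ — also tends to $0$.

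The main obstacle, and the step deserving the most care, is the behavior of the $j$-th row of $L_{\m t}^{-1}$ as $t_j\downarrow 0$: one must verify that it converges to $(n/\delta)\m e_j^\top$ (equivalently, that the limiting inverse restores the clean $j$-th row/column of the limiting $L_{\m t^{(0)}}$), so that composing with a vector whose $j$-th entry is $O(t_j)$ genuinely produces an $O(t_j)$ output in the $j$-th slot rather than an $O(1)$ leakage through the off-diagonal entries. This follows from continuity of matrix inversion on the open set of invertible matrices (Theorem~\ref{thm:positive_definiteness} guarantees invertibility throughout a neighborhood of the limit point in $[0,1)^p$) together with the explicit form of $L_{\m t^{(0)}}$, but it is the one place where the argument is not a pure "bounded $\times$ $o(1)$" bookkeeping exercise. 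Once that is in hand, assembling $\bs\zeta_{\m t}(j)\to 0$ is routine: it is a finite sum of products, each factor of which is either bounded or tends to $0$, with at least one vanishing factor per product.
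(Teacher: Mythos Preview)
Your handling of the second term $\m b_{\m t}(j)\,\m c_{\m t}(j)$ is correct and in fact more carefully justified than the paper's: you explicitly identify that the $j$-th row of $L_{\m t}^{-1}$ converges to $(n/\delta)\m e_j^\top$, and that the $j$-th entry of $\m t\odot\m a_{\m t}$ is $O(t_j)$, forcing $\m c_{\m t}(j)\to 0$.

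The gap is in the first term. You propose to make $\tbeta_{\m t}(j)\bigl(\m a_{\m t}(j)-\m d_{\m t}(j)\bigr)$ vanish by arguing that $\m a_{\m t}(j)-\m d_{\m t}(j)\to 0$, on the grounds that ``both are $j$-th entries of matrices applied to vectors whose $j$-th entries vanish.'' That reasoning fails for $\m a_{\m t}$: recall
\[
\m a_{\m t} = \lt(\frac{X^\top X}{n}\rt)(\m t\odot\tbeta_{\m t}) \;-\; \frac{X^\top\m y}{n},
\]
and the second summand $-(X^\top\m y)/n$ is a fixed vector with no $t_j$ damping whatsoever. So $\m a_{\m t}(j)$ converges to a generically nonzero limit, and there is no cancellation with $\m d_{\m t}(j)$ to rescue the claim.

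The fix is exactly the mechanism you already set up but then dismissed: show $\tbeta_{\m t}(j)\to 0$. Since $\tbeta_{\m t}=L_{\m t}^{-1}\bigl(T_{\m t}X^\top\m y/n\bigr)$, the $j$-th component is the $j$-th row of $L_{\m t}^{-1}$ dotted with a vector whose $j$-th entry is $t_j(X^\top\m y)_j/n$. Using the very convergence $(L_{\m t}^{-1})_{j,\cdot}\to (n/\delta)\m e_j^\top$ that you invoked for $\m c_{\m t}$, the off-diagonal contributions vanish and the diagonal one is $O(t_j)$, so $\tbeta_{\m t}(j)\to 0$. Then $\m a_{\m t}(j)-\m d_{\m t}(j)$ merely needs to stay bounded, which it does. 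This is precisely the route the paper takes (albeit tersely): it asserts $\tbeta_{\m t}(j)\to 0$ and $\m c_{\m t}(j)\to 0$, and concludes.
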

This result implies that for any $j = 1, \dots, p$, we have $\lim_{t_j \downarrow 0}\partial f_{\lambda}(\m t)/\partial t_j = \lambda,$
where $\lim_{t_j \downarrow 0}$ denotes the existence of the limit for any sequence of $t_j$ that converges to $0$ from the right. Since $\bs \zeta_{\m t} $ is independent of $\lambda$,
the above limit implies that there is a window $(0, a_j)$ such that the slope $\partial f_{\lambda}(\m t)/\partial t_j~>~0$ for $t_j \in (0, a_j)$ and also the window size increases (i.e., $a_j$ increases) as $\lambda$ increases. As a result, for the function $g_{\lambda}(\m w)$, there exists a constant $a'_j>0$ such that
\[
\tfrac{\partial g_{\lambda}(\m w)}{\partial w_j}
\begin{cases}
        < 0, \quad \text{for}\,\,  -a_j' < w_j < 0\\
        > 0, \quad \text{for}\,\,  0 < w_j < a_j'.
\end{cases}
\]
In other words, for positive $\lambda$, there is a `valley'  on the surface of $g_{\lambda}(\m w)$ along the line $w_j = 0$ and the valley becomes wider as $\lambda$ increases. In summary, the larger the values of $\lambda$ the more $w_j$ (or, equivalently $t_j$) have tendency to move  towards $0$ by the optimization algorithm and then a sparse model is selected (i.e, small number $k$ of variables chosen). At the extreme value $\lambda_{\max }=\left\|y\right\|^2_{2}/n$, all $t_j$ are forced towards $0$ and thus the null model will be selected.

\section{Efficient Implementation of COMBSS}
\label{sec:CG}
In this section, we focus on efficient implementation of COMBSS using the {\em conjugate gradient} method, the {\em Woodbury matrix identity}, and the {\em Banachiewicz Inversion Formula}.
\subsection{Low- vs High-dimension}
\label{sec:low-to-high}
Recall the expression of $L_{\m t}$ from \eqref{eqn:Lt}:
\begin{align*}
L_{\m t} &= \frac{1}{n}\lt[X^{\top}_{\m t} X_{\m t} + \delta\lt( I - T_{\m t}^2\rt) \rt].
\end{align*}
We have noticed earlier from Theorem~\ref{thm:gradient_g} that for computing $\nabla g_{\lambda}(\m w)$, twice we evaluate matrix-vector products of the form $L_{\m t}^{-1} \m u$, which is the unique solution of the linear equation $L_{\m t} \m z = \m u$. Solving linear equations efficiently is one of the important and well-studied problems in the field of linear algebra.
Among many elegant approaches for solving linear equations, the conjugate gradient method is well-suited for our problem as $L_{\m t}$ is symmetric positive-definite; see, for example, \cite{Golub1996}.

The running time of the conjugate gradient method for solving the linear equation $A \m z = \m u$ depends on the dimension of $A$. For our algorithm, since $L_{\m t}$ is of dimension $p \times p$,  the conjugate gradient method can return a good approximation of $L_{\m t}^{-1}\m u$ within $O(p^2)$ time by fixing the maximum number of iterations taken by the conjugate gradient method. This is true for both low-dimensional models (where $p < n$) and high-dimensional models (where $n < p$).

We now specifically focus on high-dimensional models and transform the problem of solving the $p$-dimensional linear equation $L_{\m t}\m z = \m u$ to the problem of solving an $n$-dimensional linear equation problem. This approach is based on
a well-known result in linear algebra called the Woodbury matrix identity. Since we are calling the gradient descent method for solving a $n$-dimensional problem, instead of $p$-dimensional, we can achieve a much lower overall computational complexity for the high-dimensional models. The following result is a consequence of the Woodbury matrix identity, which is stated as Lemma~\ref{lem:Woodbury_ind} in Appendix~\ref{app:Proofs}.

\begin{theorem}
\label{thm:high_to_low_dim}
For $\m t \in [0,1)^p$, let $S_{\m t}$ be a $p$-dimensional diagonal matrix with the $j$th diagonal element being $n/\delta (1 - t_j^2)$ and
$\wt L_{\m t} = I + X_{\m t} S_{\m t}  X_{\m t}^\top/n.$
Then,
\begin{align*}
    L_{\m t}^{-1} \m u &= \lt(S_{\m t}\m u\rt) - \frac{1}{n} S_{\m t} X_{\m t}^{\top} \wt L_{\m t}^{-1} \lt(X_{\m t} S_{\m t} \m u\rt).
\end{align*}
\end{theorem}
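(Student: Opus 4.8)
The plan is to apply the Woodbury matrix identity directly, after isolating the low-rank part of $L_{\m t}$. First I would split $L_{\m t}$ into a diagonal piece and a low-rank piece: setting $D_{\m t} := \tfrac{\delta}{n}\lt(I - T_{\m t}^2\rt)$, expression \eqref{eqn:Lt} becomes $L_{\m t} = D_{\m t} + \tfrac1n X_{\m t}^\top X_{\m t}$. For $\m t \in [0,1)^p$ every diagonal entry $\tfrac{\delta(1-t_j^2)}{n}$ of $D_{\m t}$ is strictly positive, so $D_{\m t}$ is invertible and $D_{\m t}^{-1} = S_{\m t}$, the diagonal matrix with $j$-th entry $\nicefrac{n}{\delta(1-t_j^2)}$ introduced in the statement.

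Next I would write the low-rank piece in the canonical Woodbury form $D_{\m t} + UCV$ with $U = X_{\m t}^\top$ (size $p\times n$), $C = \tfrac1n I_n$, and $V = X_{\m t}$ (size $n\times p$), since $\tfrac1n X_{\m t}^\top X_{\m t} = X_{\m t}^\top\lt(\tfrac1n I_n\rt)X_{\m t}$. The Woodbury identity then gives
\[
L_{\m t}^{-1} = D_{\m t}^{-1} - D_{\m t}^{-1} X_{\m t}^\top \lt(n I_n + X_{\m t} D_{\m t}^{-1} X_{\m t}^\top\rt)^{-1} X_{\m t} D_{\m t}^{-1}.
\]
Substituting $D_{\m t}^{-1} = S_{\m t}$ and factoring out the scalar $n$ via $n I_n + X_{\m t} S_{\m t} X_{\m t}^\top = n\,\wt L_{\m t}$ yields $L_{\m t}^{-1} = S_{\m t} - \tfrac1n S_{\m t} X_{\m t}^\top \wt L_{\m t}^{-1} X_{\m t} S_{\m t}$, and applying both sides to $\m u$ gives exactly the asserted formula.

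To make the use of Woodbury legitimate I would verify the invertibility hypotheses: $D_{\m t}$ and $C = \tfrac1n I_n$ are invertible as just noted; $\wt L_{\m t} = I + \tfrac1n X_{\m t} S_{\m t} X_{\m t}^\top$ is symmetric positive-definite because $S_{\m t}$ is a positive diagonal matrix and hence $X_{\m t} S_{\m t} X_{\m t}^\top \succeq 0$, so $\wt L_{\m t}^{-1}$ exists; and $L_{\m t}$ itself is invertible on $[0,1)^p$ by Theorem~\ref{thm:positive_definiteness}. There is no real obstacle here — it is a textbook application — so the one place worth care is bookkeeping the factor $\tfrac1n$, which appears simultaneously in $C$, in the definition of $L_{\m t}$, and in the definition of $\wt L_{\m t}$; I would track it explicitly rather than absorb it, to avoid an off-by-$n$ slip when passing from $\lt(n I_n + X_{\m t} S_{\m t} X_{\m t}^\top\rt)^{-1}$ to $\tfrac1n \wt L_{\m t}^{-1}$.
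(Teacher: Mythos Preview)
Your proposal is correct and follows essentially the same approach as the paper: both isolate the invertible diagonal part $\tfrac{\delta}{n}(I - T_{\m t}^2)$ with inverse $S_{\m t}$, write the remainder $\tfrac{1}{n}X_{\m t}^\top X_{\m t}$ as a low-rank correction, and apply the Woodbury identity to obtain $L_{\m t}^{-1} = S_{\m t} - \tfrac{1}{n} S_{\m t} X_{\m t}^\top \wt L_{\m t}^{-1} X_{\m t} S_{\m t}$. The only cosmetic difference is that the paper absorbs the $\tfrac{1}{n}$ into the factors via $B_1 = B_2^\top = \tfrac{1}{\sqrt{n}} X_{\m t}^\top$ with $C = I$, whereas you place it in $C = \tfrac{1}{n} I_n$; both lead to the same $n\,\wt L_{\m t}$ in the middle inverse.
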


The above expression suggests that instead of solving the $p$-dimensional problem $L_{\m t}\m z = \m u$ directly, we can first solve the $n$-dimensional problem $\wt L_{\m t} \m z =  \lt(X_{\m t} S_{\m t} \m u\rt)$ and substitute the result in the above expression to get the value of $L_{\m t}^{-1} \m u$.

\subsection{A Dimension Reduction Approach}
\label{sec:reduce-complexity}
During the execution of the gradient descent algorithm, Step~1 of Algorithm~\ref{alg:COMBSS}, some of $w_j$ (and hence the corresponding $t_j$) can reach zero. Particularly, for basic gradient descent and similar methods, once $w_j$ reaches zero it remains zero until the algorithm terminates, because the update of $\m w$ in the $l$th iteration of the basic gradient descent depends only on the gradient $g_{\lambda}(\m w^{(l)})$, whose $j$th element
\begin{align}
\label{eqn:gradient-at-w-zero}
\frac{\partial g_{\lambda}(\m w^{(l)})}{\partial w_j} = 0\quad  \text{if}\,\, w^{(l)}_j = 0.
\end{align}

Because \eqref{eqn:gradient-at-w-zero} holds, we need to focus only on $\partial g_{\lambda}(\m w)/\partial w_j$ associated with $w_j \neq 0$ in order to reduce the cost of computing the gradient $\nabla g_{\lambda}(\m w)$. To simplify the notation, let $\scP = \{1, \dots, p \}$ and for any $\m t \in [0,1)^p$, let $\scZ_{\m t}$ be the set of indices of the zero elements of $\m t$, that is,
\begin{align}
\scZ_{\m t} = \{j: t_{j}=0, j=\scP \}. \label{eqn:Z0}
\end{align}
Similar to the notation used in Theorem~\ref{thm:beta_to_beta}, for a vector $\m u \in \reals^p$,  we write $(\m u)_{\splus}$ (respectively, $(\m u)_{0}$) to denote the vector of dimension ${p - \lvert\scZ_{\m t}\rvert}$ (respectively, $\lvert\scZ_{\m t}\rvert$) constructed from $\m u$ by removing all its elements with the indices in $\scZ_{\m t}$ (respectively, in $\scP\setminus \scZ_{\m t}$). Similarly, for a matrix $A$ of dimension $p\times p$, we write $(A)_{\splus}$ (respectively, $(A)_0$) to denote the new matrix constructed from $A$ by removing its rows and columns with the indices in $\scZ_{\m t}$ (respectively, in $\scP\setminus \scZ_{\m t}$). Then we have the following result.

\begin{theorem}
\label{thm:zero-mapsto-zero}
Suppose $\m t \in [0,1)^p$. Then,
\[
\lt(L_{\m t} \rt)_{\splus} = \frac{1}{n} \lt[ \lt( T_{\m t} \rt)_{\splus} \lt(X^\top X \rt)_{\splus} \lt( T_{\m t}\rt)_{\splus} + \delta \lt(I - \lt( T_{\m t}\rt)_{\splus} \rt)\rt].
\]
Furthermore, we have
\begin{align}
&\begin{aligned}
    &\lt(L_{\m t}^{-1}\rt)_0 = \frac{n}{\delta} I,\\
    &\lt(L_{\m t}^{-1}\rt)_{\splus} = \lt(\lt(L_{\m t}\rt)_{\splus}\rt)^{-1},
\end{aligned} \label{eqn:L0-Lplus}\\
&\, \nonumber\\
&\begin{aligned}
    &\lt(\tbeta_{\m t}\rt)_0 = \m 0, \\  &\lt(\tbeta_{\m t}\rt)_{\splus} = \lt(\lt(L_{\m t}\rt)_{\splus} \rt)^{-1} \lt( ({\m t})_{\splus} \odot \lt(\frac{X^\top \m y}{n} \rt)_{\splus}\rt),
\end{aligned}
 \label{eqn:beta0-betaplus}\\
&\, \nonumber\\
 &\begin{aligned}
     &\lt(\m c_{\m t}\rt)_0 = \m 0,\\
     &\lt(\m c_{\m t}\rt)_{\splus} = \lt(\lt(L_{\m t}\rt)_{\splus} \rt)^{-1} \lt( ({\m t})_{\splus} \odot \lt(\m a_{\m t}\rt)_{\splus}\rt).
 \end{aligned}
\label{eqn:c0-cplus}
\end{align}
\end{theorem}

In Theorem~\ref{thm:zero-mapsto-zero}, \eqref{eqn:L0-Lplus} shows that for every $j \in \scZ_{\m t}$, all the off-diagonal elements of the $j$th row as well as the $j$th column of $L^{-1}_{\m t}$ are zero while its $j$th diagonal element is $n/\delta$, and all other elements of $L_{\m t}^{-1}$ (which constitute the sub-matrix $ \lt(L_{\m t}^{-1}\rt)_{\splus}$) depend only on  $\lt(L_{\m t}\rt)_{\splus}$, which can be computed using only the columns of the design matrix $X$ with indices in $\scP\setminus \scZ_{\m t}$. As a consequence, \eqref{eqn:beta0-betaplus} and \eqref{eqn:c0-cplus} imply that computing $\tbeta_{\m t}$ and $\m c_{\m t}$ is equal to  solving $p_{\splus}$-dimensional linear equations of the form $\lt(L_{\m t}^{-1}\rt)_{\splus} \m z = \m v$, where $p_{\splus} = p - \lvert\scZ_{\m t}\rvert$.
Since $p_{\splus} \leq p$, solving such a $p_{\splus}$-dimensional linear equation using the conjugate gradient can be faster than solving the original $p$-dimensional linear equation of the form $L_{\m t} \m z =  \m u$.

In summary, for a vector $\m t \in [0, 1)^p$ with some elements being $0$, the values of $f_\lambda(\m t)$ and $\nabla f_{\lambda}(\m t)$ do not depend on the columns $j$ of $X$ where $t_j = 0$. Therefore, we can reduce the computational complexity by removing all the columns $j$ of the design matrix $X$ where $t_j = 0$.


\subsection{Making Our Algorithm Fast}
\label{sec:putting-together}

\begin{figure}[h!]
  \begin{subfigure}{0.5\textwidth}
    \centering
   \includegraphics[height=0.9\linewidth]{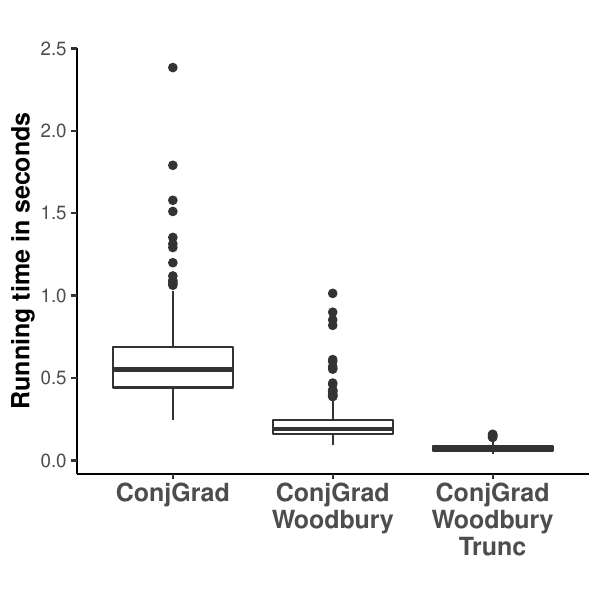}
    \caption{\small $p=1000$}
    \label{fig:run1000}
  \end{subfigure}
~
  \begin{subfigure}{0.5\textwidth}
    \centering
    \includegraphics[height=0.9\linewidth]{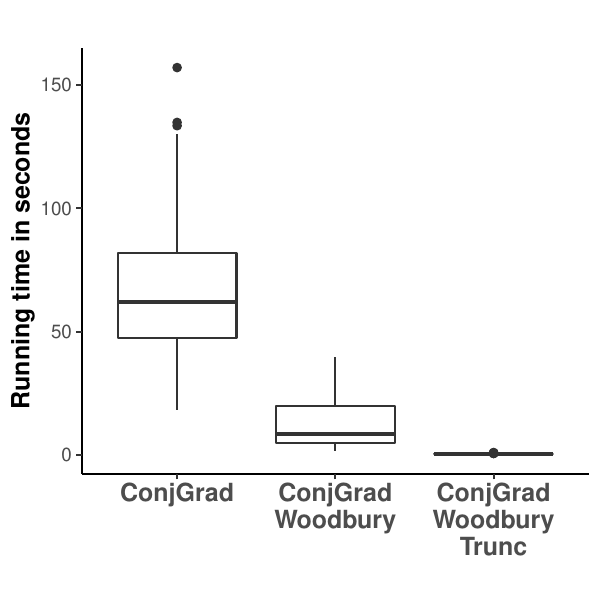}
    \caption{\small $p=5000$}
    \label{fig:run500}
  \end{subfigure}
   \caption{Running times of our algorithm at $\lambda = 0.1$ for an example dataset using the adam optimizer, a popular gradient based method. These boxplots are based on 300 replications. Here we compare running times for COMBSS with {\sf SubsetMapV1} using only conjugate gradient (ConjGrad), conjugate gradient with Woodbury matrix identity (ConjGrad-Woodbury), and conjugate gradient with both Woodbury matrix identity and truncation improvement (ConjGrad-Woodbury-Trunc). For the truncation, $\eta = 0.001$. { The dataset for this experiment is the same dataset used for Figure~\ref{fig:Conv_t}.
   }
   }
  \label{fig:run-time}
\end{figure}
In Section~\ref{sec:reduce-complexity}, we noted that when some elements $t_j$ of $\m t$ are zero, it is faster to compute the objective functions $f_\lambda(\m t)$ and $g_\lambda(\m t)$ and their gradients $\nabla f_\lambda(\m t)$ and $\nabla g_\lambda(\m t)$ by ignoring the columns $j$ of the design matrix $X$.
In Section~\ref{sec:lambda-para}, using Proposition~\ref{prop:lambda_effects}, we further noted that for any ${\lambda > 0}$ there is a `valley' on the surface of $g_{\lambda}(\m w)$ along $w_j = 0$ for all $j  = 1, \dots, p$, and thus for any $j$, when $w_j$  (or, equivalently, $t_j$) is sufficiently small during the execution of the gradient descent method, it will eventually become zero.
Using these observations, in the implementation of our method, to reduce the computational cost of estimating the gradients, it is wise to map $w_j$ (and $t_j$) to $0$ when $w_j$ is almost zero. We incorporate this truncation idea into our algorithm as follows.

We first fix a small constant $\eta$, say at $0.001$. As we run the gradient descent algorithm, when $t_j$ becomes smaller than $\eta$ for some $j \in \scP$, we take $t_j$ and $w_j$ to be zero and we stop updating them; that is, $t_j$ and $w_j$ will continue to be zero until the gradient descent algorithm terminates. In each iteration of the gradient descent algorithm, the design matrix is updated by removing all the columns corresponding to zero $t_j$'s. If the algorithm starts at $\m w$ with all non-zero elements, the effective dimension $p_{\splus}$, which denotes the number of columns in the updated design matrix, monotonically decreases starting from $p$. In an iteration, if $p_{\splus} > n$, we can use Theorem~\ref{thm:high_to_low_dim} to reduce the complexity of computing the gradients. However, when $p_{\splus}$ falls below $n$, we directly use conjugate gradient for computing the gradients without invoking Theorem~\ref{thm:high_to_low_dim}.

Using a dataset, Fig.~\ref{fig:run-time} illustrates the substantial improvement in the speed of our algorithm  when the above mentioned improvement ideas are incorporated in its implementation.

\begin{remark}
\normalfont
From our simulations over the range of scenarios considered in Section~\ref{sec:sims},
we have observed that the performance of our method does not vary significantly when $\eta$ is close to zero.
In particular, we noticed that any value of $\eta$ close to or less than $0.001$  is a good choice. Good, in the sense that, if $\m s_\eta \in \{ 0, 1\}^p$ is the model selected by COMBSS, then we rarely observed $\m s_\eta \neq \m s_0$. Thus, the Hamming distance between $\m s_\eta$ and $\m s_0$ is zero when $\eta$ is close to or smaller than $0.001$, except in few generated datasets). This holds when comparing the estimated true model and when comparing the best subsets.
\end{remark}

\section{Simulation Experiments}
\label{sec:sims}

Our method is available through Python and R codes via GitHub\footnote{Python code:
\href{https://github.com/saratmoka/COMBSS-Python-VIGNETTE}{https://github.com/saratmoka/COMBSS-Python-VIGNETTE},\\ R code: \href{https://github.com/benoit-liquet/COMBSS-R-VIGNETTE}{https://github.com/benoit-liquet/COMBSS-R-VIGNETTE}}. { The code includes examples where $p$ is as large as of order 10,000. This code further allows to replicate our simulation results presented in this section and in Appendix~\ref{sec:simulations}.}

{ In Appendix~\ref{sec:simulations}, we focused on demonstrating (using $\mathsf{SubsetMapV1}$) the efficacy in predicting the true model of the data.}
Here, our focus is on demonstrating the efficacy of our method in retrieving best subsets of given sizes, meaning our ability to solve \eqref{eqn:bss} using $\mathsf{SubsetMapV2}$. We compare our approach to forward  selection, Lasso, mixed integer optimization and L0Learn \citep{hazimeh2020fast}.

\subsection{Simulation design}
\label{sec:sim-design}

The data is generated from the linear model:
\begin{equation}
\m y=X \bbeta+\boldsymbol{\epsilon}, \quad \text { where } \epsilon \sim \mathcal{N}_{n}\left(0, \sigma^2 { I}\right).
\label{trueModel}
\end{equation}
Here, each row of the predictor matrix $X$ is generated from a multivariate normal distribution with zero mean and covariance matrix $\Sigma$ with diagonal elements $\Sigma_{j,j} = 1$ and off-diagonal elements $\Sigma_{i,j} = \rho^{\vert i-j \vert}$, $i \neq j$, for some correlation parameter $\rho \in (-1, 1)$.
{ Note that the noise $\epsilon$ is a $n$-dimensional vector of independent and identically distributed normal variables with zero mean and variance $\sigma^2$.} In order to investigate a challenging situation, we use $\rho=0.8$ to mimic strong correlation between predictors. For each simulation, we fix the signal-to-noise ratio (\textsf{SNR}) and compute the variance $\sigma^2$ of the noise $\boldsymbol{\epsilon}$ using
\[
\sigma^2 = \frac{\bbeta^\top \Sigma \bbeta}{\textsf{SNR}}.
\]

We consider the following two simulation settings:
\begin{itemize}
\item {\bf Case 1:} The first $k_0 =10$ components of $\bbeta$ are equal to $1$ and all other components of $\bbeta$ are equal to $0$.

\item { \bf Case 2:} The first $k_0 =10$ components of $\bbeta$ are given by $\beta_i=0.5^{i-1}$, for $i=1,\ldots,k_0$ and all other components of $\bbeta$ are equal to $0$.
 \end{itemize}

Both { Case~1} and  { Case~2} assumes strong correlation between the active predictors.  { Case~2} differs from { Case~1} by presenting a signal decaying exponentially to $0$.

For both these cases, we investigate the performance of our method in low- and high-dimensional settings.
For the low-dimensional setting, we take $n=100$ and $p=20$ for $\mathsf{SNR} \in \{0.5, 1,2,\dots,8\}$, while for the high-dimensional setting, $n=100$ and $p=1000$ for $\mathsf{SNR} \in \{2,3,\dots,8\}$.

In the low-dimensional setting, { the forward stepwise selection (FS)} and { the mixed integer optimization (MIO)} were tuned over $k=0,\ldots,20$. In this simulation we ran MIO through the R package \texttt{bestsubset} offered in \cite{bestsubsetR} while we ran L0Learn through the R package \texttt{L0Learn} offered in \cite{lolearn}. For the high dimensional setting, we do not include MIO due to time computational constraints posed by MIO.

In low- and high-dimensional settings, the Lasso was tuned for 50 values of  $\lambda$ ranging from $\lambda_{\max }=\left\|X^{T} \m y\right\|_{\infty}$ to a small fraction of $\lambda_{\max }$ on a log scale, as per the default in \texttt{bestsubset} package.
In both the low- and high-dimensional settings, COMBSS with {\sf SubsetMapV2} was called four times starting at four different initial points $\m t^{(0)}$: $(0.5, \dots, 0.5)^\top$, $(0.99, \dots 0.99)^\top$, $(0.75, \dots, 0.75)^\top$, and $(0.3, \dots, 0.3)^\top$.
For each call of COMBSS, we used at most $24$ values of  $\lambda$ on a dynamic grid as follows. Starting from $\lambda_{\max}=\left\| \m y\right\|^2_{2}/n$, half of $\lambda$ values were generated by { $\{\lambda_l= \lambda_{\max }/ 2^{l}$}, $l=1,\dots,12\}$. From this sequence, the remaining $\lambda$ values were created by $\{(\lambda_{l+1} + \lambda_{l})/2: l=1,\dots,12\}$.

\subsection{Low-dimensional case}

In low dimensional case, we use the exhaustive method to find the exact solution of the best subset for any subset size ranging from $1$ to $p$. Then, we assess our method in retrieving the exact best subset for each subset size. Figure \ref{fig:lowcase2}, shows the frequency of retrieving the exact best subset (provided by exhaustive search) for any subset size from $k=1,\dots,p$, for { Case~1}, over 200 replications. For each {\sf SNR} level, MIO as expected retrieves perfectly the optimal best subset of any model size. Then COMBSS gives the best results to retrieve the best subset compared to FS, Lasso and L0Learn.  {We can also observe that each of these curves follow a U-shape, with the lowest point approximately at the middle. This behaviour seems to be related to possible $\binom{p}{k}$ choices for each subset size $k=1,\ldots,p$, as at each $k$ we have $\binom{p}{k}$ options (corner points on $[0, 1]^p$) to explore.}
Similar behaviours are reported for the low-dimensional setting of { Case~2} in Figure~\ref{fig:lowcase3}.

\begin{figure*}[h!]
    \centering
    \includegraphics[height=1.25\linewidth]{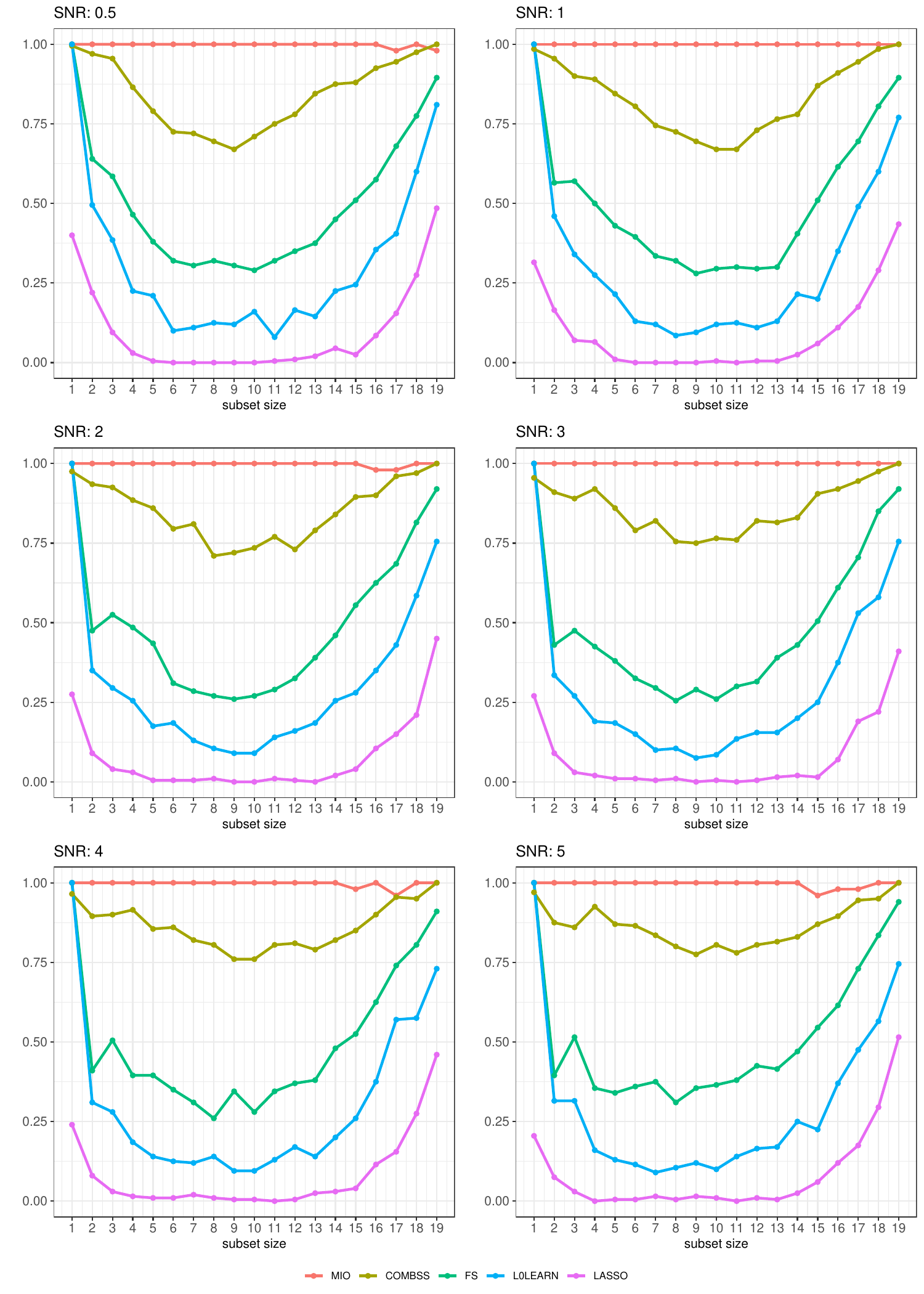}
   \caption{Frequency (over 50 replications) of retrieving the exact best subset for any subset size from $k=1,\ldots,p$ for { Case 1}.}
 \label{fig:lowcase2}
\end{figure*}

\begin{figure*}[h!]
    \centering
    \includegraphics[height=1.25\linewidth]{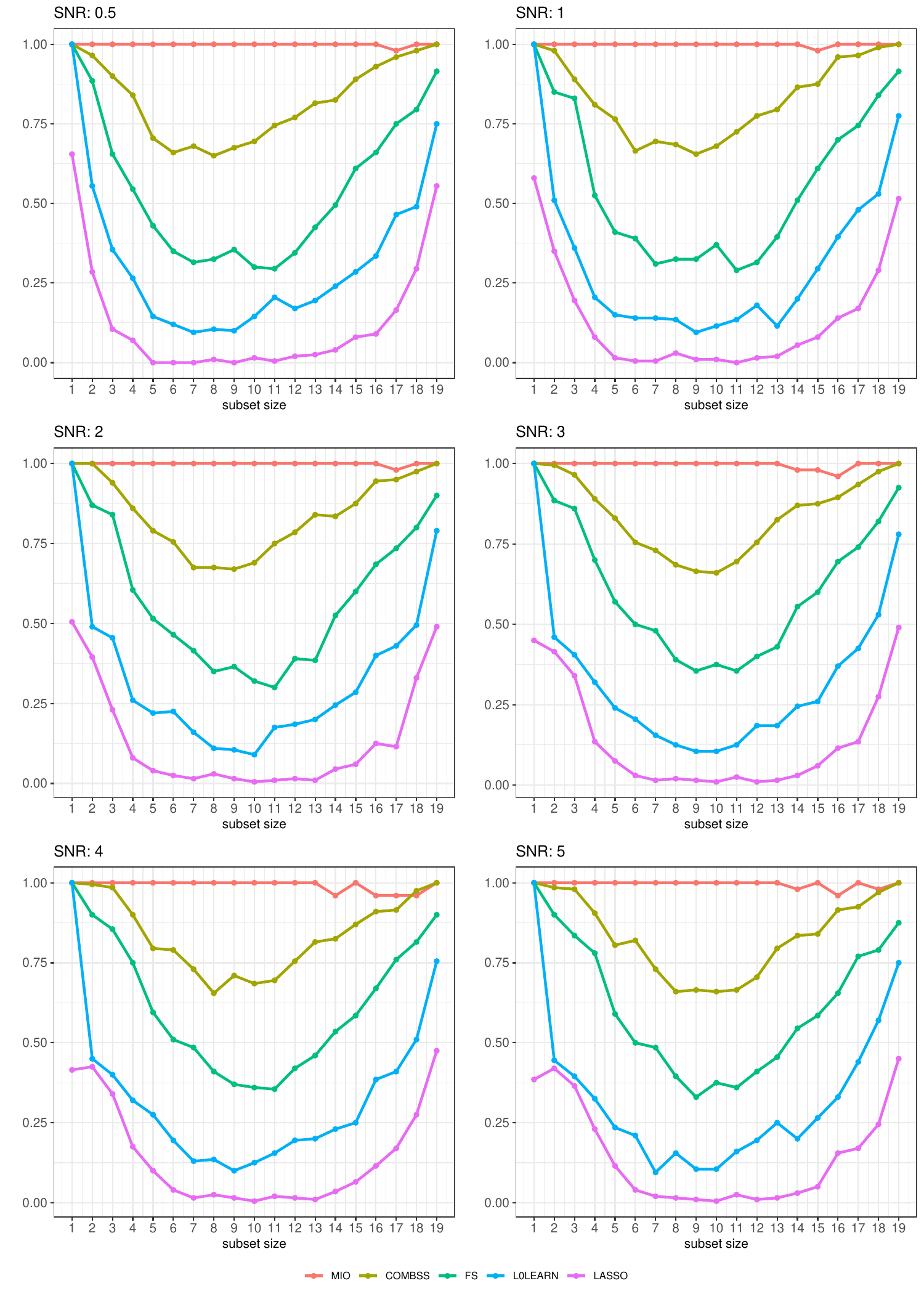}
   \caption{Frequency (over 50 replications) of retrieving the exact best subset for any subset size from $k=1,\ldots,p$ for { Case 2}.}
 \label{fig:lowcase3}
\end{figure*}

\begin{figure*}[h!]
    \centering
    \includegraphics[height=0.55\textwidth]{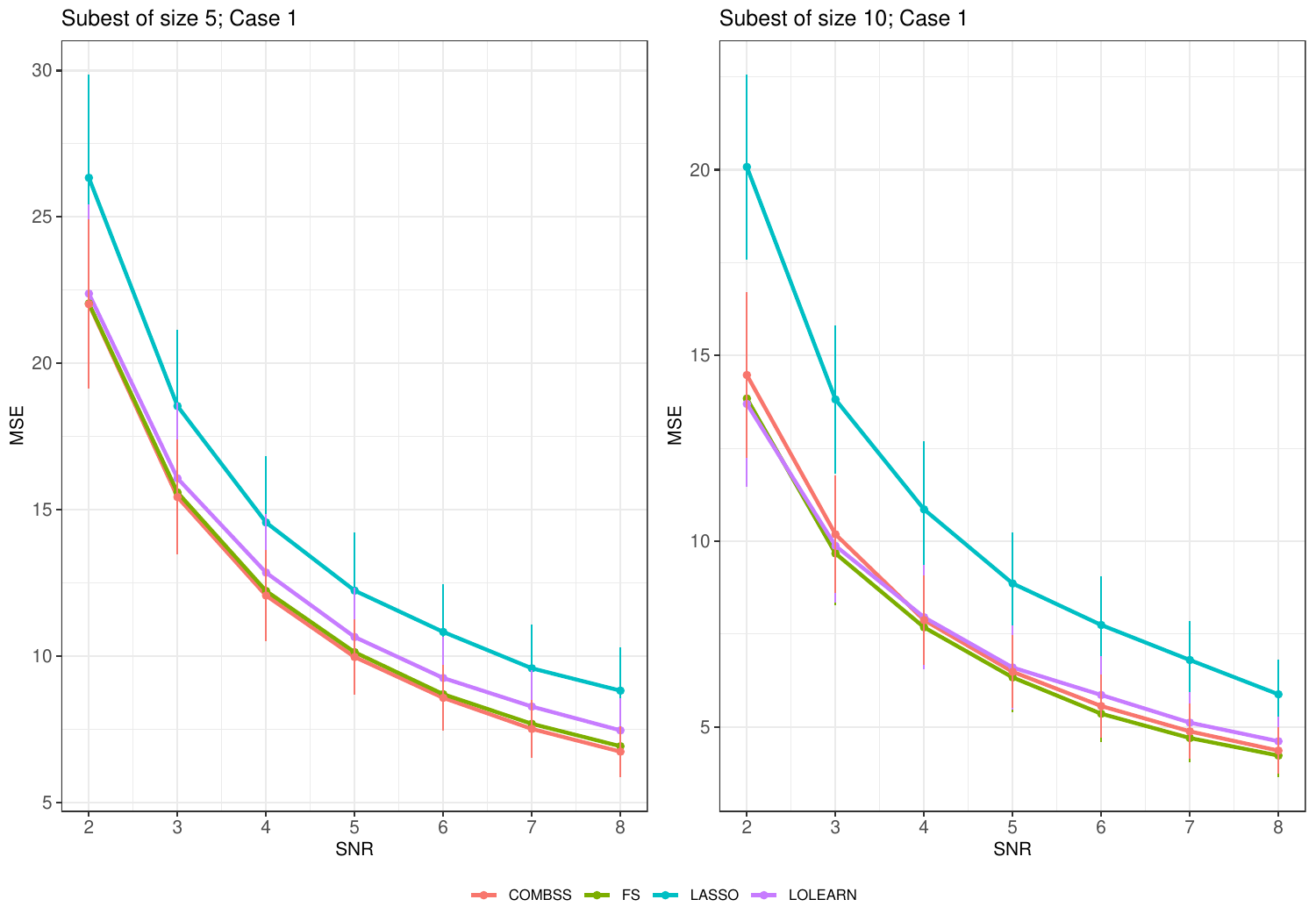}
    \includegraphics[height=0.55\textwidth]{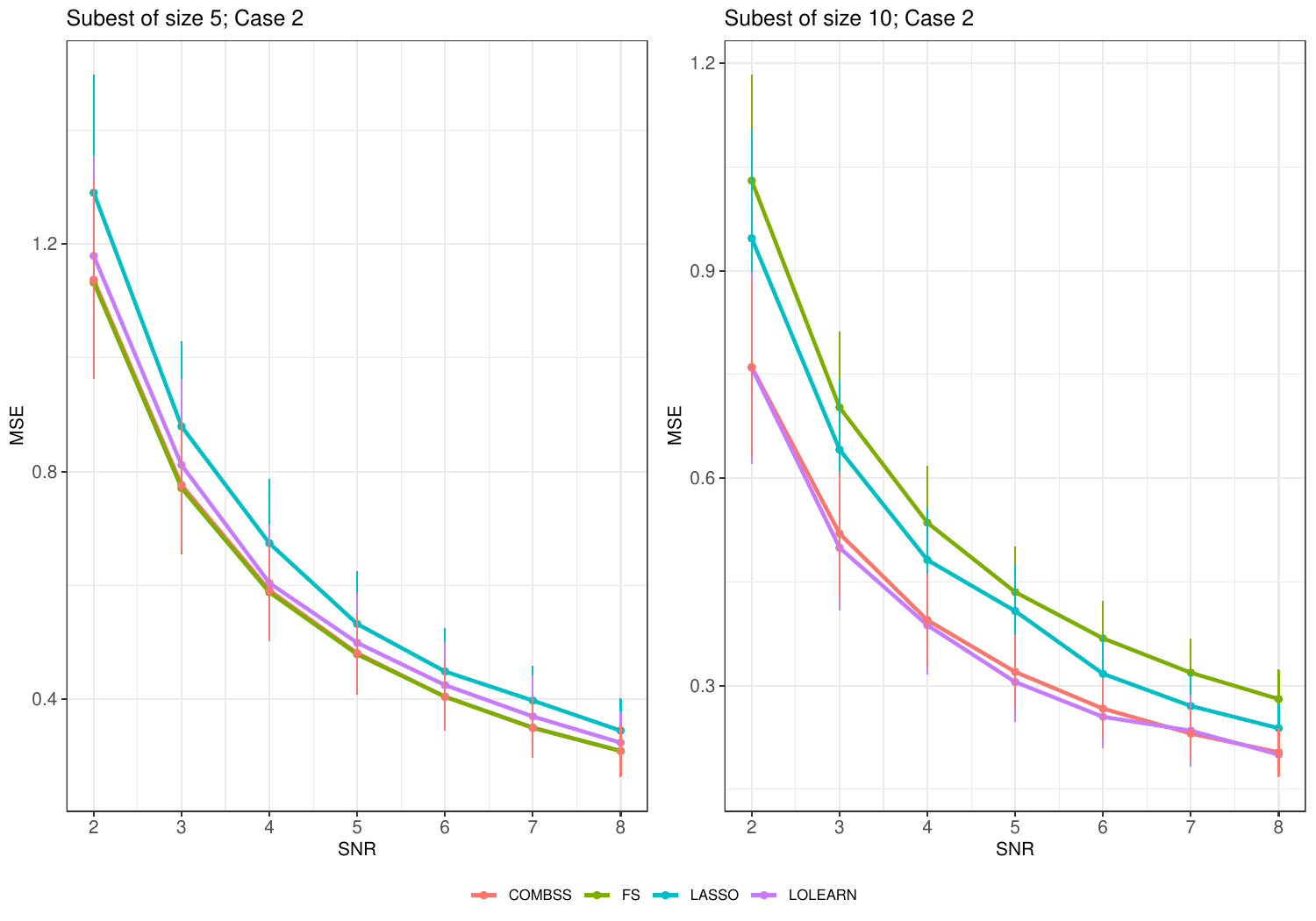}
   \caption{Ability of COMBSS (using {\sf SubsetMapV2}) for providing a {\textit{competing}} best subset for subset sizes 5 and 10 in comparison to FS, Lasso and L0Learn. The top plots are for { Case~1} while the bottom plots are for { Case~2}.}
 \label{fig:high}
\end{figure*}

\subsection{High-dimensional case}
{ To assess the performance of our method in retrieving a {\textit{competitive}} best subset, we compare  the best subset  obtained from COMBSS with other methods for two different subset sizes: $5$ and~$10$, over 50 replications. Note that the exact best subset is unknown for the high dimensional case since it is computationally impractical to conduct an exhaustive search even for moderate subset sizes when $p=1000$. Hence, for this comparison,} we use the mean squared error (MSE) of the dataset to evaluate which method is providing a better subset for size 5 and 10. Figure~\ref{fig:high} presents these results over 50 replications for {\sf SNR} values from 2 to 8.  {As expected the MSE of all methods is decreasing when SNR is increasing}. Overall, COMBSS is consistently same or better than other methods for providing a {\textit{competing}} best subset. On the other hand none of the alternative methods is consistent across all the cases.

In this high-dimensional setting, as mentioned earlier, deploying MIO, which is based on the Gurobi optimizer, proves impractical (see \cite{hastie2020best}). This is due to its prohibitively long running time, extending into the order of hours. In stark contrast,
COMBSS exhibits running times of a few seconds for both the cases of the simulation settings: approximately $4.5$ seconds with {\sf SubsetMapV1} (for predicting the true model) and approximately $7$ seconds with {\sf SubsetMapV2} (for best subset selection). We have observed that for COMBSS, {\sf SubsetMapV1} operates at approximately twice the speed of {\sf SubsetMap2}.
Other existing methods demonstrate even faster running times, within a fraction of a second, but with lower performance compared to COMBSS. In summary, for best subset selection, COMBSS stands out as the most efficient among the methods that can run within a few seconds. Similarly, in predicting the true model, we believe that the consistently strong performance of COMBSS positions it as a crucial method, particularly when compared to other faster methods like Lasso.

\section{Conclusion and Discussion}
\label{sec:conclusions}
In this paper, we have introduced COMBSS, a novel continuous optimization method towards best subset selection in linear regression.
The key goal of COMBSS is to extend the highly difficult discrete constrained best subset selection problem to an unconstrained continuous optimization problem. In particular, COMBSS involves extending the objective function of the best subset selection, which is defined at the corners of the hypercube $[0,1]^p$, to a differentiable function defined on the whole hypercube. For this extended function, starting from an interior point, a gradient descent method is executed to find a corner of the hypercube where the objective function is minimum.

In this paper, our simulation experiments highlight the ability of COMBSS with {\sf SubsetMapV2} for retrieving the ``exact'' best subset for any subset size in comparison to four existing methods: Forward Stepwise (FS), Lasso, L0Learn, and Mixed Integer Optimization (MIO). In Appendix~\ref{sec:simulations}, we have presented several simulation experiments in both low-dimensional and high-dimensional setups to illustrate the good performance of COMBSS with {\sf SubsetMapV1} for predicting the true model of the data in comparison to FS, Lasso, L0Learn, and MIO. Both of these empirical studies emphasize the potential of COMBSS for feature extractions. { In addition to these four methods, we have also explored with the minimax concave penalty (MCP) and smoothly clipped absolute deviation (SCAD), which are available through the {\sf R} package {\sf ncvreg}; refer to \cite{BH2011} for details of these two methods. In our simulation studies, we omitted the results for both MCP and SCAD, as their performance, although somewhat similar to the performance of Lasso, did not compete with COMBSS for best subset selection and for predicting the true model parameters.}

{In our algorithm, the primary operations involved are the matrix-vector product, the vector-vector element-wise product, and the scalar-vector product. Particularly, we note that most of the running time complexity of COMBSS comes from the application of the conjugate gradient method for solving linear equations of the form $A{\m z} = {\m u}$ using off-the-shelf packages. The main operation involved in conjugate gradient is the matrix-vector product $A{\m u}$, and such operations are known to execute faster on graphics processing unit (GPU) based computers using parallel programming. A future GPU based implementation of COMBSS could substantially increase the speed of our method. Furthermore, application of stochastic gradient descent \citep{Bottou2012} instead of gradient descent and randomized Kaczmarz algorithm (a variant of stochastic gradient) \citep{Strohmer2009Random} instead of conjugate gradient has potential to increase the speed of COMBSS as the stochastic gradient descent methods take just one data sample in each iteration.}

A future direction for finding the best model of a given fixed size $k$ is to explore different options for the penalty term of the objective function $f_{\lambda}(\m t)$. Ideally, if we select a sufficiently large penalty for $\sum_{j = 1}^p t_j > k$ and $0$ otherwise, we can drive the optimization algorithm towards a model of size $k$ that lies along the hyperplane given by $\sum_{j =1}^p t_j = k$. Because such a discrete penalty is not differentiable, we could
use smooth alternatives. For instance, the penalty could be taken to be $\lambda (k - \sum_{j =1}^p t_j)^2$ when $\sum_{j = 1}^p t_j > k$ and $0$ otherwise, for a tuning parameter $\lambda > 0$.

We expect, similarly to the significant body of work that focuses on the Lasso and on MIO, respectively, that there are many avenues that can be explored and investigated for building on the presented COMBSS framework. Particularly, to tackle best subset selection when problems are ultra-high dimensional.
In this paper, we have opened a novel framework for feature selection and this framework can be extended to other models beyond the linear regression model. For instance, recently \cite{mathur2023column} extended the COMBSS framework for solving column subset selection and Nystr\"{o}m approximation problems.

{Moreover, in the context of Bayesian predictive modeling, \cite{kowal2022bayesian} introduced Bayesian subset selection for linear prediction or classification, and they diverged from the traditional emphasis on identifying a single best subset, opting instead to uncover a family of subsets, with notable members such as the smallest acceptable subset. For a more general task of considering variable selection,  the handbook edited by \cite{HandbookBayesian2021} offered an extensive exploration of  Bayesian approaches to variable selection. Extending the concept of COMBSS to encompass more general variable selection and establishing a connection with Bayesian modelling appear to be promising avenues for further research.}

{In addition, the objective function in \eqref{eqn:flamdel} becomes  $\|\m y - X (\m t \odot \bbeta)\|^2_2/n$ when both the penalty terms are removed, where note that $X_{\m t} \bbeta = X (\m t \odot \bbeta)$. An unconstrained optimization of this function over $\m t, \bbeta \in \reals^p$ is studied in the area of {\em implicit regularization}; see, e.g., \cite{hoff2017lasso,  vaskevicius2019implicit, zhao2019implicit, fan2022understanding, zhao2022high}. Gradient descent in our method minimizes over the unconstrained variable $\m w \in \reals^p$ to get an optimal constrained variable $\m t \in [0,1]^p$. On the contrary, in their approach, $\m t$ itself is unconstrained. Unlike the gradient descent of our method which terminates when it is closer to a stationary point on the hypercube $[0, 1]^p$, the gradient descent of their methods may need an early-stopping criterion using a separate test set.}

Finally, our ongoing research focuses on extensions of COMBSS to non-linear regression problems including logistic regression.\\

\noindent
{\bf \large{Acknowledgements.}} Samuel Muller was supported by the Australian Research Council Discovery Project Grant \#210100521.

\appendix

\section{Proofs}
\label{app:Proofs}

\begin{proof}[ Proof of Theorem~\ref{thm:positive_definiteness}]
Since both $X_{\m t}^\top X_{\m t}$ and $T_{\m t}$ are symmetric, the symmetry of $L_{\m t}$ is obvious. We now show that $L_{\m t}$ is positive-definite for $\m t \in [0, 1)^p$ by establishing
\begin{align}
\label{eqn:uLu}
\m u^\top L_{\m t} \m u > 0, \quad\text{for all} \, \, \m u \in \reals^{p}\setminus \{\bs 0\}.
\end{align}
The matrix $X_{\m t}^\top X_{\m t}$ is a positive semi-definite, because
\[
\m u^\top X_{\m t}^\top X_{\m t} \m u = \|X_{\m t} \m u \|_2^2 \geq 0.
\]
In addition, for all $\m t \in [0, 1)^p$, the matrix $\delta\lt(I - T_{\m t}^2\rt)$ is also a positive-definite  because $\delta > 0$, and
\begin{align}
\m u^\top \lt( I - T_{\m t}^2\rt) \m u &= \| \m u\|_2^2 - \|T_{\m t} \m u \|_2^2 \nonumber\\
					&= \sum_{j = 1}^p u_{j}^2 (1 - t_j^2), \label{eqn:uTu}
\end{align}
which is strictly positive if $\m t \in [0, 1)^p$ and $\m u \in \reals^{p}\setminus \{\bs 0\}$. Since positive-definite matrices are invertible, we have $L_{\m t}^{\dagger} = L_{\m t}^{-1}$,
and thus, $\tbeta_{\m t} =  L_{\m t}^{-1} X_{\m t}^\top \m y/n$.
\end{proof}

Theorem~\ref{thm:BS} is a collection of results from the literature that we need in our proofs. Results $(i)$ and $(ii)$ of Theorem~\ref{thm:BS} are well-known in the literature as Banachiewicz inversion lemma (see, e.g., \citet{YY05}), and $(iii)$ is its generalization to Moore–Penrose inverse (See Corollary 3.5 (c) in \cite{CMR2015}).

\begin{theorem}
\label{thm:BS}
Let $M$ be a square block matrix of the form
\begin{align*}
M =
\begin{bmatrix}
    A & C \\
    B & D
\end{bmatrix}
\end{align*}
with $A$ being a square matrix.
Let the Schur complement $S = D - B A^{\dagger} C$.
Suppose that $D$ is non-singular. Then following holds.
\begin{enumerate}
\item[(i)] If $A$ is non-singular, then $M$ is non-singular if and only if $S$ is non-singular.
\item[(ii)] If both $A$ and $S$ are non-singular, then
\begin{align}
M^{-1} &=
\begin{bmatrix}
I & -A^{-1} C S^{-1} \\
0 & S^{-1}
\end{bmatrix}
\begin{bmatrix}
A^{-1} & 0 \\
-B A^{-1}  & I
\end{bmatrix}\nonumber\\
&=
\begin{bmatrix}
I & -A^{-1} C \\
0 & I
\end{bmatrix}
\begin{bmatrix}
I & 0 \\
- S^{-1} B & S^{-1}
\end{bmatrix}
\begin{bmatrix}
A^{-1} & 0 \\
0  & I
\end{bmatrix}.\label{eqn:BS-non-sigular}
\end{align}

\item[(iii)] If $A$ is singular,  $S$ is non-singular, $B A^{\dagger} A =  B$ and $ A A^{\dagger} C =  C$,
then
\begin{align}
M^\dagger &=
 \begin{bmatrix}
 I & -A^{\dagger} C S^{-1} \\
 0 & S^{-1}
 \end{bmatrix}
 \begin{bmatrix}
 A^{\dagger} & 0 \\
 -B A^{\dagger}  & I
 \end{bmatrix}\nonumber\\
&=
\begin{bmatrix}
I & -A^{\dagger} C \\
0 & I
\end{bmatrix}
\begin{bmatrix}
I & 0 \\
- S^{-1} B & S^{-1}
\end{bmatrix}
\begin{bmatrix}
A^{\dagger} & 0 \\
0  & I
\end{bmatrix}.\label{eqn:BS-sigular}
\end{align}
\end{enumerate}
\end{theorem}

\begin{proof}[Proof of Theorem~\ref{thm:beta_to_beta}]
The inverse of a matrix after a permutation of rows (respectively, columns) is identical to the matrix obtained by applying the same permutation on columns (respectively, rows) on the inverse of the matrix. Therefore,
without loss of generality, we assume that all the zero-elements of $\m s \in \{0,1\}^p$ appear at the end, in the form:
\[
\m s = ( s_{1}, \ldots, s_m,  0, \ldots, 0),
\]
where $m$ indicates the number of non-zeros in $\m s$.
Recall that ${X}_{[\m s]}$ is the matrix of size $n\times  \lvert {\m s}\rvert$ created by keeping only columns $j$ of $X$ for which $s_{j} = 1$.
Thus, $L_{\m s}$ is given by,
\begin{align}
L_{\m s} &= \frac{1}{n}\lt[\begin{pmatrix}
			 X_{[\m s]}^\top X_{[\m s]} & \m 0 \\
			\m 0 & \m 0
               \end{pmatrix} + \delta \begin{pmatrix}
			\m 0  & \m 0\\
			\m 0 & I
               \end{pmatrix} \rt]\nonumber\\
               &= \frac{1}{n}\begin{pmatrix}
			X_{[\m s]}^\top X_{[\m s]} & \m 0 \\
			\m 0 & \delta I
               \end{pmatrix}.\label{eqn:expression_Ls}
\end{align}
From Theorem~\ref{thm:BS} (i), it is evident that $L_{\m s}$ is invertible if and only if $X_{[\m s]}^\top X_{[\m s]}$ is invertible.

First assume that $X_{[\m s]}^\top X_{[\m s]}$ is invertible. Then, from Theorem~\ref{thm:BS} (ii),
\begin{align}
L_{\m s}^{-1}
              &= n \begin{pmatrix}
I  & \m 0 \\
\m 0& \frac{1}{\delta} I
                \end{pmatrix}
                \begin{pmatrix}
\lt({X}_{[\m s]}^\top {X}_{[\m s]}\rt)^{-1}& \m 0 \\
\m 0 & I\end{pmatrix} \nonumber\\
              &= n \begin{pmatrix}
\lt({X}_{[\m s]}^\top {X}_{[\m s]}\rt)^{-1} & \m 0 \\
\m 0 & \frac{1}{\delta}I\end{pmatrix}. \label{eqn:expression_Ls_inverse}
\end{align}
Now recall the notations $(\tbeta_{\m s})_+$ and $(\tbeta_{\m s})_0$ introduced before stating Theorem~\ref{thm:beta_to_beta}.
Then, we use \eqref{eqn:expression_Ls_inverse} to obtain
\begin{align*}
\begin{pmatrix}
    (\tbeta_{\m s})_+ \\
    (\tbeta_{\m s})_0
\end{pmatrix}
&= L_{\m s}^{-1} T_{\m s} \frac{X^\top \m y}{n}\\
&=
\begin{pmatrix}
    \lt({X}_{[\m s]}^\top {X}_{[\m s]}\rt)^{-1} {X}_{[\m s]}^\top \m y \\
    \m 0
\end{pmatrix}.
\end{align*}
This further guarantees that ${X}_{[\m s]}\hbeta_{[\m s]}=X_{\m s} \tbeta_{\m s}$.

When $X_{[\m s]}^\top X_{[\m s]}$ is singular, by replacing the inverse with its pseudo-inverse in the above discussion, and using Theorem~\ref{thm:BS} (iii) instead of Theorem~\ref{thm:BS} (ii), we can establish the same conclusions. This is because, the corresponding Schur complement for $L_{\m s}$ is $S = n\, I/\delta$, which is symmetric and positive definite.
\end{proof}

\begin{proof}[Proof of Theorem~\ref{thm:Cont_f}]
Consider a sequence $\m t_{1}, \m t_{2}, \dots \in [0, 1)^p$ that converges a point $\m t \in [0,1]^p$. We know that the converges easily holds when $\m t \in [0,1)^p$ from the continuity of matrix inversion  which states that for any sequence of invertible matrices $Z_1, Z_2, \dots$ that converging to an invertible matrix $Z$, the sequence of their inverses $Z_1^{-1}, Z_2^{-1}, \dots$ converges to $Z^{-1}$.

Now using Theorem~\ref{thm:BS}, we prove the convergence when some or all of the elements of the limit point $\m t$ are equal to $1$. Suppose $\m t$ has exactly $m$ elements equal to $1$.
Using the arguments from the proof of \ref{thm:beta_to_beta}, without of loss of generality assume that all $1$s in $\m t$ appear together in the first $m$ positions, that is,
\[
\m t = (\underbrace{1, \dots, 1}_{m\,\, \text{times}}, \underbrace{t_{m+1}, \dots, t_{p}}_{p-m \,\, \text{times}}).
\]
In that case, by writing
\begin{align*}
    T_{\ell, 1} &= \diag(t_{\ell, 1}, \dots, t_{\ell, m}), \quad \text{and}\\ 
    T_{\ell, 2} &= \diag(t_{\ell, m+1}, \dots, t_{\ell, p}),
\end{align*}
we observe that as $\ell \to \infty$
\begin{align*}
    T_{\ell, 1} \longrightarrow I, \quad \text{and}\quad T_{\ell, 2} \longrightarrow T_2 = \diag(t_{m+1}, \dots, t_{p}).
\end{align*}
Further,  take
\[
F_{\ell} =
\begin{bmatrix}
    T_{\ell, 1} & 0 \\
    0 & I
\end{bmatrix},
\] and
\[
X = [X_1, X_2],
\]
with $X_1$ denoting the first $m$ columns of $X$.
Similarly, we can write
\[
X_{\m t_{\ell}} = [X_{\m t_{\ell}, 1}, X_{\m t_{\ell}, 2}].
\]
We now
observe that
\begin{align*}
    X_{\m t_{\ell}}^\top X_{\m t_{\ell}} &= \begin{bmatrix}
        X_{\m t_{\ell}, 1}^\top X_{\m t_{\ell}, 1} & X_{\m t_{\ell}, 1}^\top X_{\m t_{\ell}, 2}\\
        X_{\m t_{\ell}, 2}^\top X_{\m t_{\ell}, 1} & X_{\m t_{\ell}, 2}^\top X_{\m t_{\ell}, 2}
    \end{bmatrix}\\
    &=
    \begin{bmatrix}
        T_{\ell, 1}X_{1}^\top X_{1} T_{\ell, 1} & T_{\ell, 1} X_{1}^\top X_{\m t_{\ell}, 2}\\
        X_{\m t_{\ell}, 2}^\top X_{1} T_{\ell, 1} & X_{\m t_{\ell}, 2}^\top X_{\m t_{\ell}, 2}
    \end{bmatrix}\\
    &=
    F_\ell
    \begin{bmatrix}
        X_{1}^\top X_{1} &  X_{1}^\top X_{\m t_{\ell}, 2} \\
        X_{\m t_{\ell}, 2}^\top X_{1}  &  X_{\m t_{\ell}, 2}^\top X_{\m t_{\ell}, 2}
    \end{bmatrix}
    F_\ell.
\end{align*}
As a result,
\begin{align*}
    L_{\m t_{\ell}} &= X_{\m t_{\ell}}^\top X_{\m t_{\ell}} + \delta(I - T_{\m t_{\ell}}^2)\\
    &= F_{\ell} \lt(\begin{bmatrix}
        X_{1}^\top X_{1} &  X_{1}^\top X_{\m t_{\ell}, 2} \\
        X_{\m t_{\ell}, 2}^\top X_{1}  &  X_{\m t_{\ell}, 2}^\top X_{\m t_{\ell}, 2}
    \end{bmatrix}\rt.\\
    &\hspace{1cm}+
    \delta
    \lt.
    \begin{bmatrix}
        T_{\m t_{\ell}, 1}^{-2} - I & 0 \\
        0 & I - T_{\m t_{\ell}, 2}^2
    \end{bmatrix}
    \rt)
    F_{\ell}.
\end{align*}
Now define,
\begin{align*}
    A_\ell &= X_{1}^\top X_{1} + \delta (T_{\m t_{\ell}, 1}^{-2} - I),\\
    B_\ell &= X_{\m t_{\ell}, 2}^\top X_{1},\\
    C_\ell &= X_{1}^\top X_{\m t_{\ell}, 2},\\
    D_\ell &= X_{\m t_{\ell}, 2}^\top X_{\m t_{\ell}, 2} + \delta (I - T_{\m t_{\ell}, 2}^2),
\end{align*}
and
\[
M_\ell =
\begin{bmatrix}
    A_\ell & C_\ell \\
    B_\ell & D_\ell
\end{bmatrix}.
\]
Since $\m t_{\ell} \in [0, 1)^p$, $L_{\m t_{\ell}}$ is non-singular (see Theorem~\ref{thm:positive_definiteness}), and hence we have
\begin{align*}
L_{\m t_{\ell}}^{-1} &= F_{\ell}^{-1} M_\ell^{-1} F_\ell^{-1}.
\end{align*}
Note that the corresponding Schur complement $S_\ell = D_\ell - B_\ell A_\ell^{-1} C_\ell$ is non-singular from Theorem~\ref{thm:BS} (i).
Furthermore,  since
\[
X_{\m t_{\ell}} =
\begin{bmatrix}
    X_{1} & X_{\m t_{\ell}, 2}
\end{bmatrix}
F_\ell,
\]
\begin{align*}
L_{\m t_{\ell}}^{-1} X_{\m t_{\ell}}^\top &=
F_\ell^{-1} M_\ell^{-1}
\begin{bmatrix}
    X_{1}^\top \\ \\ X_{\m t_{\ell}, 2}^\top
\end{bmatrix},
\end{align*}
and hence,
\begin{align*}
    \lim_{\ell \to \infty}  L_{\m t_{\ell}}^{-1} X_{\m t_{\ell}}^\top &= \lim_{\ell \to \infty} F_\ell   \lim_{\ell \to \infty} \lt( M_\ell^{-1}
\begin{bmatrix}
    X_{1}^\top \\ \\ X_{\m t_{\ell}, 2}^\top
\end{bmatrix}
\rt)\\
&= \lim_{\ell \to \infty}  \lt(M_\ell^{-1}
\begin{bmatrix}
    X_{1}^\top \\ \\ X_{\m t_{\ell}, 2}^\top
\end{bmatrix}\rt).
\end{align*}
Using \eqref{eqn:BS-non-sigular},
\[
M_\ell^{-1} =
\begin{bmatrix}
I & -A_\ell^{-1} C_\ell \\
0 & I
\end{bmatrix}
\begin{bmatrix}
I & 0 \\
- S_\ell^{-1} B_\ell & S_\ell^{-1}
\end{bmatrix}
\begin{bmatrix}
A_\ell^{-1} & 0 \\
0  & I
\end{bmatrix},
\]
and hence,
\[
M_\ell^{-1}
\begin{bmatrix}
    X_{1}^\top \\ \\ X_{\m t_{\ell}, 2}^\top
\end{bmatrix}
\]
is equal to
\begin{align}
\begin{bmatrix}
I & -A_\ell^{-1} X_{1}^\top X_{\m t_{\ell}, 2} \\
0 & I
\end{bmatrix}
\begin{bmatrix}
I & 0 \\
- S_\ell^{-1} B_\ell & S_\ell^{-1}
\end{bmatrix}
\begin{bmatrix}
A_\ell^{-1} X_{1}^\top & 0 \\
0  & X_{\m t_{\ell}, 2}^\top
\end{bmatrix}.
\label{eqn:limiting}
\end{align}
Now by defining
\begin{align*}
    A &= X_{1}^\top X_{1},\\
    B &= X_{\m t, 2}^\top X_{1},\\
    C &= X_{1}^\top X_{\m t, 2}, \quad \text{and}\\
    D &= X_{\m t, 2}^\top X_{\m t, 2} + \delta (I - T_{\m t, 2}^2),
\end{align*}
we have
\[
L_{\m t} =
\begin{bmatrix}
    A & C \\
    B & D
\end{bmatrix}.
\]
Since $T_{\m t, 2} < I$, we can see that $D$ is symmetric positive definite and hence non-singular (this can be established just like the proof of Theorem~\ref{thm:positive_definiteness}). Furthermore, the corresponding Schur complement $S = D - B A^{\dagger} C$ is symmetric positive definite, and hence non-singular. The symmetry of $S$ is easy to see from  the definition because $A$ and $D$ are symmetric and $B = C^\top$.
To see that $S$ is positive definite,  for any $\m x \in \reals^{p-m}\setminus \{\m 0 \}$, let $z = X_{\m t, 2} x$ and thus
\begin{align*}
    x^\top S x &= z^\top z + \delta x^\top (I - T_{\m t, 2}^2) x - z^\top X_1 (X_1^\top X_1)^{\dagger} X_1^\top z\\
    &> z^\top z - z^\top X_1 (X_1^\top X_1)^\dagger X_1^\top z\\
    &= z^\top \lt( I - X_1 (X_1^\top X_1)^\dagger X_1^\top\rt) z.
\end{align*}
Since $\lt( I - X_1 (X_1^\top X_1)^\dagger X_1^\top\rt)$ is a projection matrix and hence positive definite, $S$ is also positive definite.

In addition, using the singular value decomposition (SVD) $X_1 = U_1 \Delta_1 V_1^\top$, we have
\begin{align*}
    B A^\dagger A  &= X_{\m t, 2}^\top X_1 (X_1^\top X_1)^\dagger(X_1^\top X_1) \\
              &= X_{\m t, 2}^\top U_1 \Delta_1  (\Delta_1^\top \Delta_1)^\dagger(\Delta_1^\top \Delta_1) V_1\\
              &= X_{\m t, 2}^\top U_1 \Delta_1 V_1\\
              &= X_{\m t, 2}^\top X_1 = B.
\end{align*}
Similarly, we can show that
$A A^\dagger C = C$. Thus, using \eqref{eqn:BS-sigular},
$L_{\m t}^\dagger X_{\m t}^\top$ is equal to
\begin{align}
\begin{bmatrix}
I & -A^{\dagger} X_{1}^\top X_{\m t, 2} \\
0 & I
\end{bmatrix}
\begin{bmatrix}
I & 0 \\
- S^{-1} B & S^{-1}
\end{bmatrix}
\begin{bmatrix}
A^{\dagger} X_{1}^\top  & 0 \\
0  & X_{\m t, 2}
\end{bmatrix}.
\label{eqn:limit}
\end{align}
Since $\lim_{\ell \to \infty} X_{\m t_{\ell}, 2} = X_{\m t, 2}$ and $\lim_{\ell \to \infty} B_{\ell} = B$, from \eqref{eqn:limiting} and \eqref{eqn:limit}, to show that
\begin{align}
\lim_{\ell \to \infty} L_{\m t_{\ell}}^{-1} X_{\m t_{\ell}}^\top = L_{\m t}^\dagger X_{\m t}^\top,
\label{eqn:LX-continuity}
\end{align}
it is enough to show that
\begin{align}
    &\lim_{\ell \to \infty} S_{\ell}^{-1} = S^{-1},\label{eqn:S-converge}\\
    &\lim_{\ell \to \infty} A_{\ell}^{-1} X_1^\top = A^{\dagger} X_1^\top.\label{eqn:AX-converge}
\end{align}
Since $S$ and each of $S_\ell$ are non-singular, \eqref{eqn:S-converge} holds from the continuity of matrix inversion.
Now observe that
\begin{align*}
    A^\dagger X_1^\top &= \lt( X_{1}^\top X_{1}\rt)^\dagger X_1^\top \\
                              &= V_1\lt( \Delta_{1}^\top \Delta_{1}\rt)^\dagger \Delta_1^\top U_1^\top\\
                              &= V_1 \Delta_{1}^\dagger U_1^\top\\
                              &= X_1^\dagger,
\end{align*}
To establish \eqref{eqn:AX-converge}, we need to show that $\overline{X} = \lim_{\ell \to \infty} A_{\ell}^{-1} X_1^\top$ is equal to $X_1^\dagger$.
Towards this, define
\begin{align*}
    \eta_\ell &= \max_{i = 1, \dots, m} (1/t_{\ell, i}^2 - 1),\\
    \epsilon_\ell &= \min_{i = 1, \dots, m} (1/t_{\ell, i}^2 - 1).
\end{align*}
Then, we observe that both $\eta_\ell$ and $\epsilon_\ell$ are strictly positive and going to zero as $\ell \to \infty$. Thus,
\begin{align*}
     X_{1}^\top X_{1} + \delta \epsilon_\ell I  \leq A_{\ell}  \leq  X_{1}^\top X_{1} + \delta \eta_\ell I,
\end{align*}
where for any two symmetric positive semi-definite matrices $Z$ and $Z'$, we write $Z \geq Z'$ if $Z - Z'$ is also positive semi-definite.
Let
\[
\underline{A}_\ell = X_{1}^\top X_{1} + \delta \epsilon_\ell I \quad \text{and} \quad \overline{A}_\ell = X_{1}^\top X_{1} + \delta \eta_\ell I.
\]
Thus, $\underline{A}_\ell^{-1}  \geq A_{\ell}^{-1}  \geq  \overline{A}_\ell^{-1},
$
or, alternatively,
\begin{align*}
     A_{\ell}^{-1}  - \overline{A}_\ell^{-1} \leq \underline{A}_\ell^{-1} - \overline{A}_\ell^{-1}.
\end{align*}
Now for any matrix norm, denoting as $\|\cdot \|$, using the triangular inequality,
\begin{align}
    \|A^{-1}&X_1^\top  - X_1^\dagger \|\nonumber\\
    &= \| ( A_{\ell}^{-1}  - \overline{A}_\ell^{-1} ) X_1^\top + (\overline{A}_\ell^{-1} X_1^\top - X_1^\dagger)\|\nonumber\\
    &\leq \| ( A_{\ell}^{-1}  - \overline{A}_\ell^{-1} ) X_1^\top \| + \|(\overline{A}_\ell^{-1} X_1^\top - X_1^\dagger)\|\nonumber\\
    &\leq \| ( \underline{A}_{\ell}^{-1}  - \overline{A}_\ell^{-1} ) X_1^\top \| + \|(\overline{A}_\ell^{-1} X_1^\top - X_1^\dagger)\|. \label{eqn:thiangle}
\end{align}
Using the SVD of $X_1 = U_1 \Delta_1 V_1^{\sbmc{\top}}$, we get the SVD of $( \underline{A}_{\ell}^{-1}  - \overline{A}_\ell^{-1} ) X_1^\top$ as
\[
V_1 \lt( (\Delta_1^\top \Delta_1 + \delta \epsilon_\ell I)^{-1} \Delta_1^\top - (\Delta_1^\top \Delta_1 + \delta \eta_\ell I)^{-1} \Delta_1^\top \rt) U_1^\top.
\]
That is, suppose $\sigma_i$ is the $i$th singular value of $X_1$, then the $i$th singular value of $( \underline{A}_{\ell}^{-1}  - \overline{A}_\ell^{-1} ) X_1^\top$ is $0$ if $\sigma_i = 0$, otherwise, it is
\[
\frac{\sigma_i}{\sigma_i^2 + \delta \epsilon_\ell} - \frac{\sigma_i}{\sigma_i^2 + \delta \eta_\ell} = \frac{\sigma_i\delta(\eta_\ell - \epsilon_\ell)}{(\sigma_i^2 + \delta \epsilon_\ell)(\sigma_i^2 + \delta \eta_\ell)},
\]
which goes to zero
and thus the first term in \eqref{eqn:thiangle} goes to zero. The second term in \eqref{eqn:thiangle} also converges to zero because of the limit definition of pseudo-inverse that states  that for any matrix $Z$
\[
Z^\dagger = \lim_{\epsilon \nearrow 0} \lt(Z^\top Z + \epsilon I\rt)^{-1} Z^\top.
\]
This completes the proof.
\end{proof}

For proving Theorem~\ref{thm:gradient_g}, we use Lemma~\ref{lem:derivatives_beta}, which obtains the partial derivatives of $\tbeta_{\m t}$ with respect to the elements of~$\m t$.
\begin{lemma}
\label{lem:derivatives_beta}
For any $\m t \in (0, 1)^p$, the partial derivative $\frac{\partial \tbeta_{\m t}}{\partial t_j}$ for each $j = 1, \dots, p$ is equal to
\[
L_{\m t}^{-1}  \lt[   E_{j} - E_{j} Z T_{\m t}  L^{-1}_{\m t} T_{\m t} -  T_{\m t} Z  E_{j} L^{-1}_{\m t} T_{\m t} \rt]\lt( \frac{X^\top \m y}{n}\rt),
\]
where $Z = n^{-1}\lt(X^\top X - \delta I\rt)$ and $E_{j}$ is a square matrix of dimension $p \times p$ with $1$ at the $(j,j)$th position and $0$ everywhere else.
\end{lemma}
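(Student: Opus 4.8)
The plan is to differentiate the closed-form expression $\tbeta_{\m t} = L_{\m t}^{-1}\bigl(X_{\m t}^\top \m y/n\bigr)$ supplied by Theorem~\ref{thm:positive_definiteness} by elementary matrix calculus. First I would rewrite the numerator using $X_{\m t} = X T_{\m t}$, so that $X_{\m t}^\top \m y = T_{\m t} X^\top \m y$ and hence
\[
\tbeta_{\m t} = L_{\m t}^{-1} T_{\m t}\left(\frac{X^\top \m y}{n}\right), \qquad \m t \in (0,1)^p.
\]
By Theorem~\ref{thm:positive_definiteness}, $L_{\m t}$ is invertible on the open cube $(0,1)^p$, and its entries are polynomials in $\m t$; since matrix inversion is smooth on the open set of invertible matrices, the maps $\m t \mapsto L_{\m t}^{-1}$ and $\m t \mapsto \tbeta_{\m t}$ are differentiable there, which legitimises the manipulations below.

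Next I would apply the product rule in the single variable $t_j$, using the two standard facts $\partial T_{\m t}/\partial t_j = E_j$ and $\partial L_{\m t}^{-1}/\partial t_j = - L_{\m t}^{-1}\,\bigl(\partial L_{\m t}/\partial t_j\bigr)\, L_{\m t}^{-1}$ (the latter obtained by differentiating the identity $L_{\m t} L_{\m t}^{-1} = I$). This gives
\[
\frac{\partial \tbeta_{\m t}}{\partial t_j} = -\,L_{\m t}^{-1}\frac{\partial L_{\m t}}{\partial t_j}\,L_{\m t}^{-1}T_{\m t}\left(\frac{X^\top \m y}{n}\right) + L_{\m t}^{-1}E_j\left(\frac{X^\top \m y}{n}\right).
\]

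It then remains only to evaluate $\partial L_{\m t}/\partial t_j$. Writing $L_{\m t} = T_{\m t} Z T_{\m t} + \frac{\delta}{n}I$ with $Z = X^\top X/n - (\delta/n)I$ (the decomposition already used in the proof of Theorem~\ref{thm:Cont_f}), the constant term $\frac{\delta}{n}I$ drops out, and differentiating the two occurrences of $T_{\m t}$ yields $\partial L_{\m t}/\partial t_j = E_j Z T_{\m t} + T_{\m t} Z E_j$. Substituting this into the previous display and factoring $L_{\m t}^{-1}$ to the left and $\bigl(X^\top \m y/n\bigr)$ to the right produces exactly the claimed expression $L_{\m t}^{-1}\bigl[\,E_j - E_j Z T_{\m t} L_{\m t}^{-1}T_{\m t} - T_{\m t} Z E_j L_{\m t}^{-1}T_{\m t}\,\bigr]\bigl(X^\top \m y/n\bigr)$.

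\textbf{Main obstacle.} There is no deep difficulty here: the calculation is routine once differentiability of $\m t \mapsto L_{\m t}^{-1}$ on $(0,1)^p$ is established, and that is precisely what Theorem~\ref{thm:positive_definiteness} provides. The only point that needs care is the bookkeeping of which factors get differentiated --- both copies of $T_{\m t}$ buried inside $L_{\m t}$, and the lone $T_{\m t}$ sitting to the right of $L_{\m t}^{-1}$ in the numerator --- together with checking that the additive $\frac{\delta}{n}I$ contributes nothing; I would sanity-check the bookkeeping against the scalar ($p=1$) case before writing up the general argument.
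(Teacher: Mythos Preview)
Your proposal is correct and follows essentially the same route as the paper's own proof: write $\tbeta_{\m t}=L_{\m t}^{-1}T_{\m t}(X^\top\m y/n)$, apply the product rule together with the identities $\partial T_{\m t}/\partial t_j=E_j$ and $\partial L_{\m t}^{-1}/\partial t_j=-L_{\m t}^{-1}(\partial L_{\m t}/\partial t_j)L_{\m t}^{-1}$, and then compute $\partial L_{\m t}/\partial t_j=E_jZT_{\m t}+T_{\m t}ZE_j$ from the decomposition $L_{\m t}=T_{\m t}ZT_{\m t}+\frac{\delta}{n}I$. Your added remark on why the map $\m t\mapsto L_{\m t}^{-1}$ is differentiable on $(0,1)^p$ is a welcome justification that the paper leaves implicit.
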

\begin{proof}[Proof of Lemma~\ref{lem:derivatives_beta}]
Existence of $\tbeta_{\m t}$ for every ${\m t \in (0, 1)^p}$ and ${\delta > 0}$ follows from Theorem~\ref{thm:positive_definiteness},
which states that $L_{\m t}$ is positive-definite and hence guarantees the invertibility of $L_{\m t}$.
Since $\tbeta_{\m t} = L_{\m t}^{-1} T_{\m t}  X^\top \m y/n$,
using matrix calculus, for any $j = 1, \dots, p$,
\begin{align*}
\frac{\partial \tbeta_{\m t}}{\partial t_j} &= \frac{\partial \lt(L^{-1}_{\m t} T_{\m t}\rt)}{\partial t_j} \lt(\frac{X^\top \m y}{n}\rt)\\
        &= \lt[ \frac{\partial L^{-1}_{\m t} }{\partial t_j} T_{\m t} + L^{-1}_{\m t} \frac{\partial T_{\m t}}{\partial t_j} \rt] \lt(\frac{X^\top \m y}{n}\rt)\\
        &= \lt[L^{-1}_{\m t} \frac{\partial T_{\m t}}{\partial t_j} - L^{-1}_{\m t} \frac{\partial L_{\m t} }{\partial t_j} L^{-1}_{\m t} T_{\m t} \rt] \lt(\frac{X^\top \m y}{n}\rt),
\end{align*}
where we used differentiation of an invertible matrix which implies
\[
\frac{\partial L^{-1}_{\m t} }{\partial t_j} = - L^{-1}_{\m t} \frac{\partial L_{\m t} }{\partial t_j} L^{-1}_{\m t}.
\]
Since
$\partial T_{\m t}/\partial t_j  = E_{j}$,
and the fact that $L_{\m t} = T_{\m t} Z T_{\m t} +  \delta I/n$, we get
\begin{align*}
\frac{\partial L_{\m t} }{\partial t_j}  &= \frac{ \partial T_{\m t}}{\partial t_i} Z T_{\m t}  +  T_{\m t} Z  \frac{ \partial T_{\m t}}{\partial t_j} \\
                       &=E_{j} Z  T_{\m t}  +  T_{\m t} Z  E_{j}.
\end{align*}
Therefore, $L^{-1}_{\m t} \frac{\partial T_{\m t}}{\partial t_j} -  L^{-1}_{\m t} \frac{\partial L_{\m t} }{\partial t_j} L^{-1}_{\m t} T_{\m t} $ is equal to
\begin{align*}
&L^{-1}_{\m t} E_{j} - L^{-1}_{\m t} E_{j} Z T_{\m t}  L^{-1}_{\m t} T_{\m t} - L^{-1}_{\m t} T_{\m t} Z  E_{j} L^{-1}_{\m t} T_{\m t} \\
&= L_{\m t}^{-1}  \lt[   E_{j} - E_{j} Z T_{\m t}  L^{-1}_{\m t} T_{\m t} -  T_{\m t} Z  E_{j} L^{-1}_{\m t} T_{\m t} \rt].
\end{align*}
This completes the proof Lemma~\ref{lem:derivatives_beta}.
\end{proof}

\begin{proof}[Proof of Theorem~\ref{thm:gradient_g}]

To  obtain the gradient $\nabla f_{\lambda}(\m t)$ for $\m t \in (0, 1)^p$, let $\bs \gamma_{\m t} = T_{\m t} \tbeta_{\m t} = \m t \odot \m \tbeta_{\m t}$. Then,
\begin{align}
\|\m y - X_{\m t} \m \tbeta_{\m t} \|_2^2 &= \|\m y - X \bs \gamma_{\m t} \|_2^2\nonumber\\
&= \m y^\top \m y - 2\, \bs \gamma_{\m t}^\top  \lt(X^\top \m y\rt) +  \bs \gamma_{\m t}^\top \lt(X^\top X \rt) \bs \gamma_{\m t}.\label{eqn:rss_expansion}
\end{align}
Consequently,
\begin{align}
\frac{\partial f_{\lambda}(\m t)}{\partial t_j}  &= \frac{1}{n}\frac{\partial }{\partial t_j} \lt[ \|\m y - X_{\m t} \tbeta_{\m t} \|_2^2\rt]  + \lambda\nonumber\\
&= - \frac{2}{n} \lt(\frac{\partial \bs \gamma_{\m t}}{\partial t_j} \rt)^\top \lt( X^\top \m y\rt)\nonumber\\
&\hspace{1cm}+ \frac{2}{n} \lt(\frac{\partial \bs \gamma_{\m t}}{\partial t_j} \rt)^\top \lt(X^\top X \rt) \bs \gamma_{\m t} + \lambda\nonumber\\
  &= \frac{2}{n} \lt(\frac{\partial \bs \gamma_{\m t}}{\partial t_j} \rt)^\top \lt[ \lt(X^\top X \rt) \bs \gamma_{\m t} -   \lt( X^\top \m y\rt) \rt] + \lambda\nonumber\\
  &= 2 \lt(\frac{\partial \bs \gamma_{\m t}}{\partial t_j} \rt)^\top \m a_{\m t} + \lambda, \label{eqn:Ct_partial_derv}
\end{align}
where $\m a_{\m t}  = n^{-1}[X^\top X \bs \gamma_{\m t} -   X^\top \m y]$. From the definitions of $\tbeta_{\m t}$ and $\bs \gamma_{\m t}$,
\begin{align*}
\frac{\partial \bs \gamma_{\m t}}{\partial t_j}  &= \frac{\partial T_{\m t} \tbeta_{\m t}}{\partial t_j}\\
&= \frac{\partial T_{\m t}}{\partial t_j}  \tbeta_{\m t} + T_{\m t}  \frac{\partial \tbeta_{\m t}}{\partial t_j},\\
&= E_{j} \tbeta_{\m t} + T_{\m t} L_{\m t}^{-1}  \lt[ E_{j} -E_{j} Z T_{\m t}  L^{-1}_{\m t} T_{\m t} \rt. \\
& \lt. -  T_{\m t} Z  E_{j} L^{-1}_{\m t} T_{\m t} \rt] \lt( \frac{X^\top \m y}{n}\rt),
\end{align*}
which is obtained using Lemma~\ref{lem:derivatives_beta} and the fact that $\partial T_{\m t}/\partial t_j = E_{j}$ and  ${Z = n^{-1}\lt(X^\top X - \delta I\rt)}$.
This in-turn yields that $\frac{\partial \bs \gamma_{\m t}}{\partial t_j}$ is equal to
\begin{align}
  &E_{j} \tbeta_{\m t}  + T_{\m t} L_{\m t}^{-1}  \lt[  E_{j} \lt( \frac{X^\top \m y}{n}\rt)- E_{j} Z  \bs \gamma_{\m t} -  T_{\m t} Z  E_{j} \tbeta_{\m t}  \rt]\nonumber\\
         &= E_{j} \tbeta_{\m t} - T_{\m t} L_{\m t}^{-1} E_{j} \m b_{\m t} - T_{\m t} L_{\m t}^{-1} T_{\m t} Z  E_{j} \tbeta_{\m t}, \label{eqn:partial_gamma}
\end{align}
where we recall that
\[
\m b_{\m t} =   Z  \bs \gamma_{\m t} - \lt(\frac{X^\top \m y}{n}\rt) =  \m a_{\m t}  - \frac{\delta}{n} \bs \gamma_{\m t}.
\]
For a further simplification,
recall that $\m c_{\m t} = L^{-1}_{\m t}  \lt( \m t \odot \m a_{\m t} \rt)$
and $\m d_{\m t} = Z\lt( \m t \odot \m c_{\m t}\rt)$.
Then, from \eqref{eqn:partial_gamma}, the matrix $\partial \bs \gamma_{\m t}/\partial {\m t}$ of dimension $p \times p$, with $j$th column being $\partial \bs \gamma_{\m t}/\partial t_j$, can be expressed as
\begin{align}
\frac{\partial \bs \gamma_{\m t}}{\partial {\m t}} &= \mathsf{Diag}\lt(\tbeta_{\m t}\rt)
                                                     -T_{\m t} L^{-1}_{\m t}\mathsf{Diag}\lt(\m b_{\m t}\rt)\nonumber\\
                                                     &\hspace{2cm}- T_{\m t} L_{\m t}^{-1} T_{\m t} Z \mathsf{Diag}\lt(\tbeta_{\m t}\rt).
\end{align}
From \eqref{eqn:Ct_partial_derv}, with $\m 1$ representing a vector of all ones, $\nabla f_{\lambda}(\m t)$ can be expressed as
\begin{align*}
\nabla f_{\lambda}(\m t) &= 2\mathsf{Diag}\lt(\tbeta_{\m t}\rt) \m a_{\m t} -  2\mathsf{Diag}\lt({\m b}_{\m t}\rt)  L^{-1}_{\m t}  T_{\m t}  \m a_{\m t} \\
&\hspace{1cm}- 2\mathsf{Diag}\lt(\tbeta_{\m t}\rt)  Z T_{\m t} L_{\m t}^{-1}T_{\m t} \m a_{\m t}  + \lambda \m 1\\
			  &= 2\lt(\tbeta_{\m t} \odot {\m a}_{\m t}\rt) -  2\mathsf{Diag}\lt({\m b}_{\m t}\rt)  {\m c}_{\m t}\\
        &\hspace{1cm}-  2\mathsf{Diag}\lt(\tbeta_{\m t}\rt) Z T_{\m t} {\m c}_{\m t} + \lambda \m 1\\
                            &= 2\lt( \tbeta_{\m t} \odot {\m a}_{\m t}\rt) -  2\lt({\m b}_{\m t} \odot {\m c}_{\m t}\rt)  - 2\lt({\tbeta}_{\m t} \odot {\m d}_{\m t}  \rt)+ \lambda \m 1\\
                            &= 2\lt( \tbeta_{\m t} \odot \lt( {\m a}_{\m t} - {\m d}_{\m t}  \rt)\rt) -  2\lt({\m b}_{\m t} \odot {\m c}_{\m t}\rt)  + \lambda \m 1\\
                            &= {\bs \zeta}_{\m t} + \lambda \m 1,
\end{align*}
where
\begin{align*}
    {\bs \zeta}_{\m t} = 2\lt( \tbeta_{\m t} \odot \lt( {\m a}_{\m t} - {\m d}_{\m t}  \rt)\rt) -  2\lt({\m b}_{\m t} \odot {\m c}_{\m t}\rt).
\end{align*}

Finally, recall that $g_{\lambda}(\m w) = f_{\lambda}\lt(\m t(\m w)\rt)$, $\m w \in \reals^p$, where the map $t(w) = \m 1 - \exp(- w\odot w)$ 
and
\[
f_{\lambda}(\m t) = \frac{1}{n}\| \m y - X_{\m t} \tbeta_{\m t} \|_2^2 + \lambda \sum_{j = 1}^p t_j.
\]
Then, from the chain rule of differentiation, for each $j = 1, \dots, p$,
\[
\frac{\partial g_{\lambda}(\m w)}{\partial w_j} = \frac{\partial f_{\lambda}(\m t)}{\partial t_j} \lt(2 w_j \exp(- w_j^2)\rt).
\]
Alternatively, in short,
\begin{align}
\label{eqn:Ct_to_Cw}
\nabla g_{\lambda}(\m w) = \nabla f_{\lambda}(\m t) \odot \lt( 2\m w \odot \exp(-\m w \odot \m w)\rt).
\end{align}
\end{proof}
{
\begin{proof}[Proof of Theorem~\ref{thm:convergence}]
A function $g(\m w)$ is $\ell$-smooth if the gradient $\nabla g_\lambda (\m w)$ is Lipschitz continuous with Lipschitz constant $\ell > 0$, that is,
\[
\|\nabla g_\lambda(\m w) - \nabla g_\lambda(\m w')\|_2 \leq \ell \|\m w - \m w' \|_2,
\]
for all $\m w, \m w' \in \reals^p$.
From Section~3 of \cite{Danilova2022}, gradient descent on a $\ell$-smooth function $g(\m w)$, with a fixed learning rate of $1/\ell$, starting at an initial point $\m w^{(0)}$, achieves an $\epsilon$-stationary point in $\frac{c\, \ell\, (g(\m w^{(0)}) - g^*)}{\epsilon^2}$ iterations, where $c$ is a positive constant, $g^* = \min_{\m w \in \reals^p} g(\m w)$. This result (with a different constant $c$) holds for any fixed learning rate smaller than $1/\ell$.

Thus, we only need to show that the objective function $g_\lambda(\m w)$ is $\ell$-smooth for some constant $\ell > 0$.
Using the gradient expression of $g_\lambda(\m w)$ given in Theorem~\ref{thm:gradient_g},
observe that
$\| \nabla g_\lambda(\m w) - \nabla g_\lambda(\m w')\|_2$ is upper bounded by
\[
\|\m \xi_{\m t} - \m \xi_{\m t'}\|_2  \|\m t - \m t' \|_2 \|\m w - \m w' \|_2,
\]
where $\m t = \m 1 - \exp(- \m w \odot \m w)$ and $\m t' = \m 1 - \exp(- \m w' \odot \m w')$. Here, we used the fact that $\exp(- \m w \odot \m w) = 1 - \m t$. Since $\m t, \m t' \in [0,1]^p$, clearly, $\|\m t - \m t' \|_2 \leq 1$.  In the proof of Theorem~\ref{thm:Cont_f}, we established the continuity of $L_{\m t}^\dagger X_{\m t}^\top$ at every point on the hypercube $[0,1]^p$; see \eqref{eqn:LX-continuity}. This implies the continuity of $\tbeta_{\m t}$ on $[0, 1]^p$. Using this, from the definition of $\m \xi_{\m t}$ provided in Theorem~\ref{thm:gradient_g}, we see that $\m \xi_{\m t}$ is also continuous on $[0, 1]^p$. Since $[0, 1]^p$ is a compact set (closed and bounded), using the extreme value theorem \citep{Armstrong1983}, which states that the image of a compact set under a continuous function is compact, we know that each element of $\m \xi_{\m t}$ is bounded on $[0, 1]^p$. Thus, there exists a constant $\ell > 0$ such that $\|\m \xi_{\m t} - \m \xi_{\m t'}\|_2 \leq \ell$, and this guarantees that $g_\lambda(\m w)$ is $\ell$-smooth.
\end{proof}
}
\begin{proof}[Proof of Proposition~\ref{prop:lambda_effects}]
From Theorem~\ref{thm:gradient_g},
we obtain  $\nabla f_{\lambda}(\m t)$ as follows,
\[
\nabla f_{\lambda}(\m t) = \zeta_{\m t} + \lambda \m 1,
\]
where $\zeta_{\m t} \in \reals^p$, recalling from Theorem~4, is given by
\begin{align}
    \bs \zeta_{\m t} = 2\lt( {\tbeta}_{\m t} \odot \lt({\m a}_{\m t} -  {\m d}_{\m t}  \rt)\rt) - 2\lt( {\m b}_{\m t} \odot  {\m c}_{\m t}\rt),
    \label{eqn:zeta_recall}
\end{align}
with
\begin{align*}
 \m a_{\m t}  &= n^{-1}[X^\top X ({\m t} \odot\tbeta_{\m t})  -   X^\top \m y],\\
\m b_{\m t} &=  \m a_{\m t}  - n^{-1}\delta ({\m t} \odot\tbeta_{\m t}),\\
\m c_{\m t} &= L^{-1}_{\m t}  \lt( {\m t} \odot \m a_{\m t} \rt), \quad \text{and}\\
\m d_{\m t} &=  n^{-1}[X^\top X  - \delta I] ( {\m t} \odot {\m c}_{\m t} ).
\end{align*}
From the definition of $\tbeta_{\m t}$, we can show that for any $j$, if we fix $t_i$ for all $i \neq j$, then the $j$th component of $\tbeta_{\m t}$ goes to zero as $t_j \downarrow 0$, that is,
$\lim_{t_j \downarrow 0} \tbeta_{\m t}(j) = 0.$
Similarly,
$\lim_{t_j \downarrow 0} \m c_{\m t}(j) = 0$.
Therefore, from the expression of $\bs \zeta_{\m t}$ in \eqref{eqn:zeta_recall},
we have
\[
\lim_{t_j \downarrow 0} \bs \zeta_{\m t}(j) = 0.
\]
\end{proof}

For proving Theorem~\ref{thm:high_to_low_dim}, we use Lemma~\ref{lem:Woodbury_ind} which is well-known as the Woodbury matrix identity or Duncan Inversion Formula;
we refer to \cite{Woodbury50}  for a proof of Lemma~\ref{lem:Woodbury_ind}.

\begin{lemma}
\label{lem:Woodbury_ind}
For any conformable matrices $A, B_1$, and $B_2,$ and $C$, the matrix $\lt(A + B_1CB_2\rt)^{-1}$ is equal to
\[
A^{-1} - A^{-1}B_1\lt(C^{-1} + B_2A^{-1}B_1\rt)^{-1}B_2A^{-1}.
\]
\end{lemma}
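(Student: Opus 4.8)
The plan is to establish the identity by direct verification: I will multiply the claimed inverse on the left by $A + B_1 C B_2$ and check that the product collapses to the identity matrix. Since $A + B_1 C B_2$ and the right-hand side are square matrices of the same order, a one-sided inverse is automatically two-sided, so verifying the product from one side suffices. Throughout I take for granted that $A$, $C$, and $C^{-1} + B_2 A^{-1} B_1$ are invertible, which is precisely what is needed for every term in the statement to be well defined.

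For the computation I would write $W = \lt(C^{-1} + B_2 A^{-1} B_1\rt)^{-1}$, so that the claimed inverse is $A^{-1} - A^{-1} B_1 W B_2 A^{-1}$. Expanding $\lt(A + B_1 C B_2\rt)\lt(A^{-1} - A^{-1} B_1 W B_2 A^{-1}\rt)$ and cancelling the term $A A^{-1} = I$, every remaining term carries a factor $B_1$ on the left and $B_2 A^{-1}$ on the right. Factoring these out, the product takes the form $I + B_1\lt[\,C - \lt(I + C B_2 A^{-1} B_1\rt) W\,\rt] B_2 A^{-1}$, so the whole question reduces to showing that the central bracket vanishes.

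The key algebraic step, and essentially the only point requiring any insight, is the factorization $I + C B_2 A^{-1} B_1 = C\lt(C^{-1} + B_2 A^{-1} B_1\rt)$. Substituting this into the bracket, the factor $C^{-1} + B_2 A^{-1} B_1$ cancels against $W$ by its very definition, giving $\lt(I + C B_2 A^{-1} B_1\rt) W = C$, and hence the bracket equals $C - C = 0$. Therefore the product is $I$, which proves the identity. I do not expect a genuine obstacle here; the main care is purely in the bookkeeping of keeping the left factor $B_1$ and the right factor $B_2 A^{-1}$ aligned so that the bracket is isolated cleanly.

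As an alternative that reuses machinery already available in the paper, I could instead derive the identity from the block-inverse formula (in the spirit of Lemma~\ref{lem:Banachiewicz}) applied to $\begin{pmatrix} A & -B_1 \\ B_2 & C^{-1} \end{pmatrix}$, computing its top-left inverse block two ways: via the Schur complement of $A$, namely $C^{-1} + B_2 A^{-1} B_1$, which yields $A^{-1} - A^{-1} B_1 \lt(C^{-1} + B_2 A^{-1} B_1\rt)^{-1} B_2 A^{-1}$, and via the Schur complement of $C^{-1}$, namely $A + B_1 C B_2$, which yields $\lt(A + B_1 C B_2\rt)^{-1}$; equating the two expressions for the same block gives exactly the Woodbury identity. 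Either route works, but the direct verification above is the shortest and most self-contained.
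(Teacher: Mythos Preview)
Your direct-verification argument is correct: the expansion, the factoring of $B_1$ on the left and $B_2 A^{-1}$ on the right, and the key factorization $I + C B_2 A^{-1} B_1 = C\lt(C^{-1} + B_2 A^{-1} B_1\rt)$ are all valid, and the bracket indeed collapses to zero. Your remark that a one-sided inverse suffices for square matrices is also fine, and you correctly identify the implicit invertibility hypotheses needed for the formula to make sense.

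As for comparison with the paper: the paper does not supply its own proof of this lemma at all; it simply cites \cite{Woodbury50} and moves on. So your proposal actually provides more than the paper does. The direct verification you give is the standard textbook route and is entirely self-contained. Your alternative via the block matrix $\begin{pmatrix} A & -B_1 \\ B_2 & C^{-1} \end{pmatrix}$ and two Schur complements is also correct and has the mild appeal of tying the result to Lemma~\ref{lem:Banachiewicz}, which the paper already uses elsewhere; either approach would be perfectly acceptable here.
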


\begin{proof}[Proof of Theorem~\ref{thm:high_to_low_dim}]
Recall the expression of $L_{\m t}$:
\begin{align*}
L_{\m t} &= \frac{1}{n}\lt[X^{\top}_{\m t} X_{\m t} + \delta\lt( I - T_{\m t}^2\rt) \rt].
\end{align*}
From the definition, for $\m t \in [0, 1)^p$,
\[
S_{\m t} = \frac{n}{\delta}(I - T^2_{\m t})^{-1},
\]
which exists.
Further, if we take
\[
A = \frac{\delta}{n}\lt(I - T_{\m t}^2\rt), \quad B_1 = B_2^\top = \frac{1}{\sqrt{n}} X^{\top}_{\m t}, \quad \text{and}\quad C = I,
\]
in Lemma~\ref{lem:Woodbury_ind}, then,
\begin{align*}
L_{\m t}^{-1} &= S_{\m t} - \frac{1}{n} S_{\m t} X_{\m t}^{\top} \lt(I + \frac{1}{n} X_{\m t} S_{\m t} X_{\m t}^\top \rt)^{-1} X_{\m t} S_{\m t}.
\end{align*}
Since $S_{\m t}$ is a diagonal matrix,
\[
\wt L_{\m t} = I + \frac{1}{n} X_{\m t} S_{\m t}  X_{\m t}^\top
\]
is a symmetric positive-definite matrix of dimension $n \times n$.
Thus, $\wt L_{\m t}^{-1}$ exists and
\begin{align*}
    L_{\m t}^{-1}  &= S_{\m t} - \frac{1}{n} S_{\m t} X_{\m t}^{\top} \wt L_{\m t}^{-1} X_{\m t} S_{\m t}.
\end{align*}
\end{proof}

\begin{proof}[Proof of Theorem~\ref{thm:zero-mapsto-zero}]
For the same reasons mentioned in the proof of
Theorem~\ref{thm:beta_to_beta},
without loss of generality we can assume that all the zero elements of $\m t$ appear at the end of $\m t$; that is, $\m t$ is of the form:
\[
\m t = (t_{1}, \dots, t_m, 0, \dots, 0),
\]
where $m$ indicates the number of non-zero elements in $\m t$. Then, $L_{\m t}$ is given by
\begin{align}
L_{\m t} &= \frac{1}{n}\lt[\begin{pmatrix}
			\lt(X_{\m t}^\top X_{\m t}\rt)_{\splus} & \m 0 \\
			\m 0 & \m 0
               \end{pmatrix} + \delta \begin{pmatrix}
			I - \lt(T^2_{\m t}\rt)_{\splus} & \m 0\\
			\m 0 & I
               \end{pmatrix} \rt]\label{eqn:express_Lt}\\
               &= \begin{pmatrix}
			\lt(L_{\m t}\rt)_{\splus}  & \m 0 \\
			\m 0 & \frac{\delta}{n} I
               \end{pmatrix}. \nonumber
\end{align}
Since $\m t \in [0,1)^p$, from Theorem~\ref{thm:positive_definiteness},
$L_{\m t}$ is a positive-definite matrix. Every principle submatrix of a positive-definite matrix is also positive-definite. Thus, $\lt(L_{\m t}\rt)_{\splus}$ is positive-definite and hence invertible. Using Theorem~\ref{thm:BS} (ii),
\begin{align}
L^{-1}_{\m t} &= \begin{pmatrix}
			\lt(\lt(L_{\m t}\rt)_{\splus}\rt)^{-1} & \m 0 \\
			\m 0 & I
               \end{pmatrix} \begin{pmatrix}
			  I & \m 0 \\
			\m 0 &  \frac{n}{\delta} I
               \end{pmatrix} \nonumber\\
               &= \begin{pmatrix}
			 \lt(\lt(L_{\m t}\rt)_{\splus}\rt)^{-1} & \m 0 \\
			\m 0 &  \frac{n}{\delta} I
               \end{pmatrix}. \label{eqn:express_Lt_inv}
\end{align}

Now observe that
\[
 \lt(\frac{X_{\m t}^\top \m y}{n}\rt) = \begin{pmatrix}
			 \lt(\frac{X_{\m t}^\top \m y}{n}\rt)_{\splus} \\
			\m 0
               \end{pmatrix} = \begin{pmatrix}
			 (\m t)_{\splus} \odot\lt(\frac{X^\top \m y}{n}\rt)_{\splus} \\
			\m 0
               \end{pmatrix}.
\]
Since
$\tbeta_{\m t}  = L_{\m t}^{-1} X_{\m t}^\top \m y/n$,
using \eqref{eqn:express_Lt_inv}, we establish the desired expressions for $(\tbeta_t)_0$ and $(\tbeta_t)_+$. Similar arguments yield the desired expressions for $(c_t)_0$ and $(c_t)_+$; hence for conciseness that proof is omitted here.
\end{proof}

\section{Additional Simulation Experiments}
\label{sec:simulations}
In this section, through additional simulations, we assess the efficiency of our proposed method, COMBSS using {\sf SubsetMapV1}, by comparing it with some of the well-known existing methods, namely, Forward Selection (FS), Lasso, L0Learn, and MIO. These comparisons are conducted both in low-dimensional and high-dimensional settings.  The simulation designs of Case~1 and Case~2 are provided in Section~7.1 of the main article.

\subsection{Performance Metrics}
For the comparison between the methods, we consider the prediction error as well as some variable selection accuracy performance metrics.
The prediction error (PE) performance of a method is defined  as
$$
\text { PE}=\frac{\left\|X \widehat{\boldsymbol{\beta}}-X \boldsymbol{\beta}^*\right\|_{2}^{2}}{\left\|X \boldsymbol{\beta}^*\right\|_{2}^{2}},
$$
where $\widehat{\boldsymbol{\beta}}$ is the estimated coefficient obtained by the method and $\boldsymbol{\beta}^*$ is the true model parameters. The variable selection accuracy performances used are sensibility (true positive rate), specificity (true negative rate), accuracy, $F_1$~score, and the Mathew’s correlation coefficient (MCC) \cite{refMCC}.

\subsection{Model Tuning}

In the low-dimensional setting,  FS and MIO were tuned over $k=0,\ldots,20$. For the high dimensional setting, FS was tuned over $k=0,\ldots,50$.
In this simulation we ran MIO through the R package \texttt{bestsubest} offered in \cite{bestsubsetR} and we fixed the default budget to 1 minute 40 seconds (per problem per $k$).

The dynamic grid of $\lambda$ values for COMBSS is constructed identical to the description provided in Section~7.1 of the main article. One exception is that for each simulation in this appendix, we call COMBSS with {\sf SubsetMapV1} only for one initial point $\m t^{(0)} = (0.5, \dots, 0.5)^\top$.

For each subset size $k$ for  FS and MIO and each tuning parameter $\lambda$ for Lasso and COMBSS, we evaluate the mean square error (MSE) using a validation set of $5000$ samples from the true model defined in (19) in the main article. Then, for all the four methods, the best model is the one with the lowest MSE on the validation set.

\subsection{Simulation Results}

Figures \ref{fig:low-0.8-case1} and \ref{fig:low-0.8-case2} present the results in the low-dimensional setting for Case~1 and Case~2, respectively. The panels in this figure display the average of MCC, accuracy, prediction error, F1-score, Sensibility,
and Specificity, over 50 replications, where the vertical bars denote one standard error.

Overall, in the low-dimensional setting, COMBSS outperforms FS, L0Learn, and MIO methods in terms of MCC, accuracy, and prediction error. It also outperforms the Lasso in terms of MCC in both the cases. 
Note that the Lasso presents lower prediction error and accuracy in general as it tends to provide dense model compared to other methods. As a result, the Lasso suffers with low specificity.

Figures~\ref{fig:high-0.8-case1}  and \ref{fig:high-0.8-case2} presents the results in the high-dimensional setting for Case~1 and Case~2, respectively. The panels in this figure display average MCC, prediction error, F1-score, Sensibility,
and Specificity, over $50$ replications, where the vertical bars denote one standard error. We ignored MIO for these simulation due to its high computational time requirement. We do not present accuracy for the high-dimensional setting, because even a procedure which always selects the null model will get an accuracy of $990/1000=0.99$.

In both the cases, COMBSS clearly outperforms the other three methods (FS, L0Learn, and Lasso) in terms of MCC, prediction error, and F1 score. In this setting, the Lasso again suffers from selecting dense models and thus exhibiting lower specificity.


\newpage
 \begin{figure}[H]
  \begin{subfigure}{0.5\linewidth}
    \centering
    \includegraphics[width=0.7\linewidth]{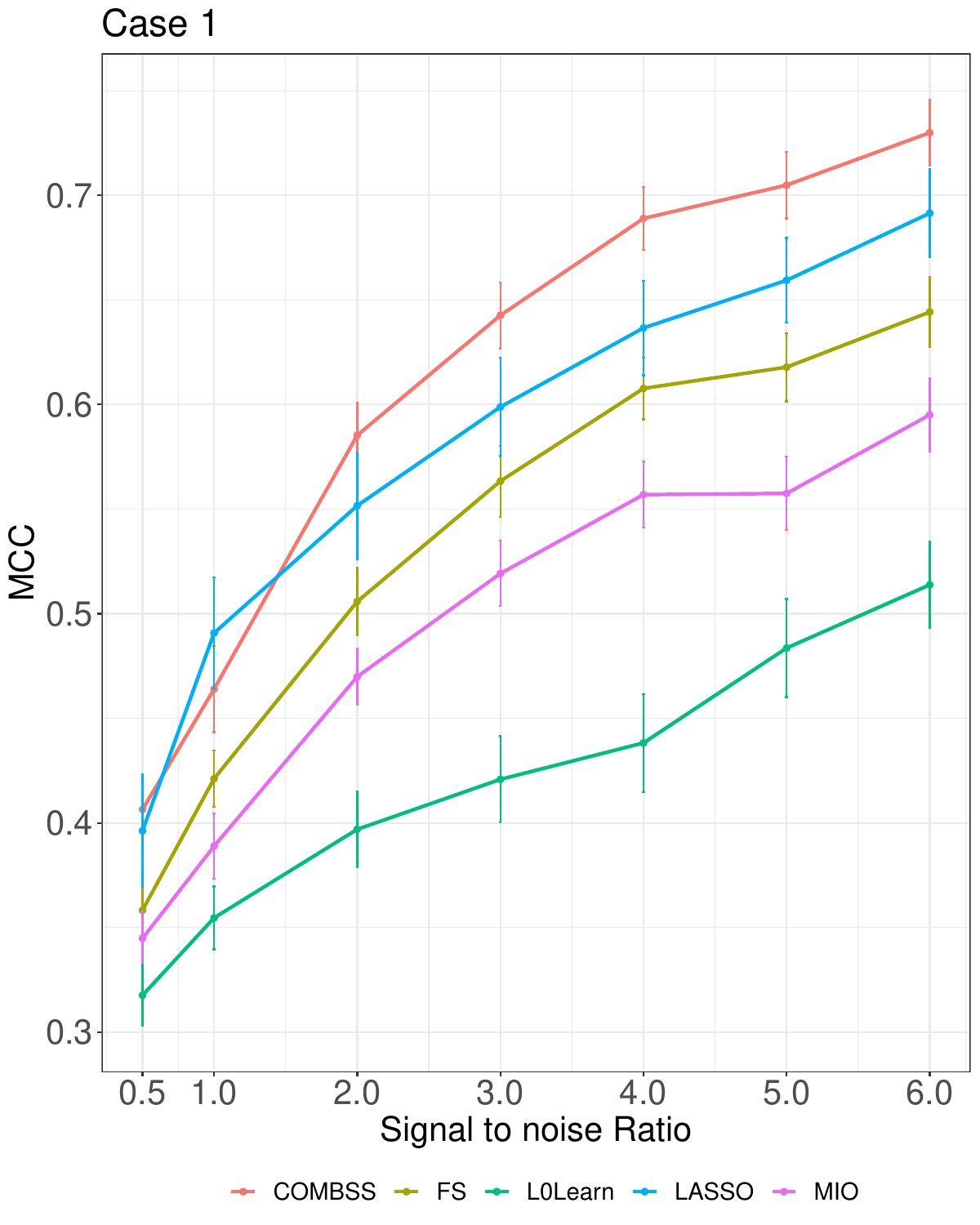}
  \end{subfigure}
  \begin{subfigure}{0.5\linewidth}
    \centering
    \includegraphics[width=0.7\linewidth]{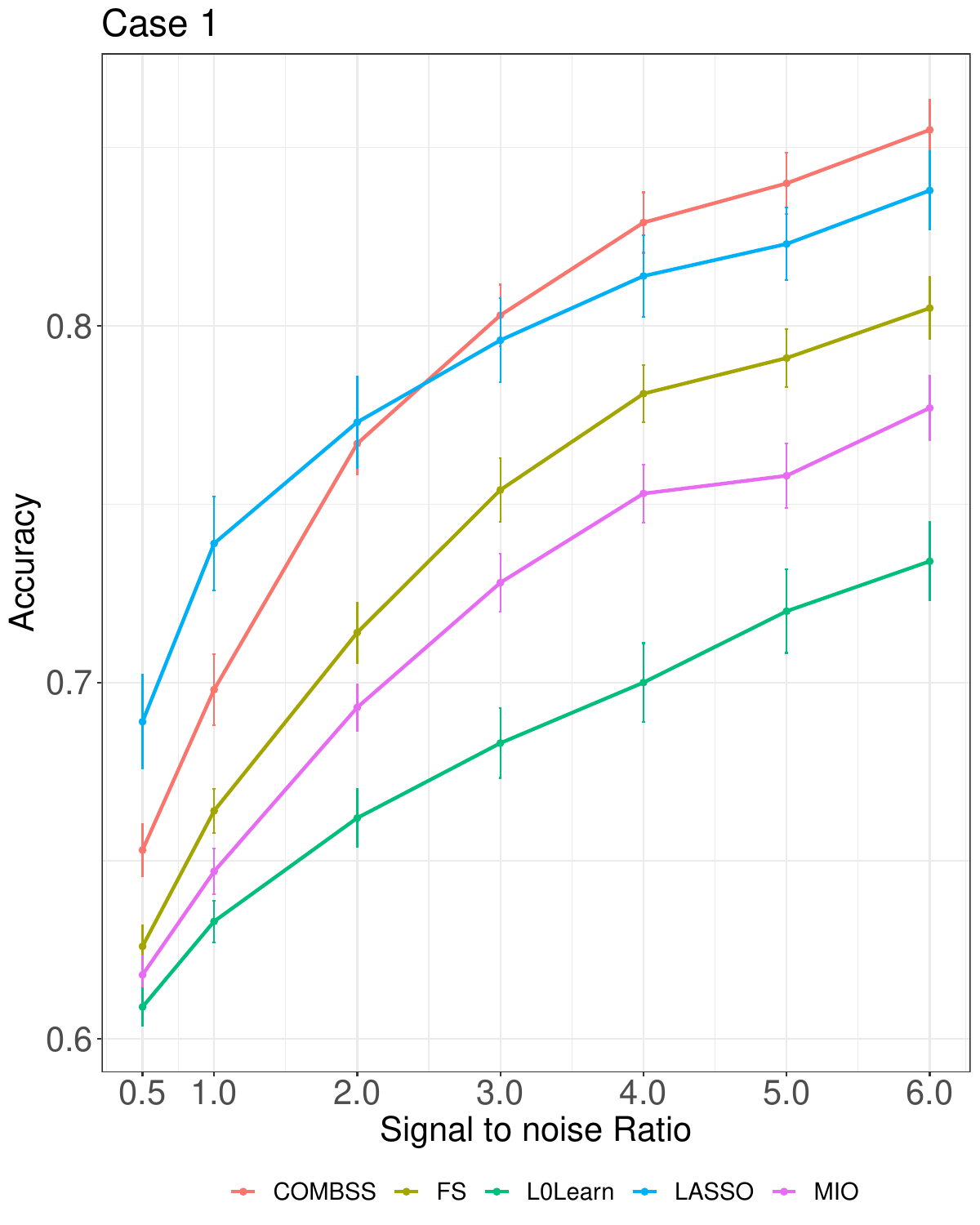}
  \end{subfigure}

  \begin{subfigure}{0.5\linewidth}
    \centering
    \includegraphics[width=0.7\linewidth]{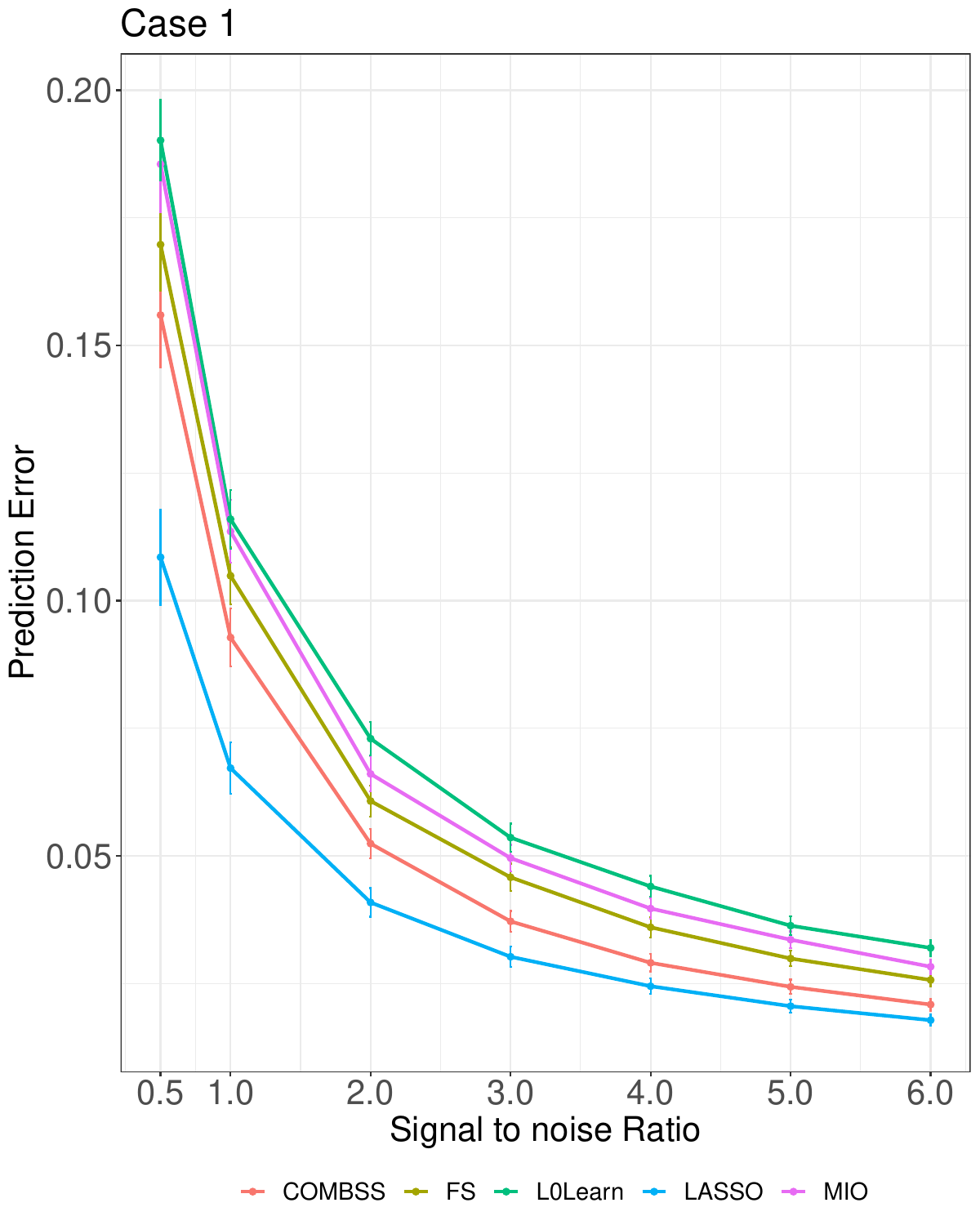}
  \end{subfigure}
  \begin{subfigure}{0.5\linewidth}
    \centering
    \includegraphics[width=0.7\linewidth]{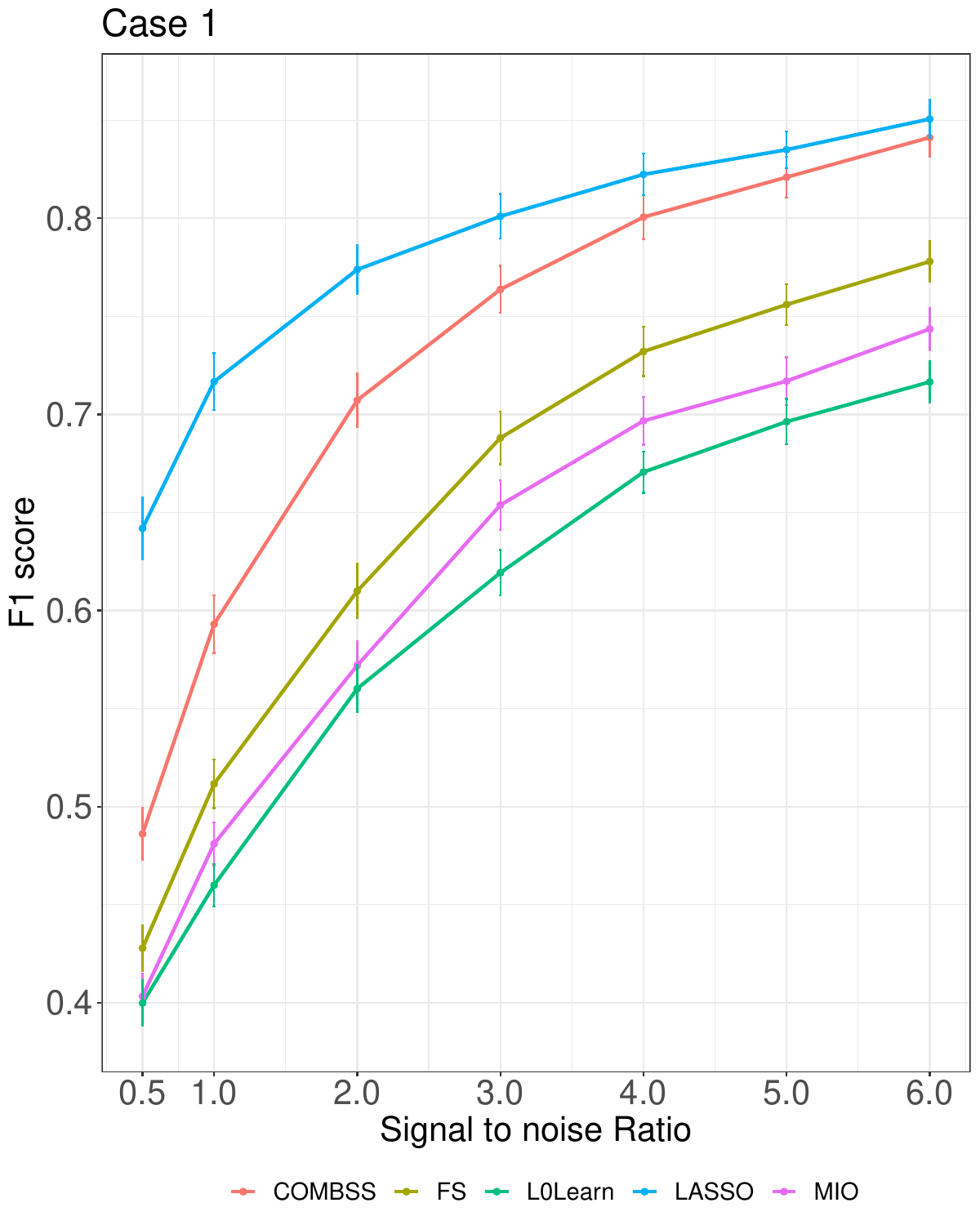}
  \end{subfigure}

    \begin{subfigure}{0.5\linewidth}
    \centering
    \includegraphics[width=0.7\linewidth]{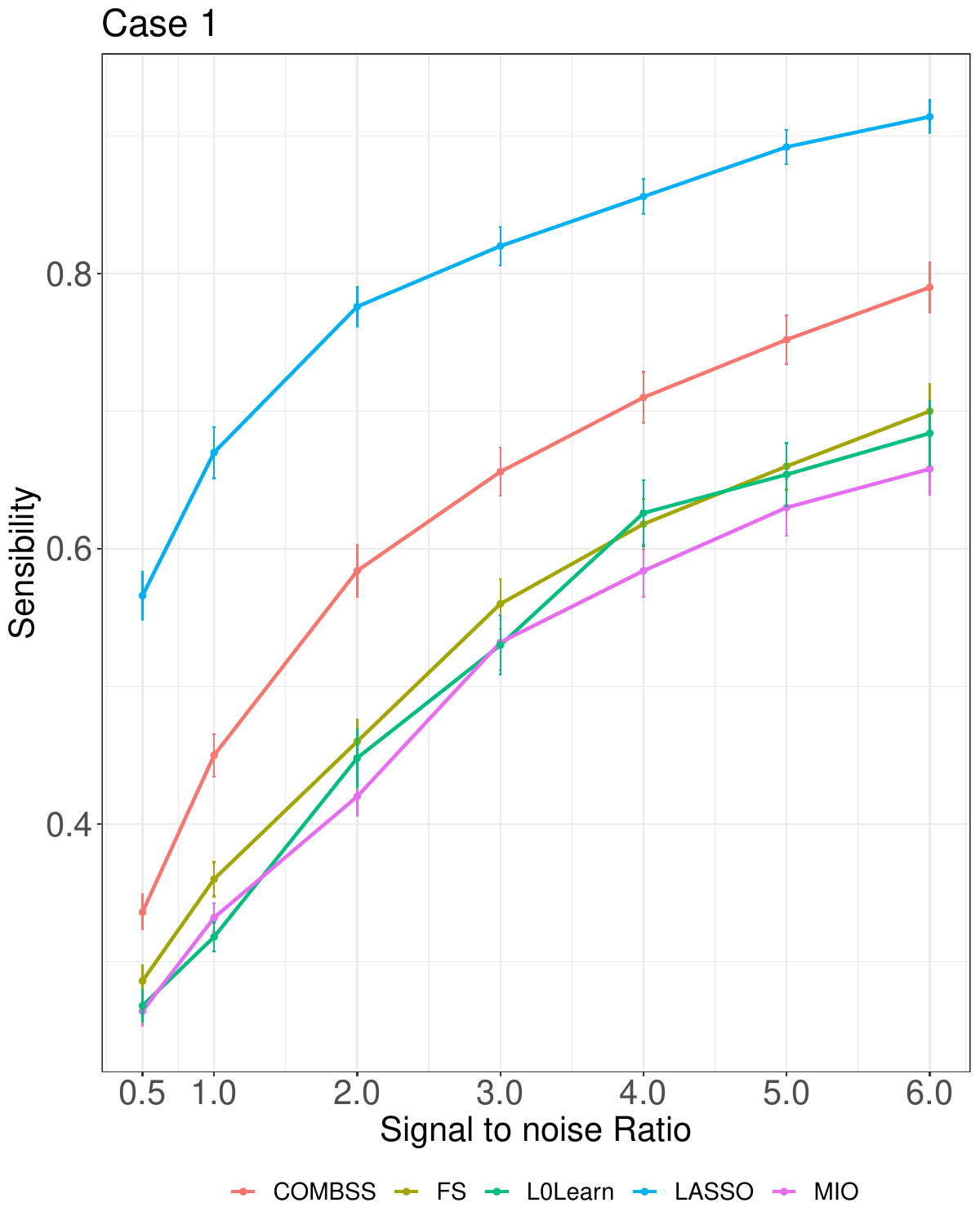}
  \end{subfigure}
  \begin{subfigure}{0.5\linewidth}
    \centering
    \includegraphics[width=0.7\linewidth]{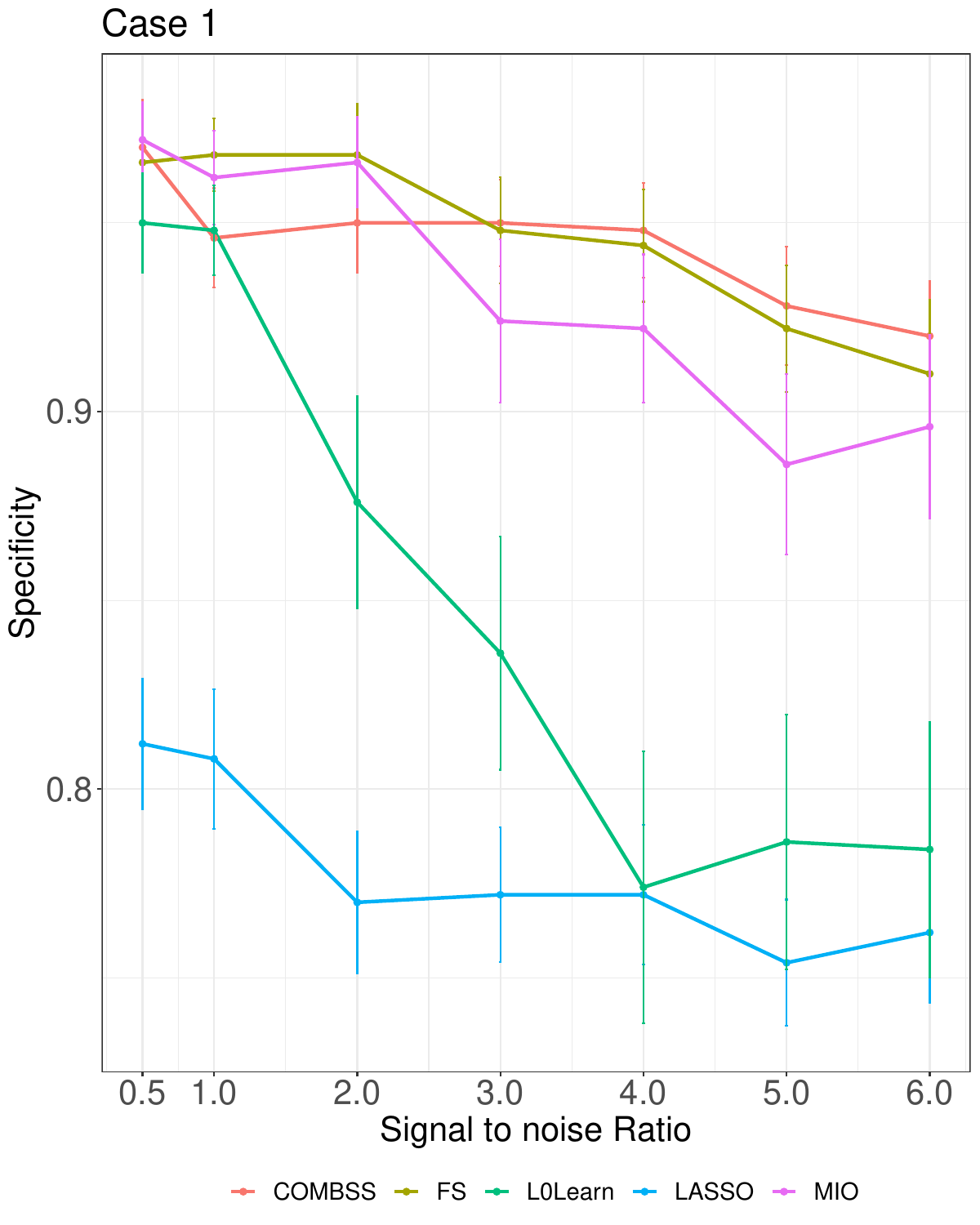} 
  \end{subfigure}
  \caption{Performance results in terms of MCC, accuracy, prediction error, F1-score, Sensibility, and Specificity for Case 1 in the low-dimensional setting where $n=100$, $p=20$, and $\rho=0.8$.}
  \label{fig:low-0.8-case1}
\end{figure}

 \begin{figure}[H]
   \begin{subfigure}{0.5\linewidth}
    \centering
    \includegraphics[width=0.7\linewidth]{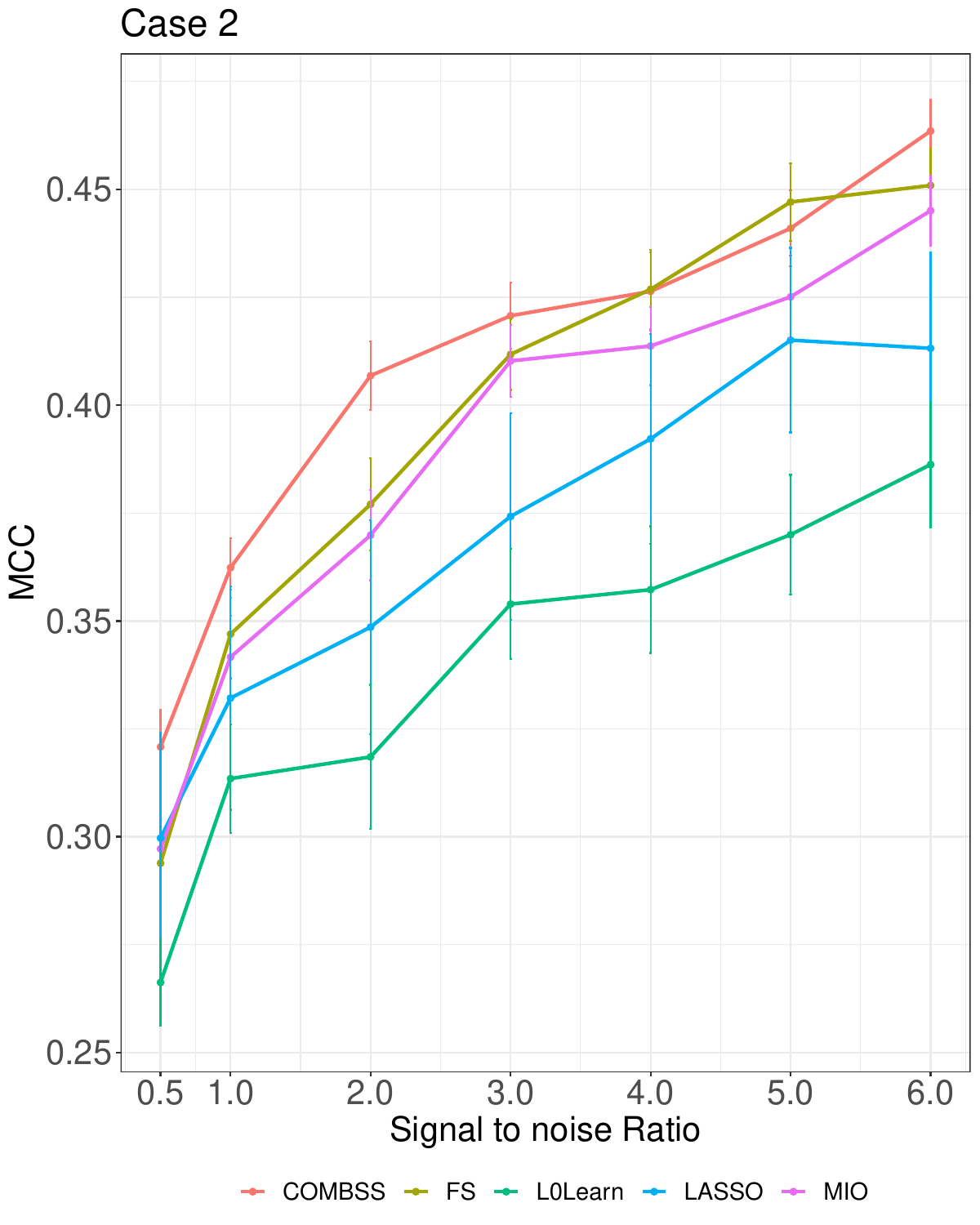}
   \end{subfigure}
   \begin{subfigure}{0.5\linewidth}
    \centering
    \includegraphics[width=0.7\linewidth]{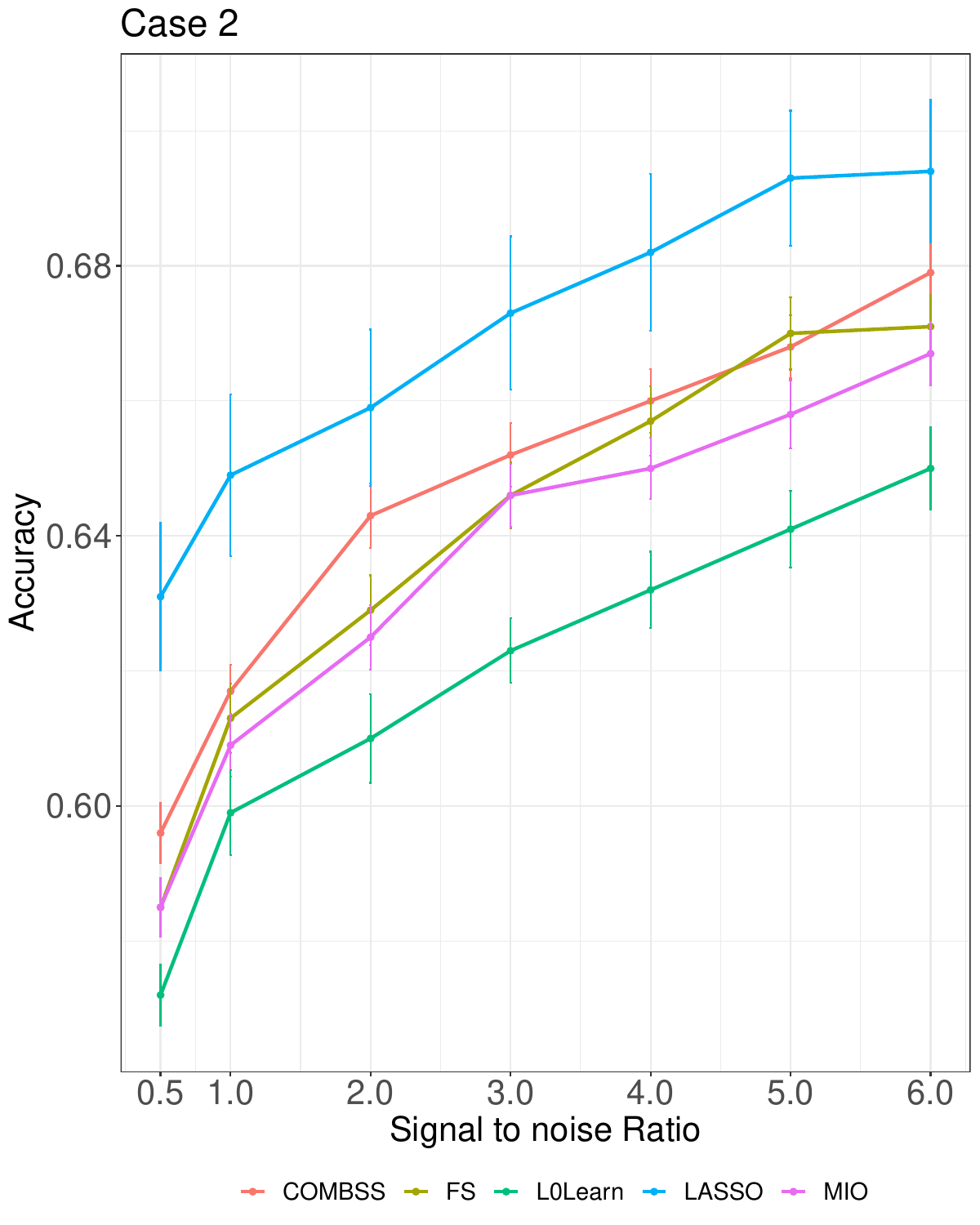}
   \end{subfigure}

  \begin{subfigure}{0.5\linewidth}
    \centering
    \includegraphics[width=0.7\linewidth]{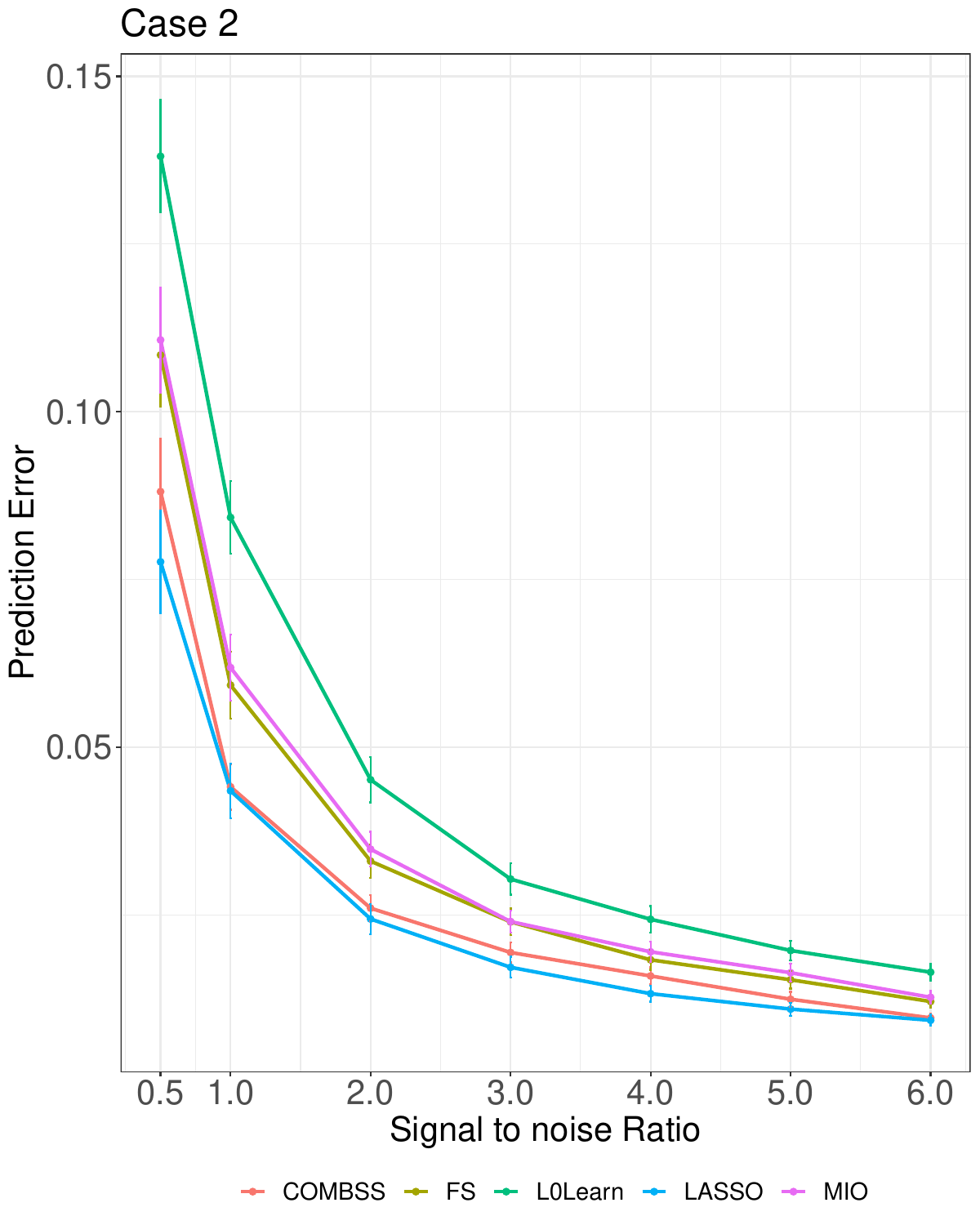}
  \end{subfigure}
  \begin{subfigure}{0.5\linewidth}
    \centering
    \includegraphics[width=0.7\linewidth]{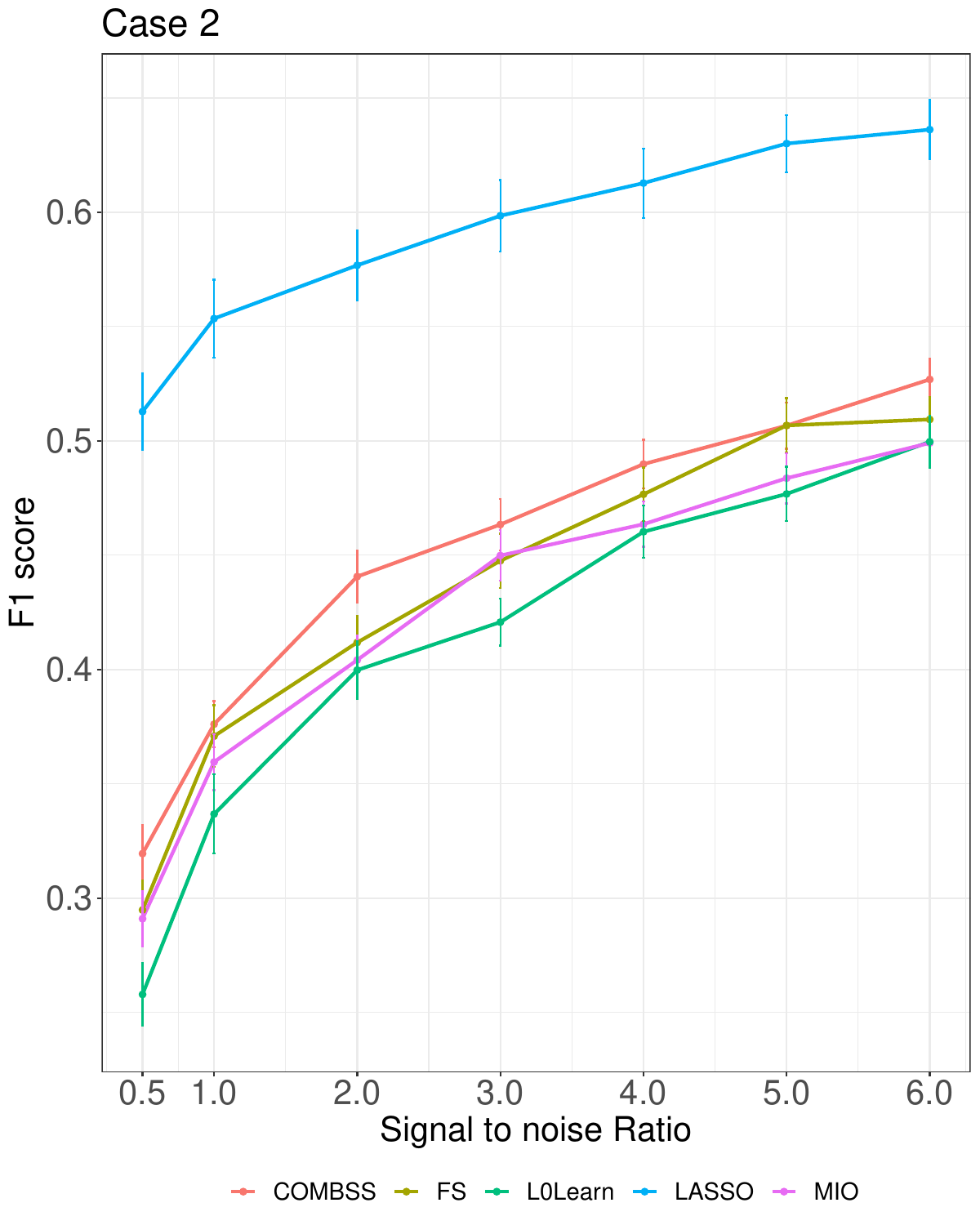}
  \end{subfigure}

  \begin{subfigure}{0.5\linewidth}
    \centering
    \includegraphics[width=0.7\linewidth]{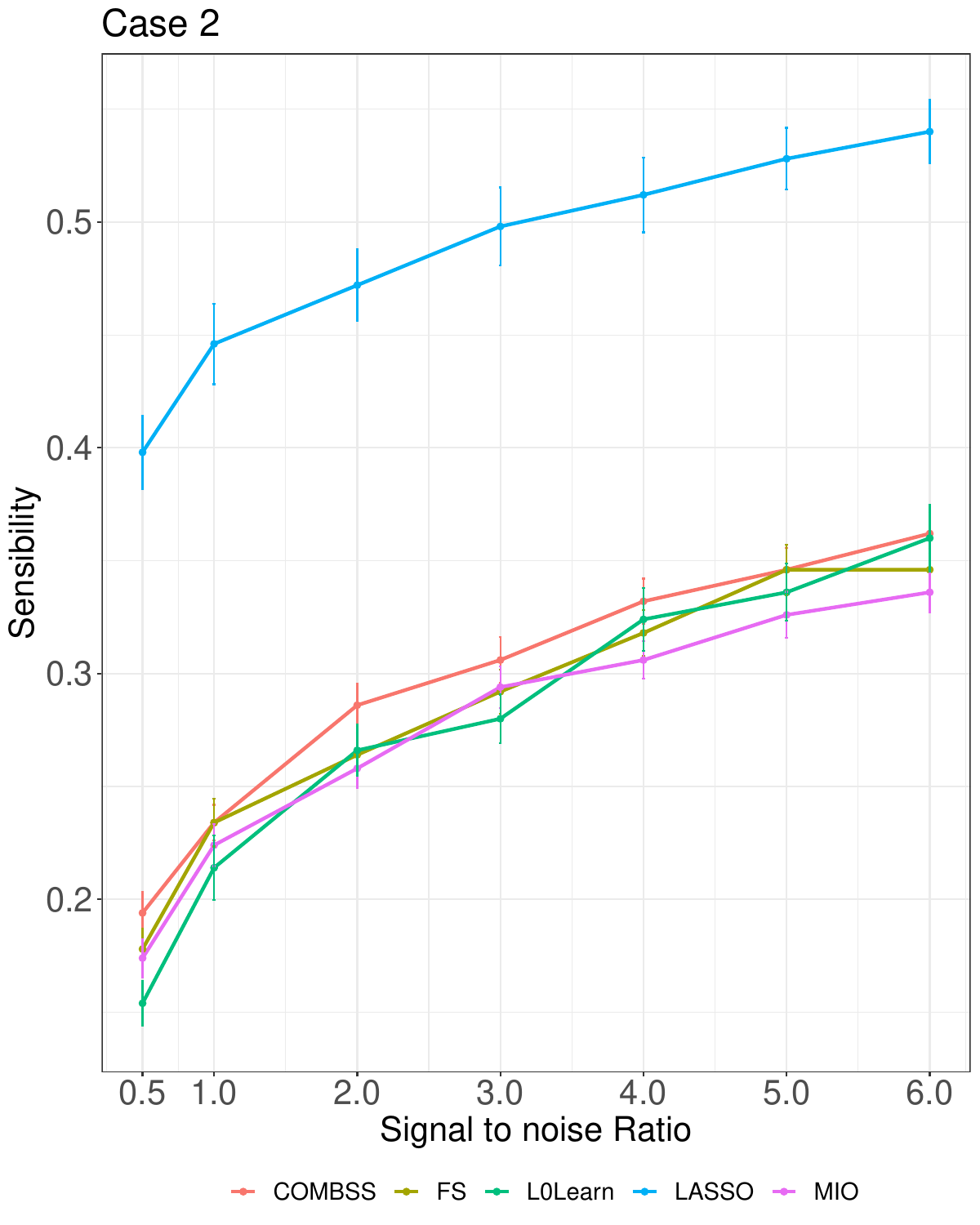}
   \end{subfigure}
   \begin{subfigure}{0.5\linewidth}
    \centering
    \includegraphics[width=0.7\linewidth]{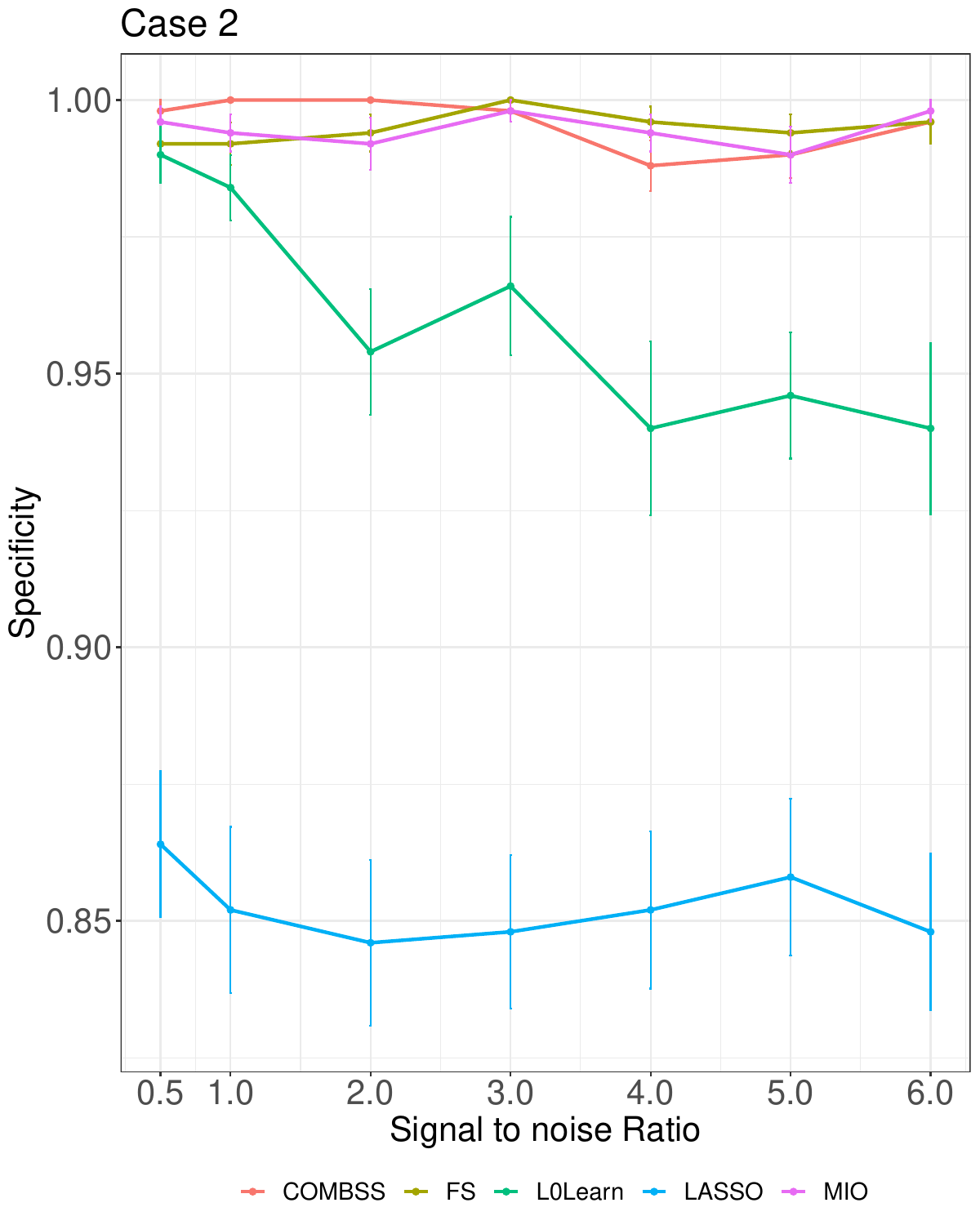}
   \end{subfigure}
   \caption{Performance results in terms of MCC, accuracy, prediction error, F1-score, Sensibility, and Specificity for Case 2 in the low-dimensional setting where $n=100$, $p=20$, and $\rho=0.8$.}
  \label{fig:low-0.8-case2}
\end{figure}


\begin{figure}[H]
  \begin{subfigure}{0.5\linewidth}
    \centering
    \includegraphics[width=0.7\linewidth]{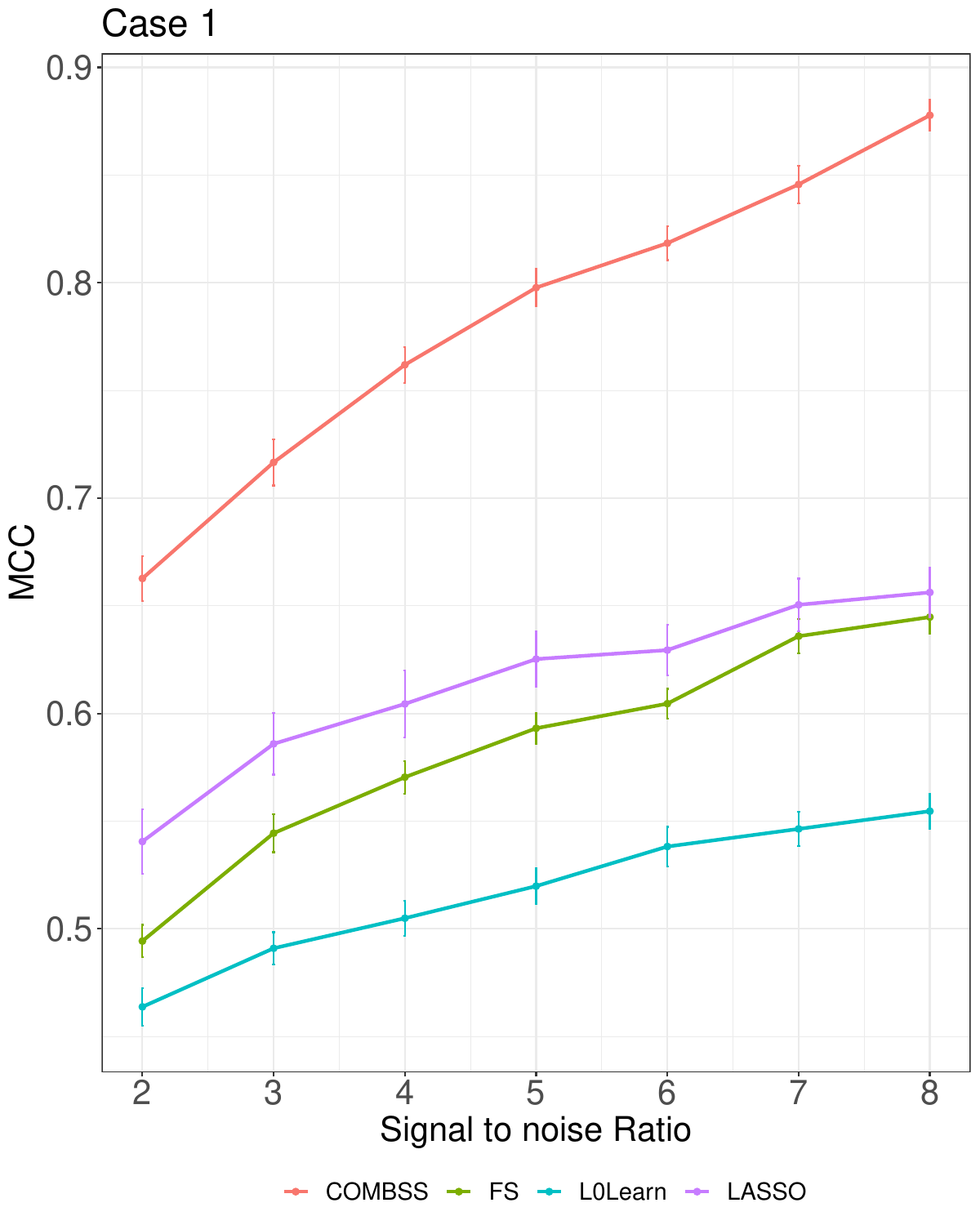}
  \end{subfigure}
  \begin{subfigure}{0.5\linewidth}
    \centering
    \includegraphics[width=0.7\linewidth]{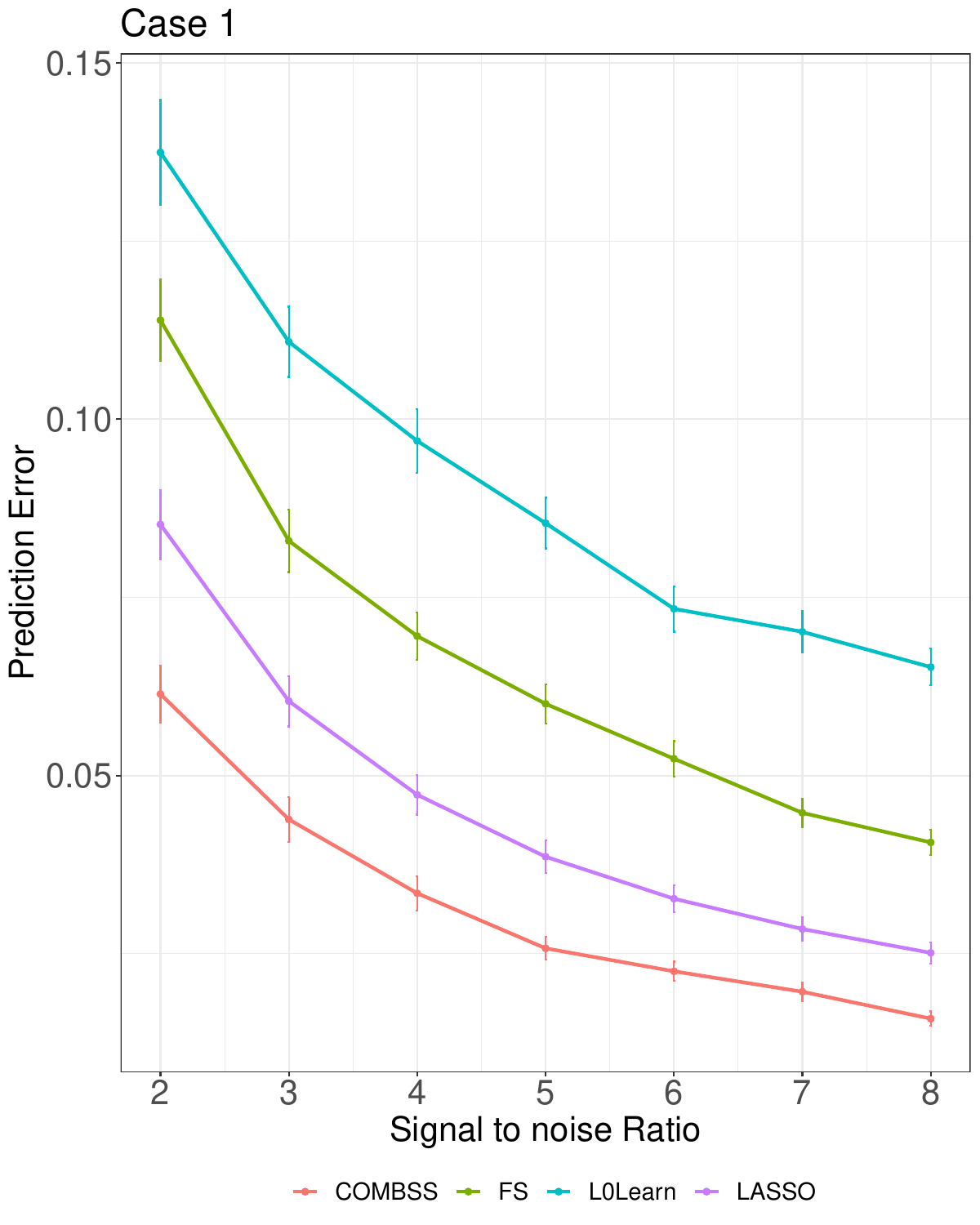}
  \end{subfigure}

  \begin{subfigure}{0.5\linewidth}
    \centering
    \includegraphics[width=0.7\linewidth]{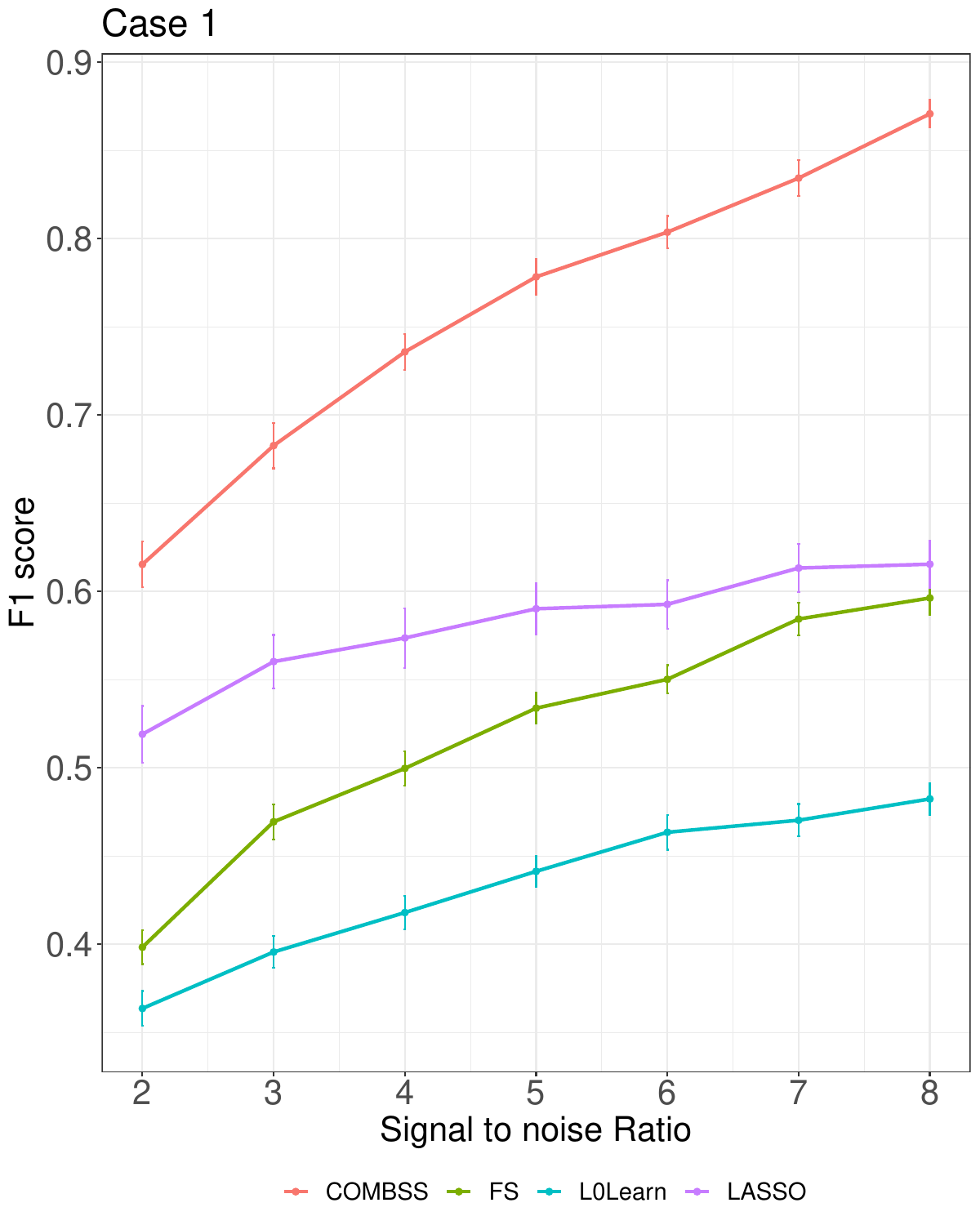}
  \end{subfigure}
  \begin{subfigure}{0.5\linewidth}
    \centering
    \includegraphics[width=0.7\linewidth]{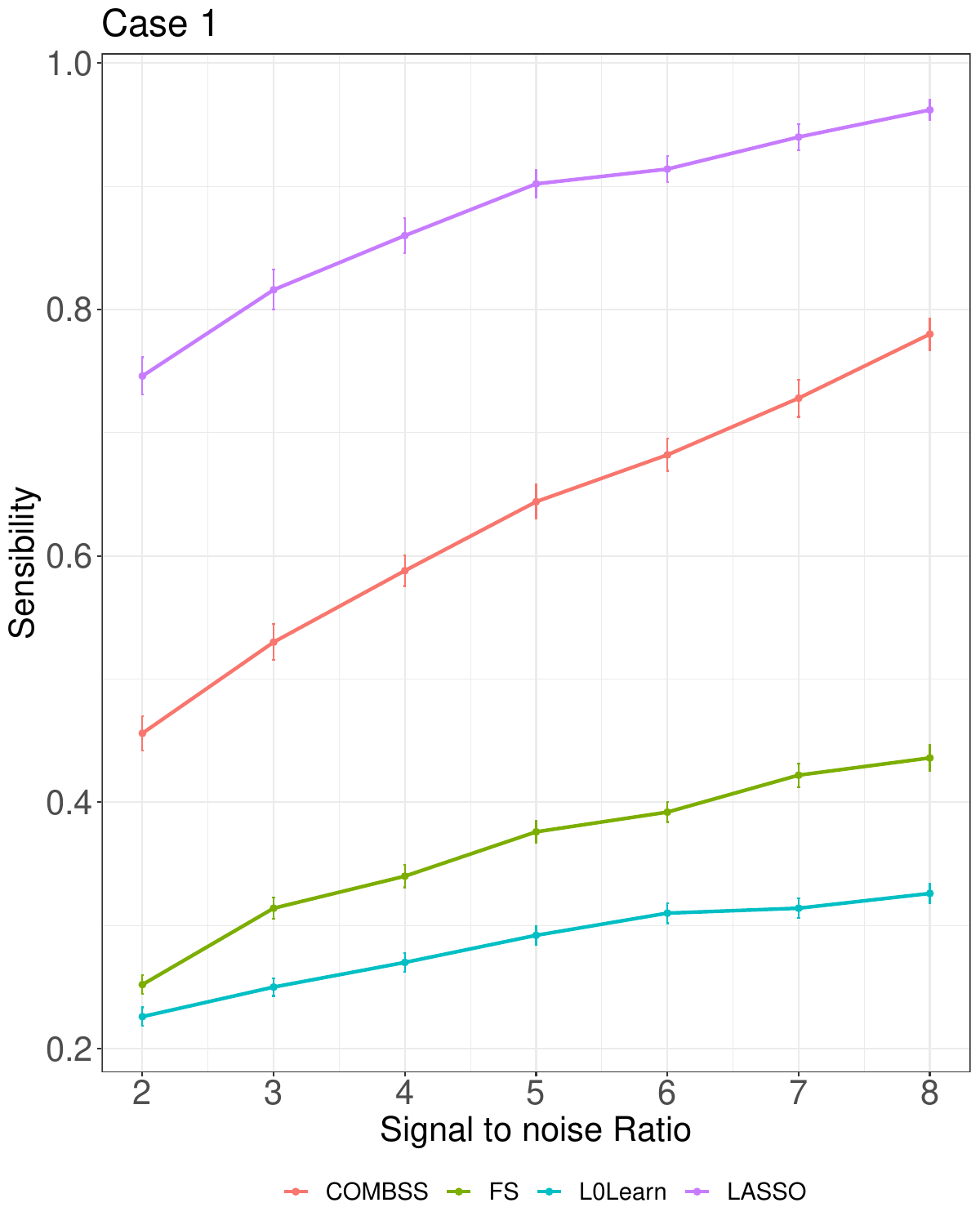}
  \end{subfigure} 

  \begin{subfigure}{1\linewidth}
    \centering
    \includegraphics[width=0.35\linewidth]{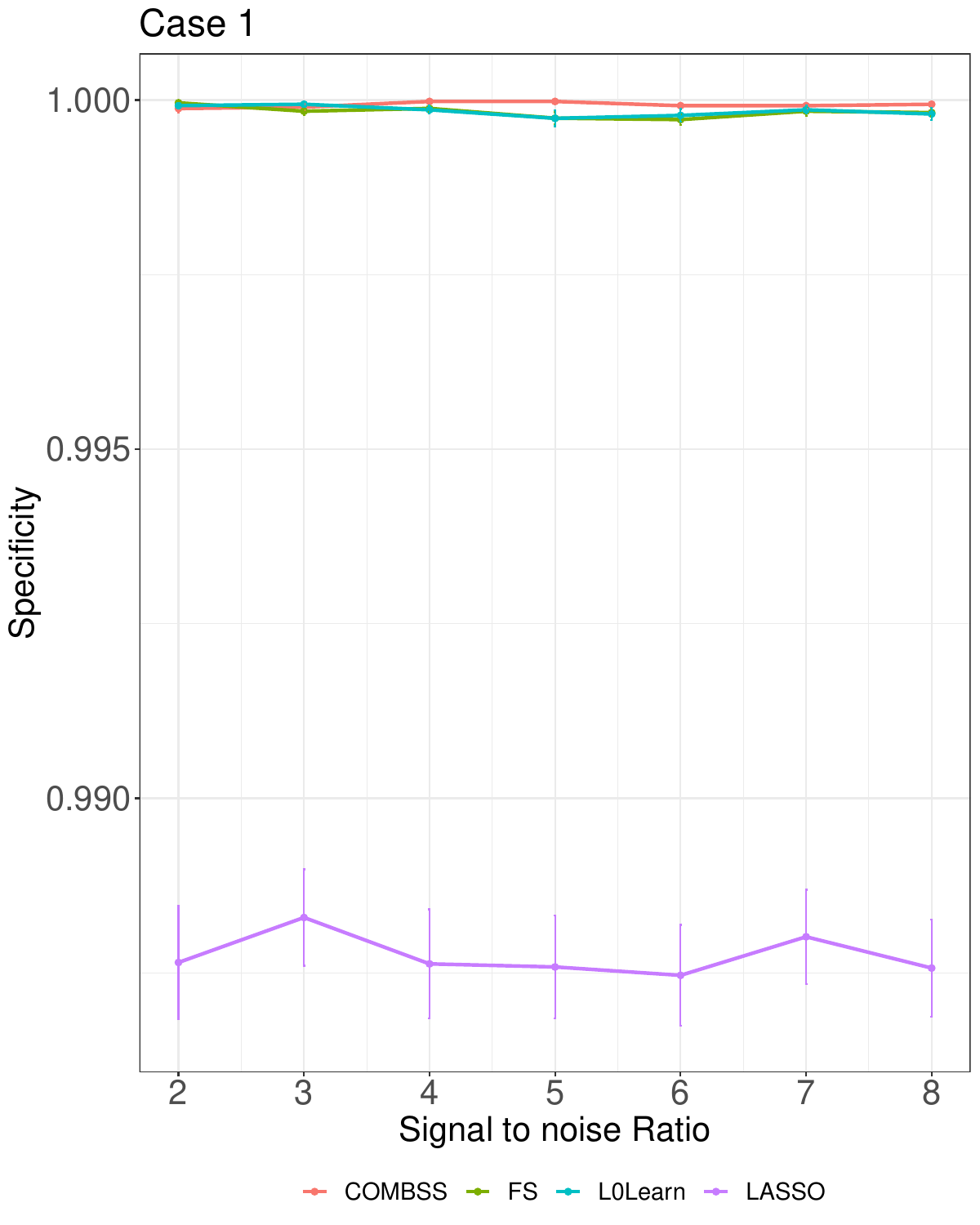}
  \end{subfigure}
  \caption{Performance results in terms of MCC, prediction error, F1-score, Sensibility, and Specificity for Case 1 in the high-dimensional setting where $n=100$, $p=1000$, and $\rho=0.8$.}
  \label{fig:high-0.8-case1}
\end{figure}

\begin{figure}[H]
    \begin{subfigure}{0.5\linewidth}
    \centering
    \includegraphics[width=0.7\linewidth]{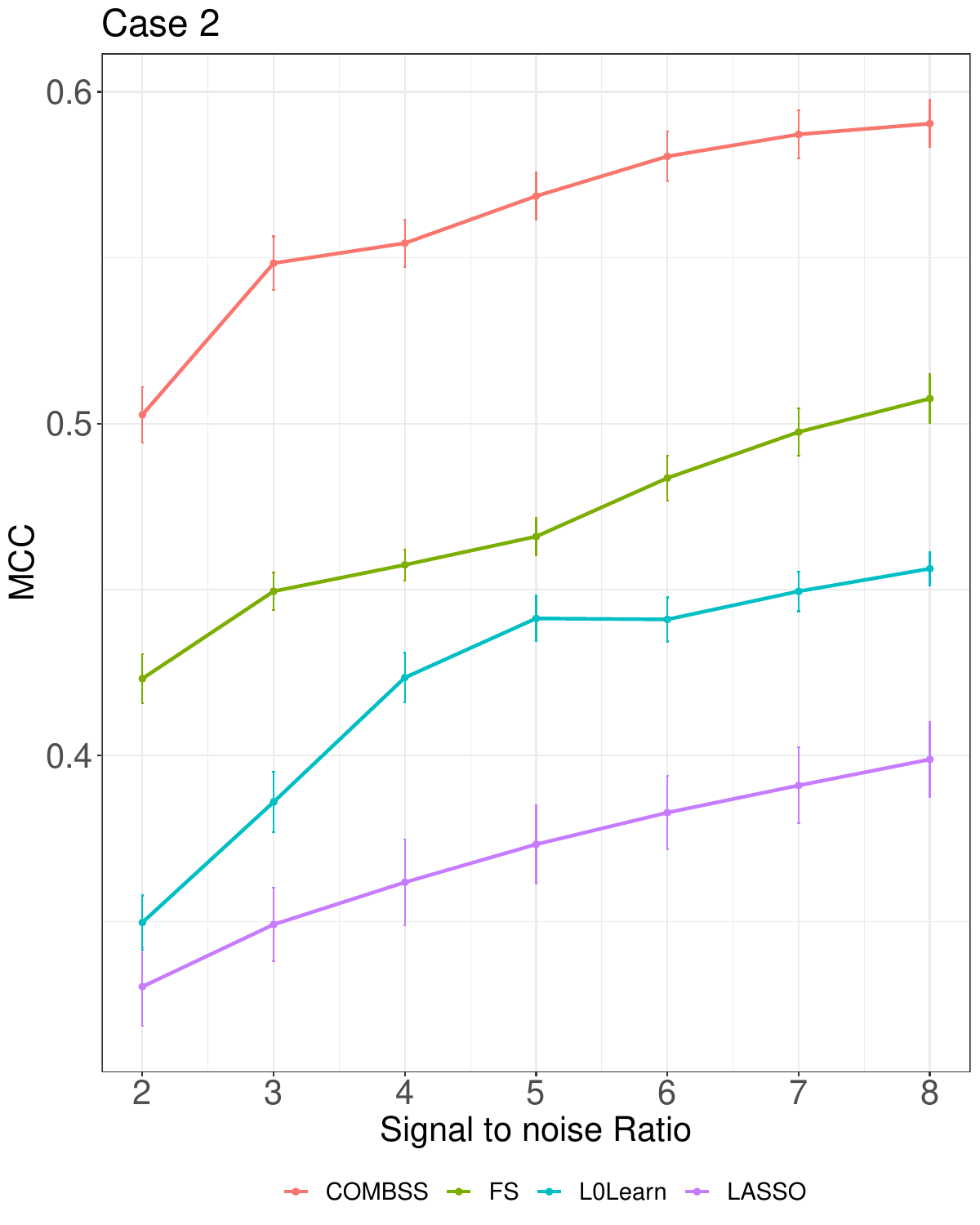}
  \end{subfigure}
  \begin{subfigure}{0.5\linewidth}
    \centering
    \includegraphics[width=0.7\linewidth]{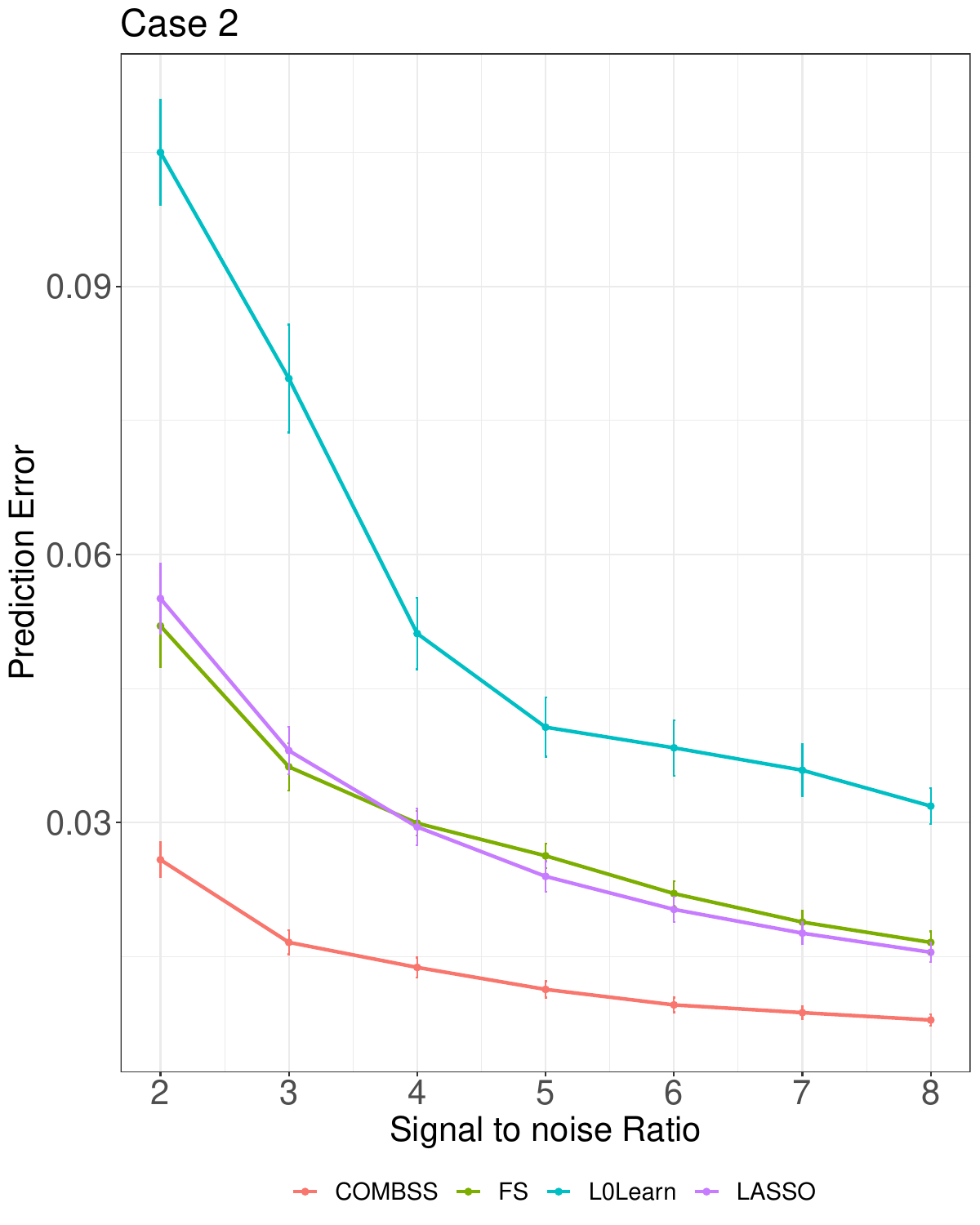}
  \end{subfigure}

  \begin{subfigure}{0.5\linewidth}
    \centering
    \includegraphics[width=0.7\linewidth]{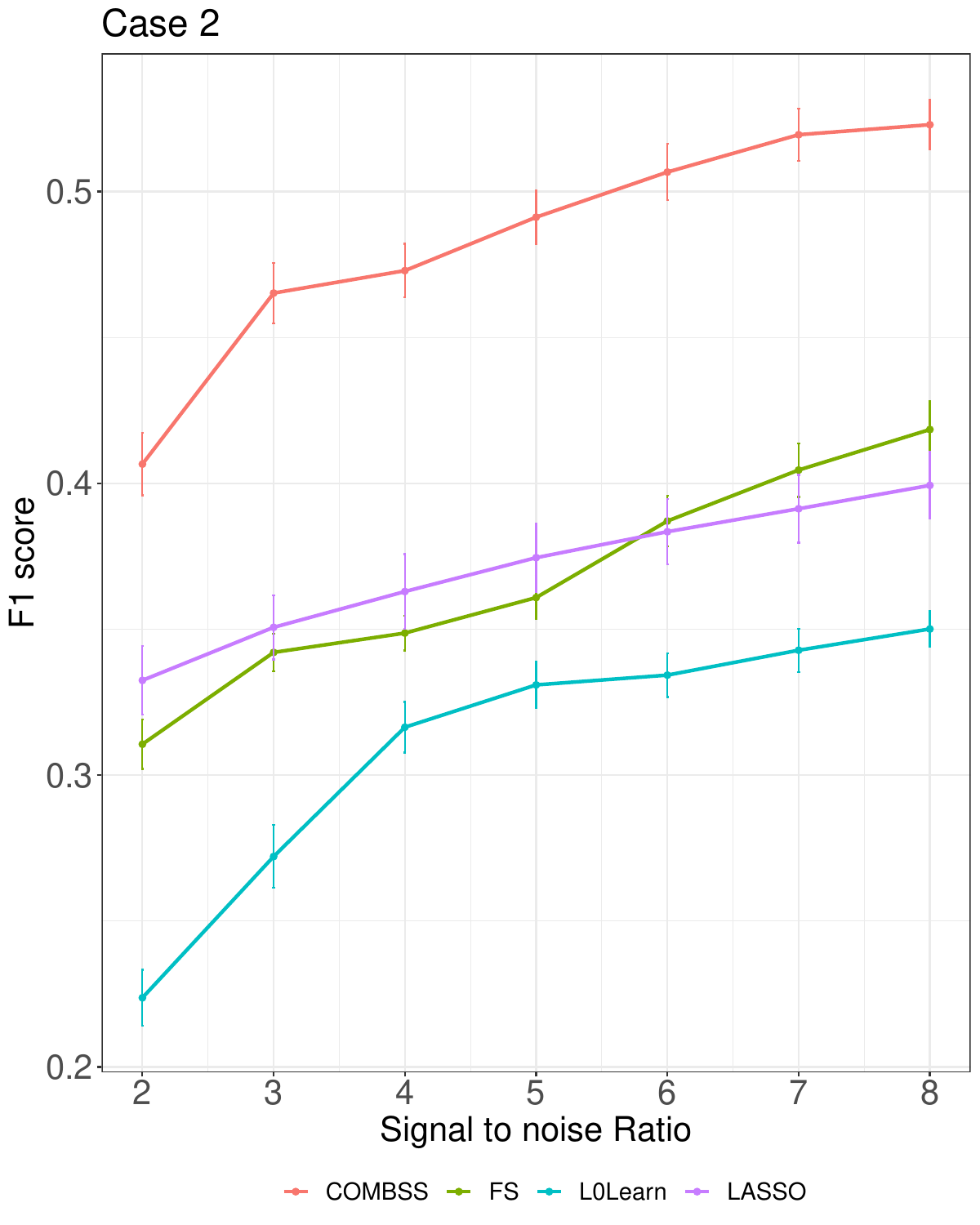}
  \end{subfigure}
  \begin{subfigure}{0.5\linewidth}
    \centering
    \includegraphics[width=0.7\linewidth]{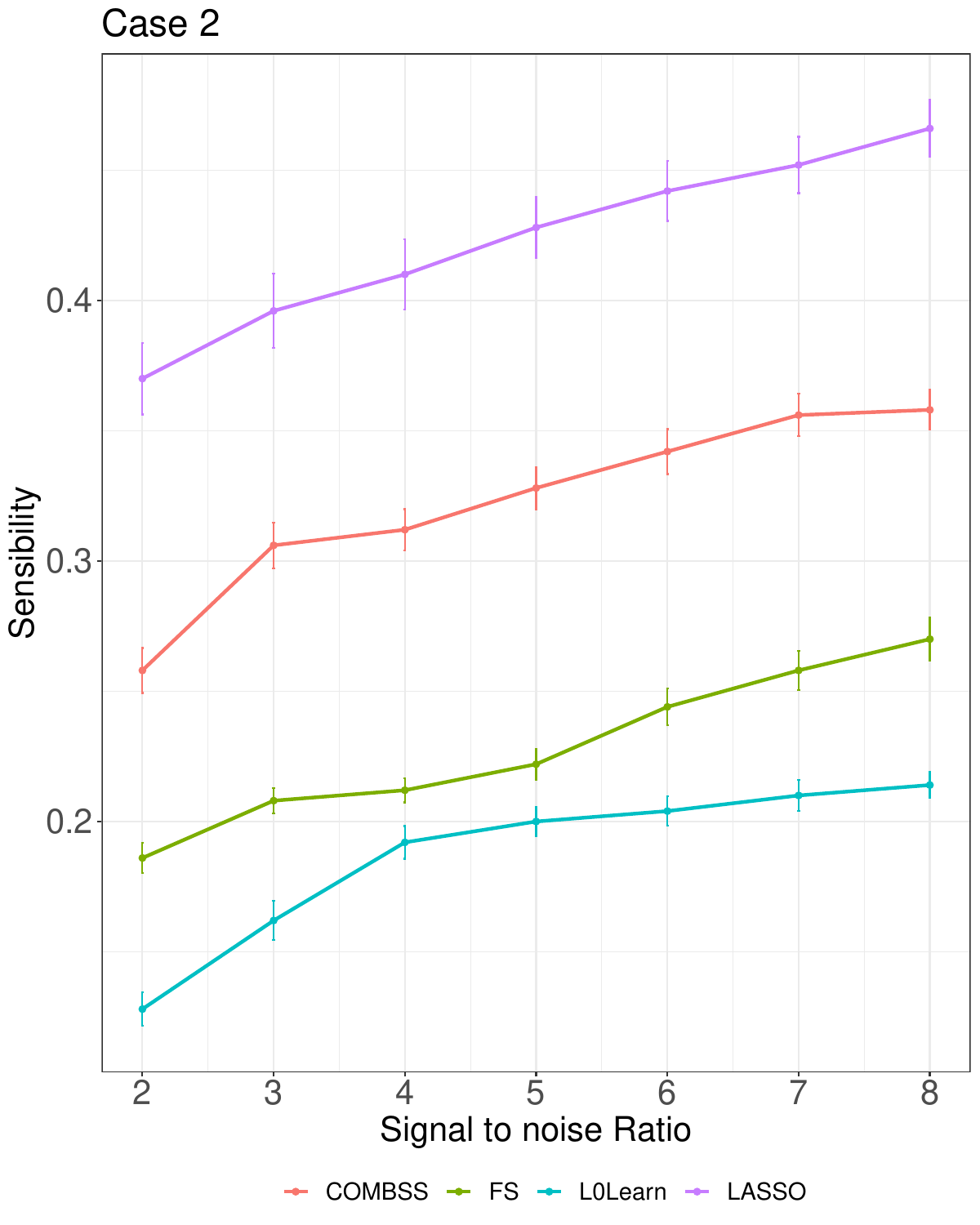}
  \end{subfigure} 

  \begin{subfigure}{1\linewidth}
    \centering
    \includegraphics[width=0.35\linewidth]{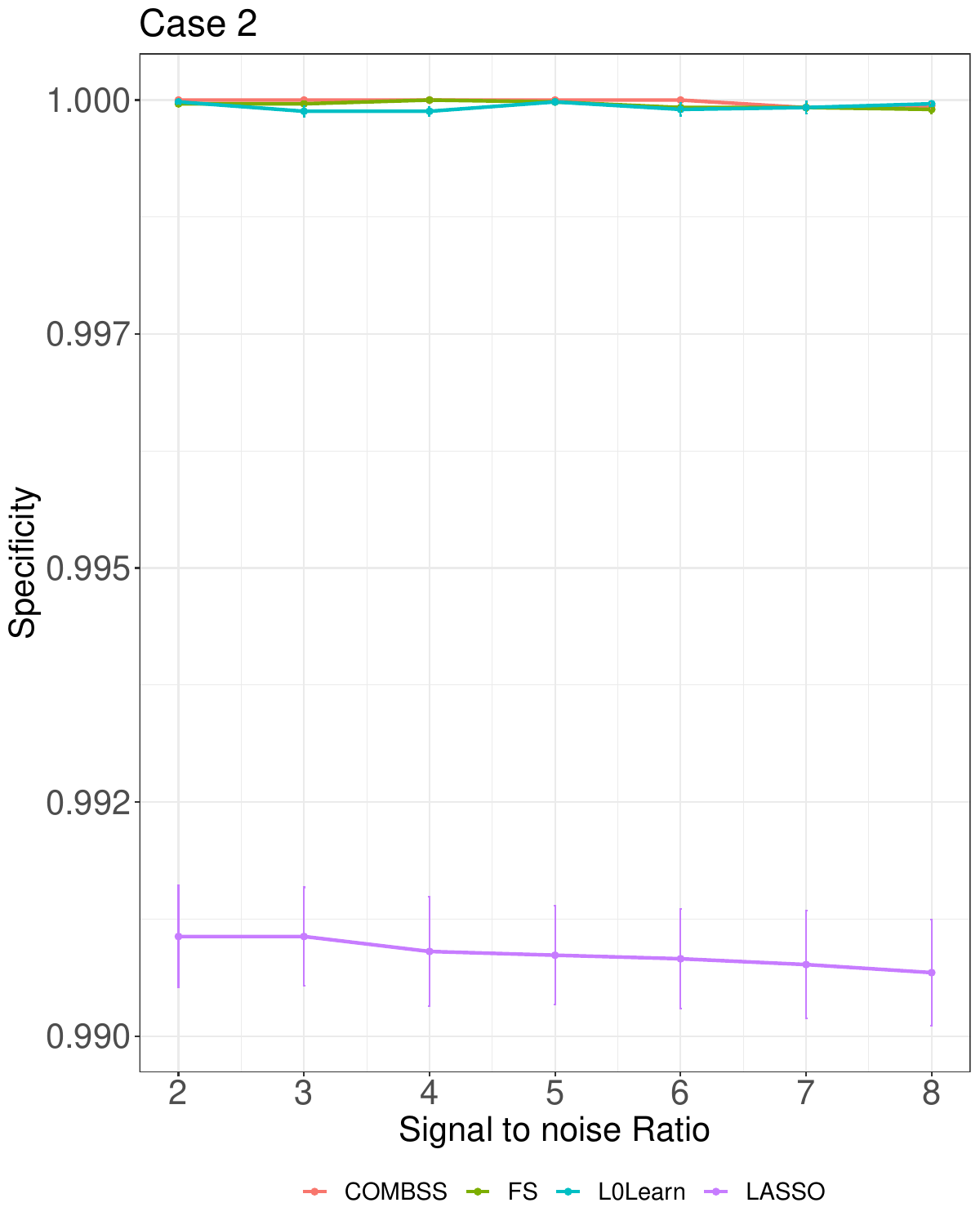}
  \end{subfigure}
      \caption{Performance results in terms of MCC, prediction error, F1-score, Sensibility, and Specificity for Case 2 in the high-dimensional setting where $n=100$, $p=1000$, and $\rho=0.8$.}
  \label{fig:high-0.8-case2}
\end{figure}

\bibliographystyle{apalike}
\bibliographystyle{abbrv}
\bibliography{Ref}

\begin{thebibliography}{}

\bibitem[Armstrong, 1983]{Armstrong1983}
Armstrong, M.~A. (1983).
\newblock {\em Basic topology}.
\newblock Undergraduate Texts in Mathematics. Springer-Verlag, New York-Berlin.
\newblock Corrected reprint of the 1979 original.

\bibitem[Bertsimas et~al., 2016]{BKM16}
Bertsimas, D., King, A., and Mazumder, R. (2016).
\newblock {Best subset selection via a modern optimization lens}.
\newblock {\em The Annals of Statistics}, 44(2):813 -- 852.

\bibitem[Bottou, 2012]{Bottou2012}
Bottou, L. (2012).
\newblock {\em Stochastic Gradient Descent Tricks}, pages 421--436.
\newblock Springer Berlin Heidelberg, Berlin, Heidelberg.

\bibitem[Breheny and Huang, 2011]{BH2011}
Breheny, P. and Huang, J. (2011).
\newblock Coordinate descent algorithms for nonconvex penalized regression,
  with applications to biological feature selection.
\newblock {\em The Annals of Applied Statistics}, 5(1):232 -- 253.

\bibitem[Castro-González et~al., 2015]{CMR2015}
Castro-González, N., Martínez-Serrano, M., and Robles, J. (2015).
\newblock Expressions for the moore–penrose inverse of block matrices
  involving the schur complement.
\newblock {\em Linear Algebra and its Applications}, 471:353--368.

\bibitem[Chicco and Jurman, 2020]{refMCC}
Chicco, D. and Jurman, G. (2020).
\newblock The advantages of the matthews correlation coefficient (mcc) over f1
  score and accuracy in binary classification evaluation.
\newblock {\em BMC genomics}, 21(1):1--13.

\bibitem[Danilova et~al., 2022]{Danilova2022}
Danilova, M., Dvurechensky, P., Gasnikov, A., Gorbunov, E., Guminov, S.,
  Kamzolov, D., and Shibaev, I. (2022).
\newblock {\em Recent Theoretical Advances in Non-Convex Optimization}, pages
  79--163.
\newblock Springer International Publishing, Cham.

\bibitem[Efroymson, 1966]{Efroymson1966}
Efroymson, M.~A. (1966).
\newblock Stepwise regression—a backward and forward look.
\newblock {\em Presented at the Eastern Regional Meetings of of the Institute
  of Mathematical Statistics, Florham Park, New Jersey}.

\bibitem[Fan and Li, 2006]{FanLi2006}
Fan, J. and Li, R. (2006).
\newblock Statistical challenges with high dimensionality: feature selection in
  knowledge discovery.
\newblock In {\em International {C}ongress of {M}athematicians. {V}ol. {III}},
  pages 595--622. Eur. Math. Soc., Z\"{u}rich.

\bibitem[Fan and Lv, 2010]{FanLv2010}
Fan, J. and Lv, J. (2010).
\newblock A selective overview of variable selection in high dimensional
  feature space.
\newblock {\em Statistica Sinica}, 20(1):101--148.

\bibitem[Fan et~al., 2022]{fan2022understanding}
Fan, J., Yang, Z., and Yu, M. (2022).
\newblock Understanding implicit regularization in over-parameterized single
  index model.
\newblock {\em Journal of the American Statistical Association}, pages 1--14.

\bibitem[Furnival and Wilson, 2000]{leaps2000}
Furnival, G.~M. and Wilson, R.~W. (2000).
\newblock Regressions by leaps and bounds.
\newblock {\em Technometrics}, 42:69--79.

\bibitem[Golub and Van~Loan, 1996]{Golub1996}
Golub, G.~H. and Van~Loan, C.~F. (1996).
\newblock {\em Matrix computations}.
\newblock Johns Hopkins Studies in the Mathematical Sciences. Johns Hopkins
  University Press, Baltimore, MD, third edition.

\bibitem[{Gurobi Optimization, limited liability company}, 2022]{Gurobi}
{Gurobi Optimization, limited liability company} (2022).
\newblock {Gurobi Optimizer Reference Manual}.

\bibitem[Hastie et~al., 2018]{bestsubsetR}
Hastie, T., Tibshirani, R., and Tibshirani, R. (2018).
\newblock {\em Bestsubset: Tools for best subset selection in regression}.
\newblock R package version 1.0.10.

\bibitem[Hastie et~al., 2020]{hastie2020best}
Hastie, T., Tibshirani, R., and Tibshirani, R. (2020).
\newblock Best subset, forward stepwise or lasso? analysis and recommendations
  based on extensive comparisons.
\newblock {\em Statistical Science}, 35(4):579--592.

\bibitem[Hazimeh and Mazumder, 2020]{hazimeh2020fast}
Hazimeh, H. and Mazumder, R. (2020).
\newblock Fast best subset selection: Coordinate descent and local
  combinatorial optimization algorithms.
\newblock {\em Operations Research}, 68(5):1517--1537.

\bibitem[Hazimeh et~al., 2023]{lolearn}
Hazimeh, H., Mazumder, R., and Nonet, T. (2023).
\newblock {\em L0Learn: Fast Algorithms for Best Subset Selection}.
\newblock R package version 2.1.0.

\bibitem[Hocking and Leslie, 1967]{HockingLeslie1967}
Hocking, R.~R. and Leslie, R.~N. (1967).
\newblock Selection of the best subset in regression analysis.
\newblock {\em Technometrics}, 9:531--540.

\bibitem[Hoff, 2017]{hoff2017lasso}
Hoff, P.~D. (2017).
\newblock Lasso, fractional norm and structured sparse estimation using a
  {H}adamard product parametrization.
\newblock {\em Computational Statistics \& Data Analysis}, 115:186--198.

\bibitem[Hui et~al., 2017]{hui2017joint}
Hui, F.~K., M{\"u}ller, S., and Welsh, A. (2017).
\newblock Joint selection in mixed models using regularized pql.
\newblock {\em Journal of the American Statistical Association},
  112(519):1323--1333.

\bibitem[Kochenderfer and Wheeler, 2019]{KochenderferWheeler2019}
Kochenderfer, M.~J. and Wheeler, T.~A. (2019).
\newblock {\em Algorithms for optimization}.
\newblock Massachusetts Institute of Technology Press, Cambridge, MA.

\bibitem[Kowal, 2022]{kowal2022bayesian}
Kowal, D.~R. (2022).
\newblock Bayesian subset selection and variable importance for interpretable
  prediction and classification.
\newblock {\em The Journal of Machine Learning Research}, 23(1):4661--4698.

\bibitem[Mathur et~al., 2023]{mathur2023column}
Mathur, A., Moka, S., and Botev, Z. (2023).
\newblock Column subset selection and {N}ystr\"om approximation via continuous
  optimization.
\newblock {\em arXiv preprint}.

\bibitem[Miller, 2019]{Miller2002}
Miller, A. (2019).
\newblock {\em Subset selection in regression}, volume~95 of {\em Monographs on
  Statistics and Applied Probability}.
\newblock Chapman \& Hall/CRC, Boca Raton, FL.

\bibitem[M\"{u}ller and Welsh, 2010]{MullerWelsh2010}
M\"{u}ller, S. and Welsh, A.~H. (2010).
\newblock On model selection curves.
\newblock {\em International Statistical Review}, 78(2):240--256.

\bibitem[Natarajan, 1995]{Natarajan1995}
Natarajan, B.~K. (1995).
\newblock Sparse approximate solutions to linear systems.
\newblock {\em SIAM Journal on Computing}, 24(2):227--234.

\bibitem[Strohmer and Vershynin, 2009]{Strohmer2009Random}
Strohmer, T. and Vershynin, R. (2009).
\newblock A randomized {K}aczmarz algorithm with exponential convergence.
\newblock {\em The Journal of Fourier Analysis and Applications},
  15(2):262--278.

\bibitem[Tadesse and Vannucci, 2021]{HandbookBayesian2021}
Tadesse, M.~G. and Vannucci, M. (2021).
\newblock {\em Handbook of Bayesian Variable Selection}.
\newblock Chapman \& Hall.

\bibitem[Tarr et~al., 2018]{TarrMullerWelsh2018}
Tarr, G., Muller, S., and Welsh, A.~H. (2018).
\newblock mplot: An r package for graphical model stability and variable
  selection procedures.
\newblock {\em Journal of Statistical Software}, 83(9):1--28.

\bibitem[Tian and Takane, 2005]{YY05}
Tian, Y. and Takane, Y. (2005).
\newblock Schur complements and {B}anachiewicz-{S}chur forms.
\newblock {\em Electronic Journal of Linear Algebra}, 13:405--418.

\bibitem[Tibshirani, 1996]{Tibshirani1996}
Tibshirani, R. (1996).
\newblock Regression shrinkage and selection via the lasso.
\newblock {\em Journal of the Royal Statistical Society. Series B.},
  58(1):267--288.

\bibitem[Vaskevicius et~al., 2019]{vaskevicius2019implicit}
Vaskevicius, T., Kanade, V., and Rebeschini, P. (2019).
\newblock Implicit regularization for optimal sparse recovery.
\newblock {\em Advances in Neural Information Processing Systems}, 32.

\bibitem[Woodbury, 1950]{Woodbury50}
Woodbury, M.~A. (1950).
\newblock {\em Inverting modified matrices}.
\newblock Princeton University, Princeton, NJ.
\newblock Statistical Research Group, Memo. Rep. no. 42,.

\bibitem[Zhao et~al., 2019]{zhao2019implicit}
Zhao, P., Yang, Y., and He, Q.-C. (2019).
\newblock Implicit regularization via {H}adamard product over-parametrization
  in high-dimensional linear regression.
\newblock {\em arXiv preprint arXiv:1903.09367}, 2(4):8.

\bibitem[Zhao et~al., 2022]{zhao2022high}
Zhao, P., Yang, Y., and He, Q.-C. (2022).
\newblock High-dimensional linear regression via implicit regularization.
\newblock {\em Biometrika}, 109(4):1033--1046.

\bibitem[Zhu et~al., 2020]{ZhuWen2020}
Zhu, J., Wen, C., Zhu, J., Zhang, H., and Wang, X. (2020).
\newblock A polynomial algorithm for best-subset selection problem.
\newblock {\em Proceedings of the National Academy of Sciences of the United
  States of America}, 117(52):33117--33123.

\end{thebibliography}
\end{document}